\def\noheaderplainsetup{

\topmargin=0pt \headheight=0pt \headsep=0pt  \oddsidemargin=0pt \evensidemargin=0pt  \textheight=9.1truein \textwidth=6.5truein}   
\begin{document}


\newcommand{\clthree}{\mbox{\bf CL12}}
\newcommand{\cltw}{\mbox{\bf CL12}}
\newcommand{\clfour}{\mbox{\bf CL4}}
\newcommand{\arfour}{\mbox{\bf CLA4}} 
\newcommand{\arfive}{\mbox{\bf CLA5}} 
\newcommand{\arsix}{\mbox{\bf CLA6}}
\newcommand{\arseven}{\mbox{\bf CLA7}}
\newcommand{\pa}{\mbox{\bf PA}}


\newcommand{\intimpl}{\mbox{\hspace{2pt}$\circ$\hspace{-0.14cm} \raisebox{-0.043cm}{\Large --}\hspace{2pt}}}
\newcommand{\zero}{\mbox{\small {\bf 0}}}
\newcommand{\izero}{\mbox{\scriptsize {\bf 0}}}
\newcommand{\one}{\mbox{\small {\bf 1}}}
\newcommand{\ione}{\mbox{\scriptsize {\bf 1}}}
\newcommand{\successor}{\mbox{\hspace{1pt}\boldmath $'$}}

\newcommand{\elz}[1]{\mbox{$\parallel\hspace{-3pt} #1 \hspace{-3pt}\parallel$}} 
\newcommand{\elzi}[1]{\mbox{\scriptsize $\parallel\hspace{-3pt} #1 \hspace{-3pt}\parallel$}}
\newcommand{\emptyrun}{\langle\rangle} 
\newcommand{\oo}{\bot}            
\newcommand{\pp}{\top}            
\newcommand{\xx}{\wp}               
\newcommand{\legal}[2]{\mbox{\bf Lr}^{#1}_{#2}} 
\newcommand{\win}[2]{\mbox{\bf Wn}^{#1}_{#2}} 
\newcommand{\seq}[1]{\langle #1 \rangle} 
\newcommand{\code}[1]{\ulcorner #1 \urcorner}          


\newcommand{\pst}{\mbox{\raisebox{-0.01cm}{\scriptsize $\wedge$}\hspace{-4pt}\raisebox{0.16cm}{\tiny $\mid$}\hspace{2pt}}}
\newcommand{\pcost}{\mbox{\raisebox{0.12cm}{\scriptsize $\vee$}\hspace{-4pt}\raisebox{0.02cm}{\tiny $\mid$}\hspace{2pt}}}

\newcommand{\gneg}{\mbox{\small $\neg$}}                  
\newcommand{\mli}{\hspace{2pt}\mbox{\small $\rightarrow$}\hspace{2pt}}                      
\newcommand{\cla}{\mbox{$\forall$}}      
\newcommand{\cle}{\mbox{$\exists$}}        
\newcommand{\mld}{\hspace{2pt}\mbox{\small $\vee$}\hspace{2pt}}     
\newcommand{\mlc}{\hspace{2pt}\mbox{\small $\wedge$}\hspace{2pt}}   
\newcommand{\mlci}{\hspace{2pt}\mbox{\footnotesize $\wedge$}\hspace{2pt}}   
\newcommand{\ade}{\mbox{\large $\sqcup$}}      
\newcommand{\ada}{\mbox{\large $\sqcap$}}      
\newcommand{\add}{\hspace{2pt}\mbox{\small $\sqcup$}\hspace{2pt}}                     
\newcommand{\adc}{\hspace{2pt}\mbox{\small $\sqcap$}\hspace{2pt}} 
\newcommand{\adci}{\hspace{2pt}\mbox{\footnotesize $\sqcap$}\hspace{2pt}}              
\newcommand{\clai}{\forall}     
\newcommand{\clei}{\exists}        
\newcommand{\tlg}{\bot}               
\newcommand{\twg}{\top}               
\newcommand{\fintimpl}{\mbox{\hspace{2pt}$\bullet$\hspace{-0.14cm} \raisebox{-0.058cm}{\Large --}\hspace{-6pt}\raisebox{0.008cm}{\scriptsize $\wr$}\hspace{-1pt}\raisebox{0.008cm}{\scriptsize $\wr$}\hspace{4pt}}}
\newcommand{\col}[1]{\mbox{$#1$:}}


\newtheorem{theoremm}{Theorem}[section]
\newtheorem{factt}[theoremm]{Fact}
\newtheorem{definitionn}[theoremm]{Definition}
\newtheorem{lemmaa}[theoremm]{Lemma}
\newtheorem{propositionn}[theoremm]{Proposition}
\newtheorem{conventionn}[theoremm]{Convention}
\newtheorem{examplee}[theoremm]{Example}
\newtheorem{exercisee}[theoremm]{Exercise}
\newtheorem{thesiss}[theoremm]{Thesis}
\newenvironment{definition}{\begin{definitionn} \em}{ \end{definitionn}}
\newenvironment{theorem}{\begin{theoremm}}{\end{theoremm}}
\newenvironment{lemma}{\begin{lemmaa}}{\end{lemmaa}}
\newenvironment{fact}{\begin{factt}}{\end{factt}}
\newenvironment{proposition}{\begin{propositionn} }{\end{propositionn}}
\newenvironment{convention}{\begin{conventionn} \em}{\end{conventionn}}
\newenvironment{example}{\begin{examplee} \em}{\end{examplee}}
\newenvironment{thesis}{\begin{thesiss} \em}{\end{thesiss}}
\newenvironment{exercise}{\begin{exercisee} \em}{\end{exercisee}}
\newenvironment{proof}{ {\bf Proof.} }{\  \rule{2.5mm}{2.5mm} \vspace{.2in} }
\newenvironment{idea}{ {\bf Proof idea.} }{\  \rule{1.5mm}{1.5mm} \vspace{.15in} }
\newenvironment{subproof}{ {\em Proof.} }{\  \rule{2mm}{2mm} \vspace{.1in} }

\title{Introduction to clarithmetic II}
\author{Giorgi Japaridze}

\date{}
\maketitle

\begin{abstract} The earlier paper ``Introduction to clarithmetic I'' constructed an axiomatic system of arithmetic based on computability logic, and proved its soundness and extensional completeness with respect to polynomial time computability. The present paper elaborates three additional sound and complete systems in the same style and sense: one for polynomial space computability, one for elementary recursive time (and/or space) computability, and one for primitive recursive time (and/or space) computability.  
\end{abstract}

\noindent {\em MSC}: primary: 03F50; secondary: 03F30; 03D75; 03D15; 68Q10; 68T27; 68T30

\

\noindent {\em Keywords}: Computability logic; Interactive computation; Implicit computational complexity;  Game semantics; Peano arithmetic; Bounded arithmetic; Constructive logics  


\section{Introduction}\label{intr}

Being a continuation of \cite{cla4}, this article fully and heavily relies on the terminology, notation, conventions and technical results of its predecessor, with which the reader is assumed to be well familiar (the good news, however, is that, \cite{cla4}, in turn, is self-contained).  

Remember, from \cite{cla4}, the system $\arfour$ of arithmetic, both semantically and syntactically based on computability logic (CoL). Its language was that of Peano arithmetic ($\pa$) augmented with the choice conjunction $\adc$, choice disjunction $\add$, choice universal quantifier $\ada$ and choice existential quantifier $\ade$. On top of the standard Peano axioms, $\arfour$  had two extra-Peano axioms: $\ada x\ade y(y= x+ 1)$ and $\ada x\ade y(y= 2x)$, one saying that the  function $x+ 1$ is computable, and the other saying the same about the function $2x$. The only logical rule of $\arfour$ was Logical Consequence (LC), meaning that the logical basis for the system was 
the (sound and complete) fragment {\bf CL12} of CoL. And the only nonlogical rule of inference was the induction rule
\[\frac{F(0)\hspace{30pt} F(x)\mli F(2x)\hspace{30pt} F(x)\mli F(2x+ 1)}{F(x)},\]
with $F(x)$ --- that is, its choice quantifiers $\ada,\ade$ --- required to be polynomially bounded. The system was proven in \cite{cla4} to be sound and extensionally (representationally) complete with respect to polynomial time computability. 

The present paper constructs three new {\bf CL12}-based systems: $\arfive$, $\arsix$, $\arseven$ and proves their soundness and extensional completeness with respect to polynomial space computability, elementary recursive time (and/or space) computability, and primitive recursive time (and/or space) computability, respectively. While $\arfour$ was already  simple  enough, the above three systems are even more so. All of them only need $\ada x\ade y(y= x+ 1)$ as a single extra-Peano axiom. As before, the only logical rule is LC. And the induction rule (the only nonlogical rule) of each of these systems is 
\[\frac{F(0)\hspace{30pt} F(x)\mli F(x+ 1)}{F(x)}.\]
 The three systems differ from each other only in what (if any) conditions are imposed on the formula $F(x)$ of induction. In $\arfive$, as in $\arfour$, $F(x)$ is required to be  polynomially bounded. $\arsix$ relaxes this requirement and allows $F(x)$ to be an exponentially bounded formula. $\arseven$ takes this trend towards relaxation  to an extreme and imposes no restrictions on $F(x)$ whatsoever. This way, unlike $\arfour$, $\arfive$ and $\arsix$, theory $\arseven$ is no longer in the realm of bounded arithmetics.  

The simplicity and elegance of these systems is additional evidence for the naturalness and productiveness of the idea of  basing  complexity-oriented systems and bounded arithmetic in particular on CoL instead of classical logic, even if one is only concerned with functions rather than the more general class of all interactive computational problems. In \cite{Buss}, achieving representational completeness with respect to polynomial space computable functions required considering a second-order extension of classical-logic-based bounded arithmetic (similarly in \cite{bbb3} for certain other complexity classes). In our  case, on the other hand, a transition from polynomial time ($\arfour$) to polynomial space ($\arfive$) in remarkably smooth with no need for any changes in the underlying language or logic, and with only minimal syntactic changes in the nonlogical part (induction rule) of the system.  Among the virtues of CoL is that, as a logic, it remains the same regardless of for what purposes (polynomial time computability, polynomial space computability,  computability-in-principle, \ldots) it is used. CoL does not have {\em variations}, but rather  has various (conservative) {\em fragments},\footnote{Including what has been termed ``intuitionistic computability logic'' (studied in \cite{Japjsl,int1,Propint}), contrary to what this name may suggest. Unlike, say, intuitionistic linear logic, which is indeed a variation of (classical) linear logic, intuitionistic computability logic is merely a conservative fragment of CoL, obtained by restricting its logical vocabulary  to the choice operators and the ultimate reduction operator.} depending on what part of its otherwise very expressive language is considered.  The fragment dealt with in the present paper, as well as in its predecessor \cite{cla4}, as well as in its even earlier predecessors \cite{Japtowards,Japlbcs}, is the same: logic $\cltw$.

\subsection{Technical notes}\label{a18}
All  terminology and notation not redefined in this paper has the same meaning as in \cite{cla4}. And all of our old conventions from \cite{cla4} extend to the present context as well.

Additionally we agree that  a ``{\bf sentence}'' always means a sentence (closed formula) of the language of $\arfour$. Similarly for ``{\bf formula}'', unless otherwise specified or suggested by the context. 

The definition of a polynomially bounded formula given in Section 11 of \cite{cla4}  contained a minor technical error. The correct formulation, on which we shall subsequently rely and which was really meant throughout \cite{cla4}, is as follows. We say that a formula $F$ is {\bf polynomially bounded} iff every subformula $\ada x G(x)$ (resp. $\ade x G(x)$) of $F$ has the form   $\ada x ( S(x)\mli H(x))$ (resp. $\ade x ( S(x)\mlc H(x))$), where  $S(x)$ is a polynomial sizebound for $x$  none of whose free variables is bound by $\cla$ or $\cle$ within $F$.

In the context of a given play (computation branch) of an HPM $\cal M$, by the {\bf spacecost} of a given clock cycle $c$ we shall mean the number of cells ever visited by the work-tape head of $\cal M$ by time $c$. We extend the usage of this term from clock cycles to the corresponding configurations as well. 

As in the preceding paragraph, we will be using the informal term ``{\bf play}'' mostly in reference to a computation branch of a given machine,  but occasionally it should rather be understood as the run spelled by such a branch. The meaning will usually be clear from the context.  

\section{$\arfive$, a theory of polynomial space computability}\label{ss11}

The {\bf language} of theory $\arfive$ is the same as that of $\arfour$ --- that is, it is an extension of the language of $\pa$ through the additional binary connectives $\adc,\add$ and quantifiers $\ada,\ade$.  And the axiomatization of $\arfive$ is obtained from that of $\arfour$ by deleting Axiom 9 (which is now redundant) and replacing the $\arfour$-Induction rule by the following rule, which we call {\bf $\arfive$-Induction}: 
\[\frac{\ada \bigl(F(0)\bigr)\hspace{30pt} \ada\bigl( F(x)\mli F(x\successor)\bigr)}{\ada \bigl(F(x)\bigr)},\]
where $F(x)$ is any  polynomially bounded formula. 
Here we shall say that $\ada \bigl(F(0)\bigr)$ is  the {\bf basis} of induction, and $\ada \bigl(F(x)\mli F(x\successor)\bigr)$ is the {\bf  inductive step}.
 
To summarize, the nonlogical axioms  of $\arfive$ are those of $\pa$ (Axioms 1-7) plus one single additional axiom $\ada x\ade y(y= x\successor)$ (Axiom 8). There are no logical axioms. The only logical inference rule  is Logical Consequence (LC) as defined in Section   10 of \cite{cla4}, and the only nonlogical inference rule   is $\arfive$-Induction.

 The following fact establishes that the old Axiom 9 of $\arfour$ would indeed be redundant in $\arfive$: 
\begin{fact}\label{onesuc}
$\arfive\vdash \ada x\ade y (y= x\zero)$. 
\end{fact}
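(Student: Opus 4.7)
The plan is to derive the claim by a single application of $\arfive$-Induction to the formula
\[F(x) \;=\; \ade y\bigl(S(x,y)\mlc y= x\zero\bigr),\]
where $S(x,y)$ is a polynomial sizebound on $y$ such as $|y|\leq |x|\successor$. Since the only choice quantifier in $F$ is the outer $\ade y$, guarded by $S$, this $F$ meets the corrected definition of ``polynomially bounded'' recalled in Section~1.1, so it is an admissible induction formula.

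For the basis $\ada\bigl(F(\zero)\bigr)$, I would invoke LC: the $\ade$-player selects the witness $y=\zero$, after which the conjunction $S(\zero,\zero)\mlc \zero= \zero\zero$ reduces to a classical theorem of $\pa$. For the inductive step $\ada\bigl(F(x)\mli F(x\successor)\bigr)$, the strategy is to take the opponent's $\ade$-witness $y_0$ of the antecedent --- classically satisfying $y_0= x\zero$ --- and apply Axiom~8 twice to produce a value $y_1=y_0\successor\successor$. This $y_1$ is then supplied as the $\ade$-witness for the consequent. The remaining obligations, $y_1= x\successor\zero$ and $|y_1|\leq |x\successor|\successor$, follow from the classical identity $2x+2=2(x\successor)$ and are thus absorbable by LC. Concluding by $\arfive$-Induction yields $\ada x\,F(x)$, and discarding the sizebound conjunct by LC produces the desired $\ada x\ade y(y= x\zero)$.

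The main obstacle, such as it is, lies not in the computational content --- which is merely the observation that doubling can be implemented by incrementing twice --- but in packaging the two successive invocations of Axiom~8 cleanly inside a single LC inference leading from the premises to $F(x\successor)$. This, however, is exactly the sort of routine combination of finitely many $\cltw$-valid moves with classical $\pa$-theorems that LC is designed to absorb, so no substantive complication arises.
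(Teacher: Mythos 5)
Your proof is correct, but it takes a different route from the paper's. The paper inducts on the formula $\ade z\bigl(|z|\leq |x|+ |y|\mlc z= x+ y\bigr)$ (with $y$ as a free parameter), thereby first establishing the computability of addition, $\ada x\ada y\ade z(z= x+ y)$; each inductive step uses Axiom~8 exactly once, to pass from $a$ to $a\successor$. The target $\ada x\ade y(y= x\zero)$ is then extracted from the addition fact and the $\pa$-provable identity $x+ x= x\zero$ by LC. You instead induct directly on a formula expressing the computability of doubling, $\ade y\bigl(|y|\leq |x|\successor\mlc y= x\zero\bigr)$, exploiting the recursion $2(x+1)=2x+2$ and so invoking Axiom~8 twice in each step. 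Both decompositions are sound (your sizebound $|y|\leq |x|\successor$ is indeed a polynomial sizebound for $y$, so $F(x)$ is admissible for $\arfive$-Induction, and the resulting position in the inductive step is a $\pa$-theorem); the paper's detour through addition has the mild advantage of also yielding the reusable lemma $\ada x\ada y\ade z(z= x+ y)$, whereas your argument is more self-contained if only doubling is wanted.
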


\begin{proof} Argue in $\arfive$. First, by $\arfive$-Induction on $x$, we want to show
\begin{equation}\label{m24a}   \ade z (|z|\leq |x|+ |y|\mlc z= x+ y) .
\end{equation}
The basis  $ \ade z (|z|\leq |0|+ |y|\mlc z= 0+ y)$ is obviously solved by choosing the value of $y$ for the variable $z$. To solve the inductive step 
\[  \ade z (|z|\leq |x|+ |y|\mlc z= x+ y)\mli \ade z (|z|\leq |x\successor|+ |y|\mlc z= x\successor+ y),\]
we wait till Environment selects a value $a$ for $z$ in the antecedent. Then, using Axiom 8, we calculate the value $b$ of $a\successor$, and choose $b$ for $z$ in the consequent. The resulting position shown below is true by $\pa$, so we win:
\[ |a|\leq |x|+ |y|\mlc a= x+ y\mli  |a\successor|\leq |x\successor|+ |y|\mlc a\successor= x\successor+ y.\]

By LC, from (the $\ada$-closure of) (\ref{m24a}) we immediately get $ \ada x\ada y\ade z (z= x+ y)$; the latter, in turn, again by LC, implies $\ada x \ade z (z= x+ x)$, whence, together with the $\pa$-provable $\cla x(x+ x= x\zero)$, by LC, we get the target $\ada x \ade y (y= x\zero)$.\vspace{-7pt} 
\end{proof}

\begin{fact}\label{m24f}
$\arfive\vdash \ada x\ade y(x= y\zero\add x= y\one)$. 
\end{fact}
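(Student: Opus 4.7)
The idea is to mimic the strategy of Fact~\ref{onesuc}, using $\arfive$-Induction on a polynomially bounded strengthening of the target formula, and then applying LC to drop the sizebound. Specifically, I would prove by $\arfive$-Induction on $x$ the statement
\[ \ade y\bigl(|y|\leq |x| \mlc (x= y\zero \add x= y\one)\bigr), \tag{$\star$} \]
which is polynomially bounded in the sense of Section~\ref{a18}. Since $(\star)$ logically implies $\ade y(x= y\zero \add x= y\one)$, the $\ada$-closure of the target will then follow by LC.

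For the basis $\ade y(|y|\leq |0|\mlc (0= y\zero\add 0= y\one))$, the strategy is to choose $0$ for $y$ and the left disjunct, reaching a position provable in $\pa$. For the inductive step
\[ \ade y\bigl(|y|\leq |x|\mlc (x= y\zero\add x= y\one)\bigr) \mli \ade y\bigl(|y|\leq |x\successor|\mlc (x\successor= y\zero\add x\successor= y\one)\bigr), \]
I would wait for Environment to specify a value $a$ for $y$ and a choice of disjunct in the antecedent. If Environment picks the left disjunct (so $x= a\zero$), choose $a$ for $y$ in the consequent and pick the right disjunct, yielding a $\pa$-true position since $a\zero\successor= a\one$ and $|a|\leq |x|\leq |x\successor|$. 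If Environment picks the right disjunct (so $x= a\one$), invoke Axiom~8 to compute the value $b$ of $a\successor$, choose $b$ for $y$ in the consequent, and pick the left disjunct; here $b\zero= a\successor\zero= a\one\successor= x\successor$ and $|b|= |a|+ 1\leq |x\successor|$, so the resulting position is again true by $\pa$.

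From (the $\ada$-closure of) $(\star)$, by LC we obtain $\ada x\ade y(x= y\zero\add x= y\one)$, as desired. The only point requiring care is making the sizebound explicit so that $\arfive$-Induction is applicable, but this is automatic once we recognize that both winning values $a$ and $a\successor$ that arise in the inductive step satisfy $|y|\leq |x\successor|$; the strategy itself is immediate from the fact that the successor of $2a$ is $2a+ 1$ and the successor of $2a+ 1$ is $2(a+ 1)$.
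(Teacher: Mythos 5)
Your proposal is correct and follows essentially the same route as the paper: strengthen the target to the polynomially bounded formula $\ade y\bigl(|y|\leq |x|\mlc(x= y\zero\add x= y\one)\bigr)$, prove it by $\arfive$-Induction with the same case split on Environment's disjunct choice and the same use of Axiom 8 in the odd case, and then drop the sizebound by LC. (One cosmetic remark: $|a\successor|$ equals $|a|$ or $|a|+1$ depending on $a$, not always $|a|+1$ as you wrote, but the needed bound $|a\successor|\leq|x\successor|$ holds either way, so the argument is unaffected.)
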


\begin{proof} Argue in $\arfive$. By $\arfive$-Induction on $x$, we want to show $\ade y\bigl(|y|\leq |x|\mlc (x= y\zero\add x= y\one)\bigr)$, which immediately implies the target  $\ada x\ade y(x= y\zero\add x= y\one)$ by LC. 

The basis $\ade y\bigl(|y|\leq |0|\mlc (0= y\zero\add 0= y\one)\bigr)$ is solved by choosing $0$ for $y$ and then choosing the left $\add$-disjunct. 

To solve the inductive step $\ade y\bigl(|y|\leq |x|\mlc (x= y\zero\add x= y\one)\bigr)\mli \ade y\bigl(|y|\leq |x\successor|\mlc (x\successor = y\zero\add x\successor = y\one)\bigr)$, we wait till Environment chooses a constant $a$ for $y$ in the antecedent, and also chooses one of the two $\add$-disjuncts there. 

Suppose the left $\add$-disjunct is chosen in the antecedent. So, by now, the game has been brought down to 
\( |a|\leq |x|\mlc x= a\zero\mli \ade y\bigl(|y|\leq |x\successor|\mlc (x\successor = y\zero\add x\successor = y\one)\bigr).\) Then we choose the same $a$ for $y$ in the consequent, and further choose the right disjunct there. The resulting position
 \( |a|\leq |x|\mlc x= a\zero\mli |a|\leq |x\successor|\mlc  x\successor = a\one\)   is true (by $\pa$), so we win.

Now suppose the right $\add$-disjunct is chosen in the antecedent. So, by now, the game has been brought down to $  |a|\leq |x|\mlc  x= a\one \mli \ade y\bigl(|y|\leq |x\successor|\mlc (x\successor = y\zero\add x\successor = y\one)\bigr)$. Then we, using Axiom 8, compute the value of $a\successor$, choose that value for $y$ in the consequent, and further choose the left $\add$-disjunct there. The game will be  brought down to the true $  |a|\leq |x|\mlc  x= a\one \mli  |a\successor|\leq |x\successor|\mlc x\successor = (a\successor) \zero$, so, again, we win.
\end{proof}

In the sequel we will heavily yet usually only implicitly rely on the following fact, which allows us to automatically transfer to $\arfive$ all $\arfour$-provability results established in \cite{cla4}.

\begin{fact}\label{m24b}
Every $\arfour$-provable sentence is also $\arfive$-provable.
\end{fact}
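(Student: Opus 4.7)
The plan is to argue by meta-induction on the structure of $\arfour$-proofs. The Peano axioms (1--7), Axiom 8, and the rule LC are common to $\arfour$ and $\arfive$; the only non-shared axiom is Axiom 9, $\ada x\ade y(y= x\zero)$, which is already $\arfive$-provable by Fact \ref{onesuc}. What remains is to show that the $\arfour$-Induction rule is admissible in $\arfive$.

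Fix a polynomially bounded $F(x)$ and suppose $\arfive$ proves the three $\arfour$-Induction premises $F(0)$, $F(x)\mli F(2x)$, $F(x)\mli F(2x+1)$. The obstruction is a mismatch of rules: $\arfive$-Induction steps from $F(x)$ to $F(x\successor)$, whereas the given premises only support moves from $F(x)$ to $F(2x)$ and $F(2x+1)$. The idea is to apply $\arfive$-Induction not to $F$ itself but to the auxiliary
\[G(n)\;:=\;\ada y\bigl(|y|\leq|n|\mli F(y)\bigr),\]
which is polynomially bounded: its single $\ada$-binder carries the polynomial size bound $|y|\leq|n|$ (whose extra free variable $n$ is not captured by any $\cla$ or $\cle$), and the polynomial bounds inside $F(y)$ remain untouched.

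The basis $G(0)$ is LC-equivalent to the given $F(0)$. For the step $G(n)\mli G(n\successor)$: on receiving an environment-chosen $y$ with $|y|\leq|n\successor|$ in the consequent, we invoke Fact \ref{m24f} to compute a $y'$ with $y=y'\zero$ or $y=y'\one$; routine length arithmetic, using $\pa\vdash|n\successor|\leq|n|\successor$, then forces $|y'|\leq|n|$ in either case (with $y=0$ handled uniformly, as Fact \ref{m24f} returns $y'=0$); feeding this $y'$ into the antecedent $G(n)$ yields $F(y')$, from which the appropriate step premise delivers $F(y)$. The whole step packages into a single LC inference. Applying $\arfive$-Induction produces $\ada n\,G(n)$, and one final LC step --- exploiting the $\pa$-truth $|x|\leq|x|$ to take $n := x$ and $y := x$ --- extracts $\ada x\,F(x)$, which is the conclusion of the simulated $\arfour$-Induction.

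The sole technical delicacy is the polynomial-boundedness check for $G$: the naive alternative $\ada y(y\leq n\mli F(y))$ fails because $n$ is exponentially larger than $|n|$ (so $y \leq n$ is not a polynomial sizebound), and any $\cla$-based variant would violate the condition forbidding capture of the size bound's free variables. The length-wise bound $|y|\leq|n|$ is precisely what keeps $G$ inside the class to which $\arfive$-Induction is applicable.
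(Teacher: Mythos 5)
Your proposal is correct and mirrors the paper's proof essentially step for step: the same auxiliary formula $\ada y\bigl(|y|\leq|x|\mli F(y)\bigr)$ proved by $\arfive$-Induction, the same invocation of Fact~\ref{m24f} to obtain the binary predecessor in the inductive step, the same case split handled by the two step premises via copycat, and the same final LC inference using the $\pa$-truth $\cla x(|x|\leq|x|)$. The only cosmetic difference is notational, writing $2x$ and $2x+1$ where the paper writes $x\zero$ and $x\one$.
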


\begin{proof} From Fact \ref{onesuc} we know that $\arfive$ proves the only axiom (Axiom 9) of $\arfour$ not present in $\arfive$. So, $\arfive$ proves all axioms of $\arfour$. And the rule of LC is the same in the two theories. Therefore, it only remains to show that $\arfive$ is closed under the  rule of $\arfour$-Induction. So, assume $F(x)$ is a polynomially bounded formula, and $\arfive$ proves (the $\ada$-closures of)  each of the following three premises of $\arfour$-Induction:
\begin{eqnarray}
 & F(0);   & \label{m24c}\\
& F(x)\mli F(x\zero);  & \label{m24d}\\
& F(x)\mli F(x\one). &  \label{m24e}
\end{eqnarray}
Our goal is to show that $\arfive$ proves (the $\ada$-closure of) $F(x)$, the conclusion of $\arfour$-Induction.

Argue in $\arfive$. By $\arfive$-Induction on $x$, we want to prove 
\begin{equation}\label{m24g}
\ada y\bigl(|y|\leq |x|\mli F(y)\bigr) .  
\end{equation}

The basis $\ada y\bigl(|y|\leq |0|\mli  F(y)\bigr)$ is obviously taken care of by (\ref{m24c}), according to which, after choosing $0$ for $y$, we know how to solve $F(0)$.  To solve the inductive step 
\begin{equation}\label{m24h}
\ada y\bigl(|y|\leq |x|\mli F(y)\bigr)\mli \ada y\bigl(|y|\leq |x\successor|\mli F(y)\bigr),
\end{equation}
we wait till Environment chooses a constant $a$ for $y$ in the consequent. Then, using Fact \ref{m24f}, we find the binary predecessor $b$ of $a$, and also figure out whether $a= b\zero$ or ($\add$) $a= b\one$.  In either case, we specify $y$ as $b$ in the antecedent. 

If $a= b\zero$,  by now (\ref{m24h}) is brought down to 
\begin{equation}\label{m24i}
\bigl( |b|\leq |x|\mli F(b)\bigr) \mli  \bigl(|b\zero|\leq |x\successor|\mli  F(b\zero)\bigr).
\end{equation}
From (\ref{m24d}), we also know how to win $ F(b)\mli F(b\zero)$. By applying copycat between the two $F(b)$s and two $F(b\zero)$s, we win (\ref{m24i}). 

The case of $a= b\one$ is similar, only relying on (\ref{m24e}) instead of (\ref{m24d}). Thus, (\ref{m24g}) is proven.

Now, the target $F(x)$ can be easily seen to be a logical consequence of (\ref{m24g}) and the $\pa$-provable  $\cla x(|x|\leq |x|)$. 
\end{proof}

\begin{theorem}\label{tt1}
An arithmetical problem has a polynomial space solution iff it is provable in $\arfive$. 

Furthermore, there is an efficient procedure that takes an arbitrary extended $\arfive$-proof of an arbitrary sentence $X$ and constructs a   
 solution of $X$ (of $X^\dagger$, that is) together with an explicit polynomial bound for its space complexity. 
\end{theorem}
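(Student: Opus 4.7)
The proof has two directions. For soundness (including the effective ``furthermore'' clause), I argue by induction on extended $\arfive$-proofs, producing, alongside the HPM $\cal M_X$ for the end-sentence $X^\dagger$, an explicit polynomial bounding its work-tape usage. The Peano axioms and Axiom 8 are solved in constant space by the obvious machines. The LC rule is handled by invoking the effective soundness of $\cltw$ from Section 10 of \cite{cla4}: the interaction-manager machine constructed there combines the premise strategies using only bookkeeping space polynomial in the current game positions; combined with polynomial bounds on those positions (inherited from the polynomial-bound property of the formulas involved) and on the premise machines, this yields a polynomial space bound for the conclusion.

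The decisive case is $\arfive$-Induction. Given solutions $\cal B$ and $\cal S$ of $\ada\bigl(F(0)\bigr)$ and $\ada\bigl(F(x)\mli F(x\successor)\bigr)$ with polynomial space bounds $p_B,p_S$, and an Environment-supplied value $n$ for $x$ in the conclusion $\ada\bigl(F(x)\bigr)$, the strategy telescopes $n$ copies of $\cal S$ around $\cal B$ to produce a play of $F(n)$. The crucial observation, distinguishing $\arfive$ from the polynomial-time $\arfour$, is that the $n$ iterations can be executed \emph{sequentially}, each reusing the scratch area of the work tape. At any moment the machine need only store the current index $i\leq n$, a polynomially bounded snapshot of the game state (polynomially bounded because $F$ is), and one running copy of $\cal S$ operating on that snapshot. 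This yields an overall space cost of $p_S(q(|n|))+q(|n|)$ for some polynomial $q$ read off the sizebounds of $F$, and tracking the constants uniformly through the outer induction on proofs gives the explicit polynomial promised in the theorem.

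For completeness, suppose $X^\dagger$ is solved in polynomial space by some HPM $\cal N$. Following the template of \cite{cla4}, I arithmetize $\cal N$: a configuration of $\cal N$ at any point of a play requires only $\mbox{poly}(n)$ bits, where $n$ is the relevant input size, so the relation ``starting from the initial configuration, $\cal N$ reaches a winning position within $t$ steps'' can be captured by a polynomially bounded formula $F(t)$ parameterized by the Environment-supplied data; $\arfive$-Induction on $t$ then derives the $\ada$-closure of $F(t)$, from which $X$ follows via LC using arithmetizations of $\cal N$'s initial configuration and winning conditions. The main obstacles will be, first, formalizing the sequential-iteration strategy as an honest HPM with a verifiable, explicit polynomial bound, and second, ensuring that the formula $F(t)$ used in completeness strictly meets the polynomially bounded definition recalled in Section \ref{a18}; both should yield to heavy reuse of the $\arfour$ apparatus from \cite{cla4}, with binary induction on notation there replaced here by straight iteration.
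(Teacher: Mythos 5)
The soundness direction's $\arfive$-Induction case is where all the real work lies, and the proposal's claim that ``at any moment the machine need only store the current index $i\leq n$, a polynomially bounded snapshot of the game state, and one running copy of $\cal S$'' does not survive scrutiny. The games in question are \emph{interactive}: moves at the boundary between $F'(j)$ and $F'(j+1)$ can flow in both directions, so when ${\cal H}_m$ makes a move in its antecedent the simulated ${\cal H}_{m-1}$ may respond, triggering a cascade downward, while the real Environment's moves cascade inward through ${\cal H}_k$. To regenerate any intermediate ${\cal H}_i$'s state after having discarded it, one must re-simulate ${\cal H}_i$ from scratch using the full history of moves on both of its boundaries, and that history is not recoverable from a single ``current snapshot.'' Since $k$ is exponential in the size of the constant chosen for $x$, one also cannot simply keep a boundary history for every $i\in\{0,\ldots,k\}$. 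The paper resolves this with the aggregation / $\mathbb{SIM}$ / $\mathbb{MAIN}$ machinery of Subsections \ref{sagree}--\ref{sn}: the invariants (i)--(iv) on aggregations together with the bound (\ref{bodycond}) force an aggregation to hold only $O(\mathfrak{e})$ entries (a constant depending only on $F$), and $\mathbb{SIM}_n$ re-runs ${\cal H}_n$ from scratch out of the stored bodies, repeatedly forgetting and recomputing; Lemmas \ref{golemma}--\ref{shantou} then verify that this re-simulation produces a coherent winning play. Your sketch has the right slogan (trade time for space, recycle scratch) but leaves the central obstacle --- bounded bookkeeping across exponentially many interacting subgames --- completely unaddressed, and you yourself flag it as ``the main obstacle.'' The theorem cannot be claimed without supplying this.

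The completeness direction also has a gap: you aim to conclude ``$X$ follows via LC,'' but $X$ itself need not be $\arfive$-provable even when $X^\dagger$ has a polynomial space solution. The paper instead constructs $\overline{X}$ (inserting the escape sentence $\mathbb{L}$ --- ``$\cal X$ does not win $X$ in space $\chi$'' --- disjunctively at every politeral), so that $\overline{X}^\dagger = X^\dagger$ yet $\arfive\vdash\overline{X}$; this is the whole content of \emph{extensional} as opposed to intensional completeness, and it is absent from your plan. Likewise, a polynomial-space machine may loop for exponentially many steps between moves, so the induction predicate cannot count steps by a polynomially bounded $t$ as you propose; the paper addresses this with $\chi'(z)$ (polynomial in $z$, hence exponential in $|z|$) in the redefined $E^{\circ}_{\circ}$ and in Lemma \ref{m31c}. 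These are substantive, not cosmetic, departures from the $\arfour$ template.
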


\begin{proof} The soundness (``if'') part of this theorem will be proven in Section \ref{sectsound}, and the completeness (``only if'') part in Section 
\ref{sectcompl}.\vspace{-7pt}
\end{proof}

\section{The soundness of $\arfive$}\label{sectsound}
This section is devoted to proving the soundness part of Theorem \ref{tt1}. We will only focus on showing that  any $\arfive$-provable sentence   has a polynomial space solution. The ``furthermore'' clause of the theorem  also claims that such a solution, together with an explicit polynomial bound for its space complexity, can be constructed efficiently. We will not explicitly verify this claim because it can be immediately seen to be true   for   the same reasons as those pointed out at the end of Section 13 of \cite{cla4} when justifying the  similar claim for $\arfour$. 

Consider an arbitrary $\arfive$-provable sentence $X$. In showing that $X$ has a polynomial space solution, we proceed by induction on the length of its proof.

Assume $X$ is an axiom of $\arfive$. If $X$ is one of Peano axioms, then it is a true  elementary sentence and therefore is won by a machine that makes no moves and consumes no space.  
And if $X$ is  $\ada x\ade y(y= x\successor)$ (Axiom 8), then it is won by a machine that (for the constant $x$ chosen by Environment for the variable $x$)    computes  the value $a$ of $x+ 1$, makes the move $a$  and retires in a moveless infinite loop that consumes no space.

Next, suppose $X$ is obtained from premises $X_1,\ldots,X_n$ by LC. By the induction hypothesis, for each $i\in\{1,\ldots,n\}$, we already have a solution (HPM) ${\cal N}_i$ of $X_i$ together with an explicit polynomial bound  $\xi_i$ for the space complexity of ${\cal N}_i$. Of course, we can think of each such HPM ${\cal N}_i$ as an $n$-ary GHPM that ignores its inputs. Then,   by clause 2 of Theorem 10.1 of \cite{cla4}, we can (efficiently) construct a solution ${\cal M}(\code{{\cal N}_1},\ldots,\code{{\cal N}_n})$ of $X$, together with an explicit polynomial bound $\tau(\xi_1,\ldots,\xi_n)$ for its space complexity.  

Finally, for the long rest of this section, assume $X$ is (the $\ada$-closure of) $F(x)$, where $F(x)$ is a polynomially bounded formula, and $X$ is obtained by $\arfive$-Induction on $x$. So, the premises are (the $\ada$-closures of) $F(0)$ and $F(x)\mli F(x\successor)$. By the induction hypothesis, there are HPMs ${\cal N}$ and ${\cal K}$ --- with certain explicit polynomial bounds $\xi$ and $\zeta$ for their space complexities, respectively --- that solve these two premises, respectively.  Fix them. We want to construct a solution $\cal M$ of $F(x)$.  

Remember the notion of an ``unreasonable move'' from Section 13 of \cite{cla4}. ``{\bf Reasonable}'', as expected, will mean ``not unreasonable''. We will say that a player $\xx\in\{\pp,\oo\}$ has played a run or ``play'' $\Gamma$ {\bf reasonably} with respect to a formula/game $G$ iff $\Gamma$ does not contain any unreasonable (with respect to $G$, in the context of  $G$) moves by $\xx$. And we say that $\Gamma$ is a {\bf reasonable} run or play of $G$ iff both players have played $\Gamma$ reasonably with respect to $G$. We say that a given machine $\cal H$ plays a given formula/game $G$ {\bf reasonably} iff, whenever $\Gamma$ is a run generated by $\cal H$, $\cal H$ (in the role of $\pp$) has played $\Gamma$ reasonably with respect to $G$. 

As done in Section 13 of \cite{cla4}, we replace  ${\cal N},{\cal K}$   by their ``{reasonable counterparts}'' --- i.e., machines that play the corresponding games $F(0)$ and $F(x)\mli F(x\successor)$ reasonably --- ${\cal N}',{\cal K}'$ and  corresponding  explicit polynomial bounds $\xi',\zeta'$ for their space complexities. For simplicity, we further replace the two bounds $\xi',\zeta'$ by the  common  bound $\phi= \xi'+ \zeta'$ for the space complexities of both machines ${\cal N}'$  and ${\cal K}'$.

To describe  $\cal M$, assume $x,\vec{v}$ are exactly the free variables of $F(x)$ (the case of $F(x)$ having no free occurrences of $x$ is trivial and we exclude it from our considerations), so that, in an expanded form,  $F(x)$ can be rewritten as $F(x,\vec{v})$. 
At the beginning,  our $\cal M$  waits for Environment to choose constants for the free variables of $F(x,\vec{v})$.   
For the rest of this section, assume 
 $k$ is the  constant chosen for the variable $x$,
 and $\vec{c}$ are the constants chosen for $\vec{v}$. Since the case of $k= 0$ is straightforward and not worth considering separately, we will additionally assume that $k\geq 1$. From now on, we shall write $F'(x)$ as an abbreviation of $F(x,\vec{c})$.

Further, we shall write ${\cal H}_{0}$ as an abbreviation of the phrase ``${\cal N}'$  in the scenario where the adversary, at the beginning of the play, has chosen the constants $\vec{c}$ for the variables $\vec{v}$\hspace{2pt}''. So, for instance, when saying that ${\cal H}_0$ moves on cycle $t$, it is to be understood as that, in the above scenario, ${\cal N}'$ moves on cycle $t$. As we see, strictly speaking, ${\cal H}_0$ is not a separate ``machine'' but rather it is just ${\cal N}'$ in a certain partially fixed scenario. Yet, for convenience and with some rather innocent abuse of language, in the sequel we may terminologically and even conceptually treat ${\cal H}_0$ as if it was a machine in its own right --- namely, the machine that works just like ${\cal N}'$ does in the scenario where the adversary, at the beginning of the play, has chosen the constants $\vec{c}$ for the variables $\vec{v}$.  Similarly, for any $n\geq 1$, we will write 
${\cal H}_{n}$ for the ``machine'' that works just like ${\cal K}'$ does in the scenario where the adversary, at the beginning of the play, has chosen the constants $\vec{c}$ for the variables $\vec{v}$ and the constant $n- 1$ for the variable $x$. So, ${\cal H}_{0}$ (thought of as a machine) wins the constant game $F'(0)$ and, for each $n\geq 1$, ${\cal H}_{n}$ wins the constant game $F'(n-1)\mli F'(n)$.

In the same style as the notation ${\cal H}_n$ is used, we write ${\cal M}_k$ for the ``machine'' that works just like ${\cal M}$ does after the above event of Environment's having chosen $k$ and $\vec{c}$ for $x$ and $\vec{v}$, respectively. So, in order to complete our description of $\cal M$, it will suffice to simply define ${\cal M}_k$ and say that, after Environment has chosen constants for all free variables of $F(x)$, $\cal M$ continues playing as (``turns into'') ${\cal M}_k$. Correspondingly,  in showing that $\cal M$ wins $\ada F(x)$, it will be sufficient to show that ${\cal M}_k$ wins $F'(k)$. 

The idea underlying the work of ${\cal M}_k$ can be summarized by saying that what ${\cal M}_k$ does is synchronization --- in the sense explained in Section 13 of \cite{cla4} --- between $k+ 2$ games, real or imaginary (simulated). Namely:
\begin{itemize}
\item It synchronizes the imaginary play of $F'(0)$ by ${\cal H}_{0}$ with the antecedent of the imaginary play of $F'(0)\mli F'(1)$ by ${\cal H}_{1}$.
\item For each $n$ with $1\leq n< k$, it synchronizes the consequent of the imaginary play of $F'(n- 1)\mli F'(n)$ by ${\cal H}_{n}$ with the 
antecedent of the imaginary play of $F'(n)\mli F'(n+ 1)$ by ${\cal H}_{n+ 1}$.
\item It (essentially) synchronizes the consequent of the imaginary play of  $F'(k- 1)\mli F'(k)$ by ${\cal H}_{k}$ with the real play of $F'(k)$.
\end{itemize}
Therefore, since ${\cal H}_{0}$ wins $F'(0)$ and each ${\cal H}_{n}$ with $1\leq n\leq k$ wins $F'(n- 1)\mli F'(n)$, ${\cal M}_k$ wins $F'(k)$ and thus $\cal M$ wins $F(x)$, as desired. 

In section 13 of \cite{cla4},  synchronization in the above style was achieved by simulating all imaginary plays in parallel. The present case does not allow us doing the same though, and synchronization should be conducted in a more careful way. Namely, a parallel simulation of all plays is no longer possible, because there are exponentially (in the size of the binary representation of $k$) many simulations to perform, which would require an exponential amount of space. So, instead, simulations in the present case should be performed is some sequential rather than parallel manner, with subsequent simulations recycling the space used by the previous ones, and with the overall procedure keeping forgetting the results of most previous simulations and re-computing the same information over and over many times. This is a typical case of trading time for space. We postpone our description of how ${\cal M}_k$ exactly works till Subsection \ref{sn}, after having elaborated all necessary preliminaries in Subsections \ref{sagree}-\ref{saggr}.     

\subsection{$\mathfrak{L}$ and some other important quantities}\label{sagree}
We agree that, throughout the rest of Section \ref{sectsound}:
\begin{itemize}
\item $\mathfrak{l}$ denotes the size of the greatest constant among $k,\vec{c}$. 
\item $\mathfrak{e}$ denotes the maximum number of $\oo$-labeled moves in any legal run of $F'(0)$. 
\item $\mathfrak{d}$ denotes  the maximum number of whatever-labeled moves in any legal run of $\ada F(x)$. 
\item $\mathfrak{q}$  denotes the total number of symbols that may ever appear on either tape  of the machines ${\cal N}'$ and ${\cal K}'$. 
\item $\mathfrak{s}$  denotes the total number of states  of the machines ${\cal N}'$ and ${\cal K}'$.
\end{itemize}

It follows from our assumptions regarding ${\cal N}'$ and ${\cal K}'$ that all ${\cal H}_n$ ($0\leq n\leq k$) play the corresponding games $F'(0),\ F'(0)\mli F'(1),\ \ldots, F'(k-1)\mli F'(k)$  
 legally\footnote{As easily understood, by saying that a given player plays a given game $G$ {\bf legally} we mean that it does not make any illegal moves of $G$ in legal positions of $G$.} and reasonably. If we additionally assume that so does their adversary, then, as was done in \cite{cla4}, one can easily write a term $\eta(w)$ with a single variable $w$ such that the sizes of   moves ever made by either player in any of the above 
 games  never exceed $\eta(\mathfrak{l})$. For instance, if $F(x)$ is \[\ade u\bigl(|u|\leq |x|\times |z|\mlc\ada v(|v|\leq |u|+ |x|\mli G)
\bigl)\] where $G$ is elementary, then $\eta(w)$ can be taken to be $w\times w+ w+ 0\successor\successor\successor\successor$.  Fix this $\eta$. 

In the following three lemmas, in the context of a given ${\cal H}_n\in\{{\cal H}_0,\ldots,{\cal H}_k\}$, ``the corresponding game'' should be understood as $F'(0)$ if $n=0$, and as $F'(n-1)\mli F'(n)$ if $n\geq 1$. 

Since both ${\cal N}'$ and ${\cal K}'$ run in space $\phi$, we obviously have:
\begin{lemma}\label{m29aa}
For any $n\in\{0,\ldots,k\}$, in any play by ${\cal H}_n$, as long as ${\cal H}_n$'s  adversary
plays legally and reasonably with respect to the corresponding game, the spacecost of no clock cycle exceeds $\phi\bigl(\eta(\mathfrak{l})\bigr)$. 
\end{lemma}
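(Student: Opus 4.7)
The plan is a direct bookkeeping argument that combines the space-complexity assumption on ${\cal N}'$ and ${\cal K}'$ with the move-size bound $\eta(\mathfrak{l})$ established in the paragraph immediately preceding the lemma. No new technical machinery is required; this is really just a matter of checking that the hypotheses of the relevant definitions all line up.

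First I would observe that, by construction, ${\cal H}_n$ is literally ${\cal N}'$ (when $n=0$) or ${\cal K}'$ (when $n\geq 1$) viewed under the scenario in which the initial Environment moves have been fixed to be the constants $\vec{c}$ (and, for $n\geq 1$, also $n-1$). Consequently, any adversary of ${\cal H}_n$ can be identified with an adversary of the underlying ${\cal N}'$ or ${\cal K}'$ that plays this fixed prefix of constant choices and then continues as the adversary of ${\cal H}_n$. The fixed prefix is manifestly legal and reasonable --- it consists only of selections of constants for $\ada$-quantified variables --- so if the ${\cal H}_n$-adversary plays legally and reasonably with respect to the corresponding game (in the sense of the lemma), then so does the composite adversary with respect to the full game that ${\cal N}'$ or ${\cal K}'$ plays.

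Next I would invoke the move-size bound. Since ${\cal N}'$ and ${\cal K}'$ are themselves reasonable players of their respective games, and their composite adversary (as above) plays legally and reasonably, every move ever appearing in such a play is of size at most $\eta(\mathfrak{l})$, by the choice of $\eta$ made just before the statement of the lemma. Because $\phi$ is a (monotonic) polynomial bound on the space complexities of both ${\cal N}'$ and ${\cal K}'$, the spacecost of any clock cycle during the simulated play is bounded by $\phi$ applied to the maximal size of any move made so far, which is at most $\phi\bigl(\eta(\mathfrak{l})\bigr)$. Since the spacecost at any cycle of ${\cal H}_n$ coincides by construction with that of ${\cal N}'$ or ${\cal K}'$ in the corresponding scenario, the desired bound follows.

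The only obstacle is purely notational: making sure the argument of the space-complexity bound $\phi$ inherited from \cite{cla4} is precisely the quantity that $\eta(\mathfrak{l})$ controls. Once one confirms (directly from the conventions of \cite{cla4}) that space complexity is measured as a function of the largest move size in the game history, the lemma is immediate.
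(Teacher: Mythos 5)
Your argument is correct and matches the paper's reasoning: the paper treats this lemma as an immediate consequence of the fact that ${\cal N}'$ and ${\cal K}'$ run in space $\phi$ together with the move-size bound $\eta(\mathfrak{l})$ established in the paragraph just before, and your proposal simply fills in the routine bookkeeping that the paper leaves implicit. The only small imprecision is that, strictly speaking, the background of a clock cycle is measured from the $\oo$-labeled moves rather than all moves, but since $\eta(\mathfrak{l})$ bounds all moves this makes no difference to the bound.
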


By  the {\bf symbolwise length} of a position $\Phi$ we shall mean  the number of cells that $\Phi$ takes when spelled on the run tape. Taking into account that the sizes of moves by either player in legal and reasonable plays of   $F'(0)$, $F'(0)\mli F'(1)$, \ldots, $F'(k- 1)\mli F'(k)$  never exceed $\eta(\mathfrak{l})$ and that at most $2\mathfrak{d}$ moves can be (legally) made in those plays, we obviously   have: 
\begin{lemma}\label{m29aaa}
For any $n\in\{0,\ldots,k\}$, at any time in any play by ${\cal H}_n$, as long as ${\cal H}_n$'s  adversary
plays  legally and reasonably with respect to the corresponding game, the symbolwise length of the position
spelled on the run tape of ${\cal H}_n$ does not exceed $2\mathfrak{d}\eta(\mathfrak{l})+ 2\mathfrak{d}$.\footnote{Here ``$+ 2\mathfrak{d}$'' is to account for the labels $\twg$ and $\tlg$ attached to moves.}
\end{lemma}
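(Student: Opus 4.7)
The plan is to derive the bound directly from the two observations already recorded in the paragraph immediately preceding the lemma, together with a routine accounting of how a position is spelled on the run tape. First, the choice of $\eta$ guarantees that, in any legal and reasonable play of the corresponding game, each move --- when written out as a string of symbols --- occupies at most $\eta(\mathfrak{l})$ cells. Second, the quantity $\mathfrak{d}$ bounds the number of moves in any legal run of $\ada F(x)$; hence it also bounds the number of moves in any legal run of $F'(n)$ for any constant $n$. Consequently, a legal run of $F'(0)$ contains at most $\mathfrak{d}$ moves, and a legal run of $F'(n- 1)\mli F'(n)$, decomposing into its antecedent and consequent components (each of which is a legal run of an instance of $F$), contains at most $2\mathfrak{d}$ moves overall.

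With these two facts in hand, I would finish by recalling that the representation of a position on the run tape is obtained by listing the labeled moves one after another, each move being prefixed by a single-cell label symbol ($\twg$ for $\pp$-labeled moves and $\tlg$ for $\oo$-labeled moves). Thus, the total number of cells occupied is at most the number of moves times the per-move overhead $\eta(\mathfrak{l})+ 1$, which is bounded by $2\mathfrak{d}\bigl(\eta(\mathfrak{l})+ 1\bigr)= 2\mathfrak{d}\eta(\mathfrak{l})+ 2\mathfrak{d}$, matching the stated bound exactly.

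This lemma is essentially a bookkeeping exercise, so I do not expect any substantive obstacle. The only point worth verifying is the per-move overhead of exactly one cell for the label symbol, which is precisely what the author's footnote highlights when annotating the additive term $2\mathfrak{d}$.
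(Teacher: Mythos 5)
Your proposal is correct and takes essentially the same route the paper does: the paper does not give a separate proof, but states the lemma as an immediate consequence of the two observations in the preceding paragraph (move sizes are bounded by $\eta(\mathfrak{l})$, and at most $2\mathfrak{d}$ moves occur), with the footnote confirming that the additive $2\mathfrak{d}$ accounts for the one-cell label attached to each move. Your extra explanation of why $2\mathfrak{d}$ bounds the move count --- via the antecedent/consequent decomposition of $F'(n-1)\mli F'(n)$, each component being subject to the bound $\mathfrak{d}$ --- is a correct unpacking of what the paper leaves implicit.
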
 

We let $\mathfrak{L}$ be an  abbreviation defined by 
\[\mathfrak{L} \ =\ \mathfrak{s}\times \Bigl(\phi\bigl(\eta(\mathfrak{l})\bigr)\Bigr)\times\Bigl(2\mathfrak{d} \eta(\mathfrak{l})+ 2\mathfrak{d}+ 1\Bigr)\times \Bigl(\mathfrak{q}^{\phi(\eta(\mathfrak{l}))} \Bigr)\times \Bigl(\mathfrak{q}^{2\mathfrak{d} \eta(\mathfrak{l})+ 2\mathfrak{d}+ 1} \Bigr).\]

\begin{lemma}\label{m29a}
Consider any machine ${\cal H}_n\in\{{\cal H}_0,\ldots,{\cal H}_k\}$, and any cycle (step, time) $c$ of any play by ${\cal H}_n$.
If the adversary of ${\cal H}_n$ plays legally and reasonably with respect to the corresponding game, and it does not move  at any time $d$ with 
$d\geq c$, then ${\cal H}_n$ does not move at any
 time $d$ with $d\geq c+\mathfrak{L}$.
\end{lemma}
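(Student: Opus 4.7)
The plan is to argue by contradiction via a pigeonhole argument on machine configurations. First I would interpret $\mathfrak{L}$ as an upper bound on the total number of distinct configurations that ${\cal H}_n$ can occupy during the relevant play. By a \emph{configuration} I mean the tuple consisting of the current state of ${\cal H}_n$ (at most $\mathfrak{s}$ choices), the work-tape head position (at most $\phi(\eta(\mathfrak{l}))$ choices by Lemma \ref{m29aa}), the work-tape content restricted to its ever-visited cells (at most $\mathfrak{q}^{\phi(\eta(\mathfrak{l}))}$ choices), the run-tape head position (at most $2\mathfrak{d}\eta(\mathfrak{l})+2\mathfrak{d}+1$ choices by Lemma \ref{m29aaa}), and the run-tape content over those same cells (at most $\mathfrak{q}^{2\mathfrak{d}\eta(\mathfrak{l})+2\mathfrak{d}+1}$ choices). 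Multiplying gives exactly $\mathfrak{L}$.

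Next, I would suppose for contradiction that ${\cal H}_n$ makes a move at some time $d\geq c+\mathfrak{L}$, and look at the $\mathfrak{L}+1$ configurations occurring at cycles $c,c+1,\ldots,c+\mathfrak{L}$. Each of these configurations fits inside the bounded configuration space just counted, because the assumptions of Lemmas \ref{m29aa} and \ref{m29aaa} apply throughout (the adversary plays legally and reasonably by hypothesis). By the pigeonhole principle, there exist indices $0\leq i<j\leq \mathfrak{L}$ such that the configurations at times $c+i$ and $c+j$ coincide.

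The key observation is that the run-tape content can only grow monotonically over time, since moves are appended and never erased. So if the run-tape components of the two configurations are identical, then no move --- by either player --- has been made during the cycles $c+i,c+i+1,\ldots,c+j-1$. The adversary makes no moves in this window by the standing assumption (as $c+i\geq c$), and we have just ruled out any move by ${\cal H}_n$ in the same window. Combined with the identity of the full configurations at $c+i$ and $c+j$ and the determinism of the HPM's computation in the absence of new adversary input, this forces ${\cal H}_n$ to loop with period $j-i$ forever from time $c+i$ onward, making no further moves. But then ${\cal H}_n$ cannot move at time $d\geq c+\mathfrak{L}>c+i$, contradicting our assumption.

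The one step that needs care --- rather than constituting a genuine obstacle --- is packaging the right data into the notion of ``configuration''. In particular, the run-tape content must be included, since otherwise the pigeonhole conclusion would only give a recurrence of states and tape heads, not of the full computational snapshot, and would fail to preclude occasional moves inside the repeating interval. It is precisely the monotonicity of the run-tape component that upgrades a repeated configuration into a genuinely moveless periodic loop.
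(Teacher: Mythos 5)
Your proof is correct and, like the paper's, is a pigeonhole argument on configurations with $\mathfrak{L}$ bounding the number of possibilities. You close the argument differently, though, and more carefully. The paper concludes that a repeated configuration means ${\cal H}_n$ loops and ``will make the same move $\alpha$ over and over again,'' violating the $\mathfrak{d}$-move legal quota; this phrasing is a bit loose, since a loop in which the run-tape component recurs cannot contain any moves at all. You instead observe directly that, because the run-tape content is one of the five factors folded into $\mathfrak{L}$ and that content only ever grows, a repeated configuration forces the interval between the two coinciding time steps to contain no move whatsoever; together with determinism (the adversary being silent from $c$ on) this makes ${\cal H}_n$ moveless from the first repeat onward, so in particular no move occurs at $d$. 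Your version makes explicit the run-tape monotonicity that underlies the whole pigeonhole and reaches the contradiction directly, without invoking the legality quota on ${\cal H}_n$ at the final step.
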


\begin{proof} Consider any  play by any machine ${\cal H}_n\in\{{\cal H}_0,\ldots,{\cal H}_k\}$, where both players have played legally and reasonably, and answer the following question: How many different configurations of ${\cal H}_n$ are there that may emerge in the play? There are at most $\mathfrak{s}$ possibilities for the state of such a configuration. These possibilities are accounted for by the 1st of the five factors of $\mathfrak{L}$. Next, in view of Lemma \ref{m29aa}, there are at most $\phi\bigl(\eta(\mathfrak{l})\bigr)$ possible locations of the work-tape head.  This number is accounted for by the 2nd factor of $\mathfrak{L}$. Next, in view of Lemma \ref{m29aaa}, there are at most $2\mathfrak{d} \eta(\mathfrak{l})+ 2\mathfrak{d}+ 1$ possible locations of the run-tape head,\footnote{Remember that a scanning head of an HPM can never move beyond the leftmost blank cell. So, ``$2\mathfrak{d} \eta(\mathfrak{l})+ 2\mathfrak{d}$'' is to account for the position spelled on the tape, and ``$+1$'' is to account for the possibility of visiting the blank cell following that position.} and this number is accounted for by the 3rd factor of $\mathfrak{L}$. Next, in view of Lemma \ref{m29aa}, obviously there are at most $\mathfrak{q}^{\phi(\eta(\mathfrak{l}))}$ possible contents of the work tape, and this number is accounted for by the 4th factor of $\mathfrak{L}$. Finally, in view of Lemma \ref{m29aaa}, there are at most $\mathfrak{q}^{2\mathfrak{d} \eta(\mathfrak{l})+ 2\mathfrak{d}+ 1}$ possible contents of the run tape, and this number is accounted for by the 5th factor of $\mathfrak{L}$. Thus, there are at most $\mathfrak{L}$ possible configurations. Now, consider the scenario where the adversary of ${\cal H}_n$ makes no moves beginning from a clock cycle $c$. Assume, for a contradiction, that ${\cal H}_n$ makes a move $\alpha$ at some time $d$ with $d> c+ \mathfrak{L}$. Since there are fewer that $d- c$ configurations, some configuration should repeat itself between the steps $c$ and $d$. In other words, ${\cal H}_n$ is in an infinite loop. Hence, it will make the same move $\alpha$ over and over again, which means that ${\cal H}_n$ does not play legally (there are at most $\mathfrak{d}$ legal moves by $\pp$ in the play), contrary to our assumptions.  
\end{proof}

\subsection{The procedure $\mathbb{SIM}$}
By a 
{\bf movesequence} we shall mean a (possibly empty) finite sequence $\vec{\alpha}=\seq{\alpha_1,\ldots,\alpha_r}$ of 
moves, and a {\bf body} means a (possibly empty) tuple $B=(\vec{\alpha}_1,\ldots,\vec{\alpha}_s)$ of movesequences. 
 The number $s$ is said to be the {\bf size} of such a body $B$. 
A {\bf signed movesequence} $S$ is $-\vec{\omega}$ (in which case we say that $S$ is {\bf negative}) or $+\vec{\omega}$ (in which case we say that $S$ is {\bf positive}), where $\vec{\omega}$ is a 
movesequence. In many contexts we may terminologically identify a signed movesequence $-\vec{\omega}$ or $+\vec{\omega}$ with its $\vec{\omega}$ part. For instance, we may say ``$+\vec{\omega}$ is nonempty'', which should be understood as that $\vec{\omega}$ is nonempty.

Our ${\cal M}_k$ simulates the work of a given machine ${\cal H}_n$ ($0\leq n\leq k$) through running the procedure $\mathbb{SIM}_n$ defined below. This procedure takes a pair $(B_1,B_2)$ of bodies as an argument, where $B_1$ is nonempty,
and returns a signed movesequence $S$. We indicate this relationship by writing $\mathbb{SIM}_n(B_1,B_2)=S$. When $n=0$, $B_1$  is required to be $(\seq{})$, which  makes this argument a ``dummy'' one; also, the output of  $\mathbb{SIM}_0$ is always positive, which makes the sign $+$ of that output also ``dummy''. 

We first take a brief and informal preliminary look at $\mathbb{SIM}_n$, starting with the simpler case of  $\mathbb{SIM}_0((\seq{}),B_2)$. 
Let $B_2=(\vec{\beta_1},\ldots,\vec{\beta}_b)$. This argument describes a behavior --- moves made by --- ${\cal H}_0$'s imaginary adversary and, this way, determines the scenario of the work of ${\cal H}_0$ that needs to be simulated. Namely, this is the scenario where the adversary made the moves of $\vec{\beta}_1$ (``{\bf moved $\vec{\beta}_1$}'' for short), all at once, on clock cycle $\mathfrak{L}$, then moved $\vec{\beta}_2$ on cycle $2\mathfrak{L}$, $\vec{\beta}_3$ on cycle $3\mathfrak{L}$, and so on. Then $\mathbb{SIM}_0((\seq{}),B_2)=+\vec{\psi}$, where $\vec{\psi}$ is the sequence of all moves made by ${\cal H}_0$ during the $\mathfrak{L}$ cycles following the adversary's last chunk $\vec{\beta}_b$ of moves --- that is, the sequence of moves made by ${\cal H}_0$ during the cycles 
$b\mathfrak{L}$ (including) through $(b+ 1)\mathfrak{L}$ (not including). We denote such an interval by $[b \mathfrak{L},\ldots,(b+ 1)\mathfrak{L})$, with ``$[$'' indicating that $b\mathfrak{L}$ is included, and ``$)$'' indicating that $(b+ 1)\mathfrak{L}$ is not included. Thus, the output of $\mathbb{SIM}_0((\seq{}),B_2)$ does not describe the full behavior of (all moves made by) ${\cal H}_0$ in the scenario determined by $B_2$, but rather only the moves made during the last one of the $\mathfrak{L}$-step-long ``episodes'' of that scenario. Why $\mathfrak{L}$-step-long, intuitively speaking? That is because, in view of Lemma \ref{m29a}, provided that the adversary plays legally and reasonably, $\mathfrak{L}$ steps are sufficient for ${\cal H}_0$ to make all moves that it was ``ever going to make'' in response to it's adversary's actions. That is, adding any extra amount of steps to the last episode will not result in a different value of $\vec{\psi}$.     

The case of $\mathbb{SIM}_n(B_1,B_2)$ with $1\leq n\leq k$ is similar but somewhat more complicated. In fact, $\mathbb{SIM}_0((\seq{}), B_2)$ is a special case of $\mathbb{SIM}_n(B_1,B_2)$ if we (assume that $n=0$, $B_1=(\seq{})$ and) think of $F'(0)$ as the implication $F'(-1)\mli F'(0)$ with the ``dummy'' antecedent $F'(-1)=\twg$. As in the preceding case, 
 the argument $(B_1,B_2)$ determines the scenario of the work of ${\cal H}_n$ that needs to be simulated. In this scenario, the moves of $B_1$ are ones made by ${\cal H}_n$'s adversary in the antecedent of $F'(n-1)\mli F'(n)$, and the moves of $B_2$ are ones made in the consequent. The moves of the first movesequence of $B_1$ are always assumed to be made at the very beginning of the play, i.e., on clock cycle $0$. The order in which the rest of the moves are ``imagined'' to be made depends on how things evolve, namely, on whether ${\cal H}_n$ responds by a nonempty or an empty sequence of moves in the antecedent of $F'(n-1)\mli F'(n)$. In the former (resp. latter) case, the next series of ${\cal H}_n$'s adversary's imaginary moves will be the first not-yet-fetched movesequence of $B_1$ (resp. $B_2$). As in the case of $n=0$, the simulation of ${\cal H}_n$ proceeds episode-by-episode, with each episode lasting $\mathfrak{L}$ steps. Namely,  the $i$th episode covers steps  $[(i- 1)\mathfrak{L},\ldots,i \mathfrak{L})$. The overall procedure ends when it tries to fetch the next not-yet-fetched movesequence of either $B_1$ or $B_2$ but finds that there are no such movesequences remaining. In the former case the output $S$ is stipulated to be $-\vec{\nu}$, where 
$\vec{\nu}$ is the sequence of moves made by ${\cal H}_n$ in the antecedent of $F'(n-1)\mli F'(n)$ during the last episode of simulation. And in the latter case $S$ is $+\vec{\psi}$, where $\vec{\psi}$ is the series of moves made by ${\cal H}_n$ in the consequent of $F'(n-1)\mli F'(n)$ since the last movesequence of $B_2$ was fetched. 

In precise terms, this is how the procedure (that computes the value/output of) $\mathbb{SIM}_0((\seq{}),B_2)$ works. It creates an 
integer-holding variable $y$ initialized to $0$, and two signed-movesequence-holding variables $S$ and $R$, with $S$ having no initial value and $R$ initialized to $+\seq{}$.\footnote{The presence of the variable $S$ may seem redundant at this point, as $\mathbb{SIM}_0((\seq{}),B_2)$ (and likewise  $\mathbb{SIM}_n(B_1,B_2)$ with $n\geq 1$) could be defined in a simpler way without it. The reason why we want to have $S$ will become clear in Subsection \ref{stf}. Similarly, we could have done without the variable $R$ as well --- it merely serves the purpose of ``synchronizing'' the cases of $n=0$ and $n\geq 1$. Similar reasons also explain our obviously unnecessary/dummy usage of the sign ``$+$'' in the present case.}   After this initialization step, the procedure goes into the following loop LOOP.
Each iteration of the latter simulates certain $\mathfrak{L}$ steps of $H_n$, and the subsequent iteration (if one exists) continues simulation from the point where the previous one stopped.  

\begin{quote}
{\em LOOP}:  Let $+\vec{\omega}$ be the value of $R$ ($R$ never takes negative values when $n=0$). Simulate\footnote{That is, continue the simulation performed during the preceding iterations of LOOP if such iterations exist. The same comment applies to the description of $\mathbb{SIM}_n$ for $n\geq 1$ given shortly.} steps $[y\mathfrak{L},\ldots,(y+ 1)\mathfrak{L})$ of ${\cal H}_0$ in the scenario where, at step $y\mathfrak{L}$, the adversary moved $\vec{\omega}$, and made no other moves. Let  $\vec{\psi}$ be the moves made by ${\cal H}_0$  during the above steps  $[y\mathfrak{L},\ldots,(y+ 1)\mathfrak{L})$. 
Set the value of $S$ to $+\vec{\psi}$. Then, if $y$ equals the size of $B_2$, return $S$. Otherwise, increment $y$ to $y+ 1$, set $R$ to the $y$th movesequence of $B_2$ prefixed with ``$+$'', and repeat LOOP. 
\end{quote}

Next, this is  how the procedure $\mathbb{SIM}_n(B_1,B_2)$ exactly works when  $n\geq 1$. It creates three integer-holding variables $y_1,y_2,z$, with $y_1$ initialized to $1$, $y_2$ to $0$ and $z$ to $0$.\footnote{Intuitively, $y_1$ keeps track of how many movesequences of $B_1$ have been fetched so far, $y_2$ does the same for $B_2$, and $z$ tells us how many $\mathfrak{L}$-step-long episodes had already been simulated by the time when the latest movesequence of $B_2$ was fetched.}  It further creates two signed-movesequence-holding variable $S$ and $R$, 
with  $S$ having no initial value and $R$  
initialized to $-\vec{\alpha}$, where $\vec{\alpha}$ is the first movesequence of $B_1$ (remember that $B_1$ is required to be nonempty). After this initialization step, the procedure goes into the following loop LOOP.
As before, each iteration of the latter simulates certain $\mathfrak{L}$ steps of ${\cal H}_n$ (namely, the $i=(x+y)$th iteration simulates $\mathfrak{L}$ steps starting from step $(i-1)\mathfrak{L}$), and the subsequent iteration (if one exists) continues simulation from the point where the previous one stopped.  

\begin{quote}
{\em LOOP}:  Let $-\vec{\omega}$ (resp. $+\vec{\omega}$) be the value of $R$.   Simulate steps $[(y_1+ y_2-1)\mathfrak{L},\ldots,(y_1+ y_2)\mathfrak{L})$ of ${\cal H}_n$ in the scenario where, at step $(y_1+ y_2-1)\mathfrak{L}$, the adversary moved $\vec{\omega}$ in the antecedent (resp. consequent) of $F'(n-1)\mli F'(n)$, and made no other moves. Let $\vec{\nu}$ be the moves made by ${\cal H}_n$ in the antecedent during the above steps  $[(y_1+ y_2-1)\mathfrak{L},\ldots,(y_1+ y_2)\mathfrak{L})$. And let $\vec{\psi}$ be the moves made by ${\cal H}_n$ in the consequent during steps  $[z\mathfrak{L},\ldots,(y_1+ y_2)\mathfrak{L})$. 
\begin{itemize}
  \item If $\vec{\nu}$ is nonempty, set the value of $S$ to $-\vec{\nu}$. Then, if $y_1$ equals the size of $B_1$, return $S$; otherwise, increment $y_1$ to $y_1+ 1$, set $R$ to the $y_1$th movesequence of $B_1$ prefixed with ``$-$'', and repeat LOOP. 
  \item If $\vec{\nu}$ is empty, set the value of $S$ to $+\vec{\psi}$. Then, if $y_2$ equals the size of $B_2$, return $S$. Otherwise, increment $y_2$ to $y_2+ 1$, set $R$ to the $y_2$th movesequence of $B_2$ prefixed with ``$+$'', update $z$ to 
 $y_1+ y_2-1$, and repeat LOOP. 
\end{itemize}
\end{quote}
 
We say that a body $(\vec{\alpha}_1,\ldots,\vec{\alpha}_{s})$ is an {\bf extension} of a body 
 $(\vec{\beta}_1,\ldots,\vec{\beta}_t)$ iff $t\leq s$ and $\vec{\alpha}_1=\vec{\beta}_1,\ldots,\vec{\alpha}_t=\vec{\beta}_t$.
 
\begin{lemma}\label{golemma}
Consider any $n$ with $1\leq n\leq k$ and any two bodies $B$ and $C$, where $B$ is nonempty.

1. If $\mathbb{SIM}_n(B,C)$ is positive, then, for every extension $B'$ of $B$,  $\mathbb{SIM}_n(B',C)=\mathbb{SIM}_n(B,C)$. 

2. If $\mathbb{SIM}_n(B,C)$ is negative, then, for every extension $C'$ of $C$, $\mathbb{SIM}_n(B,C')=\mathbb{SIM}_n(B,C)$.

3. Whenever  $\mathbb{SIM}_n(B,C)$ is negative, the size of $B$ does not exceed $\mathfrak{e}$. 
\end{lemma}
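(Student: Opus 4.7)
Parts 1 and 2 will follow from a deterministic-prefix observation about $\mathbb{SIM}_n$. At each iteration of LOOP, the procedure's next action depends only on the configuration of ${\cal H}_n$ reached so far and, when it advances, on the next not-yet-fetched movesequence of $B_1$ or $B_2$; it never inspects movesequences beyond those already fetched. Thus, if during the computation of $\mathbb{SIM}_n(B,C)$ the procedure never fetches past position $\mathrm{size}(B)$ of $B_1$ (resp.\ past position $\mathrm{size}(C)$ of $B_2$), then any extension of the corresponding input produces an identical run and an identical output. For part~1, a positive output means termination happened in an iteration where $\vec{\nu}$ was empty and $y_2$ had just reached $\mathrm{size}(C)$; in that iteration $y_1$ is not modified, and in no earlier iteration could the procedure have tried to fetch past position $\mathrm{size}(B)$ of $B_1$ --- otherwise a negative return would already have occurred and the original output could not have been positive. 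Part~2 is symmetric, with the roles of $(B_1,y_1)$ and $(B_2,y_2)$ swapped.

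For part~3, the plan is to count the moves that ${\cal H}_n$ makes in the antecedent of $F'(n-1)\mli F'(n)$ across all simulated episodes. Every iteration of LOOP with nonempty $\vec{\nu}$ either increments $y_1$ by one or triggers the negative return; since $y_1$ starts at $1$ and, at the moment of a negative return, equals $\mathrm{size}(B)$, there are exactly $\mathrm{size}(B)$ iterations in which $\vec{\nu}$ is nonempty. By definition, each of these contributes at least one ${\cal H}_n$-move in the antecedent, so ${\cal H}_n$ makes at least $\mathrm{size}(B)$ moves in the antecedent across the entire simulation.

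To bound this total from above by $\mathfrak{e}$, observe that the concatenation of the successive episodes is a single continuous run of ${\cal H}_n$ on $F'(n-1)\mli F'(n)$ in which the cumulative $\oo$-moves are precisely the movesequences fetched from $B$ and $C$. Restricting this run to the antecedent yields a legal play of $F'(n-1)$ (with roles reversed); since $F'(n-1)$ has the same choice-connective structure as $F'(0)$, the number of $\oo$-labeled moves in any legal run of it is bounded by $\mathfrak{e}$. But the $\pp$-moves made by ${\cal H}_n$ inside the antecedent of $F'(n-1)\mli F'(n)$ coincide with the $\oo$-moves of this restricted sub-run, so their total number --- and hence $\mathrm{size}(B)$ --- is at most $\mathfrak{e}$.

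The delicate point to handle carefully is the assertion that the episode-by-episode, prefix-driven simulation really does reproduce what ${\cal H}_n$ would do in a single honest run against the concatenated adversary sequence. Here Lemma~\ref{m29a}'s bound of $\mathfrak{L}$ steps per episode is essential: it guarantees that within each episode ${\cal H}_n$ has enough time to issue every response it was ``ever going to make'' before the next adversary movesequence is supplied, so that gluing the episodes neither loses nor duplicates any ${\cal H}_n$-move and the count above is correct. Once this bookkeeping is in place, all three clauses reduce to the short arguments sketched above.
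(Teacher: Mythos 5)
Your treatment of clauses 1 and 2 is correct and in spirit the same as the paper's (which dismisses them as a ``straightforward analysis''): the procedure's run on $(B,C)$ only ever looks at prefix positions of $B_1$ and $B_2$, a positive return precludes ever reaching the point where a longer $B_1$ would be consulted, and symmetrically for a negative return and $B_2$.

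Clause 3 has a genuine gap. You correctly count that a negative return forces exactly $\mathrm{size}(B)$ iterations with nonempty $\vec{\nu}$, hence at least $\mathrm{size}(B)$ moves by ${\cal H}_n$ in the antecedent. But to conclude $\mathrm{size}(B)\leq\mathfrak{e}$ you assert that ``restricting this run to the antecedent yields a legal play of $F'(n-1)$.'' That is not justified: $B$ and $C$ are \emph{arbitrary} bodies, so the adversary's antecedent moves (fetched from $B$) may be illegal, and once the position is $\oo$-illegal the stipulation that ${\cal K}'$ plays legally no longer constrains ${\cal H}_n$ at all --- it could, in principle, make more than $\mathfrak{e}$ antecedent moves. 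The paper closes exactly this hole not by arguing legality of the simulated run, but by a retroactive ``well-behavedness'' modification of ${\cal K}'$ (footnoted): since an $(\mathfrak{e}+1)$th antecedent move is always illegal with respect to $F'(n-1)$, it is pointless, so WLOG the machine is modified never to make it, in \emph{any} position, legal or not. That uniform cap on ${\cal H}_n$'s antecedent moves is what the counting argument actually needs. Also, your concluding paragraph about the ``delicate point'' of gluing episodes via Lemma~\ref{m29a} is a red herring for this lemma: the $\mathbb{SIM}_n$ episodes are contiguous by construction (each iteration simulates the next $\mathfrak{L}$ steps of the same branch), so the concatenation \emph{is} a single run of ${\cal H}_n$ with no gluing issue; Lemma~\ref{m29a} becomes relevant only in the later Lemma~\ref{wuhan} where one needs that ${\cal H}_n$ has finished responding, not merely that its responses are counted correctly.
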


\begin{proof} Clauses 1-2 can be verified through a straightforward analysis of the work of $\mathbb{SIM}_n$. For clause 3, assume $\mathbb{SIM}_n(B,C)=-\vec{\omega}$, and let $s$ be the size of $B$. Observe that, in the process of computing $\mathbb{SIM}_n(B,C)$, all negative values that the variable $S$ ever takes, including its last value $-\vec{\omega}$, are nonempty. All such negative values consist of moves made by ${\cal H}_n$ in the antecedent of $F'(n-1)\mli F'(n)$, where ${\cal H}_n$ plays in the role of $\oo$. From the work of $\mathbb{SIM}_n$ we can see that altogether there are $s$ such values. 
Remember that $\mathfrak{e}$ is the maximum number of $\oo$-labeled moves in any legal run of $F'(n-1)$. So, we may (retroactively) assume that 
${\cal H}_n$ never makes an $(\mathfrak{e}+1)$th move in the antecedent of  $F'(n-1)\mli F'(n)$, 
because making such an always-illegal move is pointless.\footnote{Otherwise, if our assumption is not satisfied, ${\cal H}_n$  (${\cal K}'$, to be more accurate) 
can be easily modified so as to satisfy it while still winning the corresponding game. Such a modification 
would only impose a constant --- and hence safely ignorable --- space overhead on the work of the machine.} This means that  $s\leq \mathfrak{e}$, as desired.\end{proof}

\subsection{Aggregations}\label{saggr}
For a body $B=(\vec{\alpha}_1,\ldots,\vec{\alpha}_s)$, we will write $B^{\mbox{\em odd}}$ (resp. $B^{\mbox{\em even}}$) to denote the body $(\vec{\alpha}_1,\vec{\alpha}_3,\ldots)$ (resp. $(\vec{\alpha}_2,\vec{\alpha}_4,\ldots)$) obtained from $B$ by deleting each $\vec{\alpha}_i$ with an even (resp. odd) $i$.   

By an {\bf entry} we shall mean a pair $E=[n,B]$, where $n$, called the {\bf index} of $E$ (and, correspondingly,   $E$ said to be {\bf $n$-indexed}), is an element of $\{0,\ldots,k\}$, and $B$, called the {\bf body of $E$}, is a nonempty body. The {\bf size} of an entry $E$ should be understood as the size of its body.

An {\bf aggregation} is a (possibly empty) sequence  
$A=\seq{E_1,\ldots,E_r}$ of entries such that:
\begin{description}
  \item[(i)\ ] \ The indices of the entries of $A$ are strictly increasing. That is, the index of any given entry is strictly smaller than the index of any entries to the right of it.
\item[(ii)\ ] Each odd-size entry is to the left of each even-size entry. 
\item[(iii)] The sizes of the odd-size entries are strictly decreasing. That is, the size of any odd-size entry is strictly smaller than the size of any (odd-size) entry to the left of it.
\item[(iv)] The sizes of the even-size entries are strictly increasing. That is, the size of any even-size entry is strictly smaller than the size of any (even-size) entry to the right of it.
\end{description}  

We say that an aggregation is {\bf passive} iff it has a $k$-indexed odd-size entry.
Otherwise it is {\bf active}.

The {\bf activity triple} of an active aggregation $A$ is $(n,L,R)$, where: 
\begin{enumerate}
  \item $n$ is the smallest element of $\{0,\ldots,k\}$ which is greater than the index of any odd-size entry of $A$. 
  \item If $A$ has no entry whose index is $n-1$,\footnote{Which, by condition 1, is the same as to say that $n=0$.}  then $L$ is the body $(\seq{})$. Otherwise, $L$ is the body of the $(n-1)$-indexed (and thus the rightmost odd-size) entry of $A$. 
  \item If $A$ has no entry whose index is $n$, then $R$ is the empty body $()$. Otherwise, 
$R$ is the body of the $n$-indexed (and thus the leftmost even-size) entry of $A$.  
\end{enumerate}

\subsection{The loop $\mathbb{MAIN}$ and the work of ${\cal M}_k$}\label{sn}

Now we are ready to finalize our description of the work of ${\cal M}_k$. This is a machine that creates an aggregation-holding variable $A$, initializes it to the empty aggregation $\seq{}$, and then goes into the loop $\mathbb{MAIN}$ described below. 

\begin{quote} $\mathbb{MAIN}$:  Act depending on whether $A$ is active or passive.  
 
{\em Case 1}: $A$ is active.   Compute the value $S$ of $\mathbb{SIM}_n(L^{\mbox{\em odd}},R^{\mbox{\em even}})$, where $(n,L,R)$ is the activity triple of $A$. Then act depending on whether $S$ is positive or negative.

{\em Subcase 1.1}: $S$ is positive, namely, $S=+\vec{\omega}$. Then:
\begin{description} 
  \item[(i)] \ If $A$ has a $n$-indexed entry, then modify $A$ by adding $\vec{\omega}$ as a new (last) movesequence to the body of that entry. 
  \item[(ii)] Otherwise, modify $A$ by inserting into it the entry $[n,(\vec{\omega})]$ so that it is to the right of all (old) odd-size entries and to the left of all even-size entries.
\end{description}  
In either case, let $A'$ be the resulting list of entries, and $E$ be its $n$-indexed entry. Delete all odd-size entries in $A'$ other than $E$ whose sizes are not greater than the size of $E$.  Now update $A$ to the resulting aggregation, and repeat $\mathbb{MAIN}$.
  
{\em Subcase 1.2}: $S$ is negative, namely, $S=-\vec{\omega}$. Notice that then $n>0$ (because $\mathbb{SIM}_0$ never returns a negative value), and $A$ has an $(n-1)$-indexed odd-size entry $E$. Modify $A$ by adding $\vec{\omega}$ as a new (last) movesequence  to the body of $E$.   
 Then, in the resulting list of entries, delete all even-size entries other than (the updated) $E$ whose sizes are not greater than the size of (the updated) $E$.  Now update $A$ to the resulting aggregation, and repeat $\mathbb{MAIN}$.

{\em Case 2}: $A$ is passive.  Let $B=(\vec{\alpha}_1,\ldots,\vec{\alpha}_s)$ be the body of the last ($k$-indexed) entry of $A$. Scan the run tape and count the total number of $\pp$-labeled moves on it. If that number is smaller than the total number of moves in (all movesequences of) $B^{\mbox{\em odd}}$, make the moves of $\vec{\alpha}_s$ in the real play, in the same order as they appear in $\vec{\alpha}_s$; if not, make no moves. In either case, then poll the run tape again to see if it contains an  $((s+1)/2)$th $\oo$-labeled move $\theta$. If not, continue polling repeatedly, looking for such a $\theta$. If and when such a $\theta$ is found,  check if it is a legal move of $F'(k)$ in the corresponding position (i.e., in the position consisting of the labmoves listed on the run tape on the left of $\theta$). If $\theta$ is illegal, retire, i.e. go into an infinite loop that consumes no space and makes no moves. Suppose now $\theta$ is legal. Let $\omega$ be $\theta$ if the latter is reasonable with respect to $F'(k)$ in the corresponding position. Otherwise, if $\theta$ is unreasonable, it must have a suffix ``$.c$'' for a certain ``unreasonably long'' constant $c$; in this case, let $\omega$ be the result of replacing in $\theta$ the suffix $c$ by $0$. In either case, update $A$ by adding  $\seq{\omega}$ as a new (last) movesequence to the body of the $k$-indexed entry,  and repeat $\mathbb{MAIN}$. 
\end{quote}

\subsection{The adequacy of  ${\cal M}$}\label{stf}
 Our main purpose now is to verify that ${\cal M}_k$ indeed wins $F'(k)$ and hence ${\cal M}$ wins $F(x)$. The polynomial space complexity of $\cal M$ will be established at the very end of this subsection.

For the rest of the present subsection, when analyzing the work and behavior of ${\cal M}_k$, we will implicitly have some arbitrary but fixed computation branch (``play'') of ${\cal M}_k$ in mind. So, for instance, when we say ``the $i$th iteration of $\mathbb{MAIN}$'', it should be understood in the context of that branch.   

In what follows, $I$ will stand for the set of positive integers $i$ such that  $\mathbb{MAIN}$ is iterated at least $i$ times. Next, for each $i\in I$, $A_i$ will stand for the value of the aggregation/variable $A$ at the beginning  of the $i$th iteration of $\mathbb{MAIN}$. 

\begin{lemma}\label{beijing}
$I$ is finite, i.e., $\mathbb{MAIN}$ is iterated only a finite number of times.
\end{lemma}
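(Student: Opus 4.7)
\noindent{\bf Proof plan.} The plan is to derive a contradiction from the assumption that $I$ is infinite, in two stages: first bound the number of iterations executing Case~2, then rule out any infinite tail of consecutive Case~1 iterations. For the first stage, observe that every completing iteration of Case~2 extracts a fresh $\oo$-labeled move from the run tape---specifically, the $((s+1)/2)$th one, where $s$ is the (necessarily odd) body-size of the $k$-indexed entry at the moment Case~2 is entered---and appends a corresponding singleton movesequence $\seq{\omega}$ to that body. Consecutive Case~2 completions are separated by Subcase~1.1 iterations at $n=k$ that bump this body-size from even back to odd, so the completions scan off successive $\oo$-moves from the real play of $F'(k)$. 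By definition of $\mathfrak{d}$, any legal reasonable run contains at most $\mathfrak{d}$ moves in total, so after at most $\mathfrak{d}$ Case~2 completions the next attempt either polls forever for a nonexistent move or finds an illegal one and sends ${\cal M}_k$ into its infinite retiring loop; either way, $\mathbb{MAIN}$ never iterates past that point. Hence only finitely many Case~2 iterations ever start, and if $I$ were infinite it would have an infinite suffix lying entirely in Case~1.

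For the second stage, I would show that no such infinite Case~1 suffix is possible. Throughout such a suffix $A$ stays active, so the activity index $n$ remains in $\{0,\ldots,k\}$ and Subcase~1.1 is never triggered at $n=k$. The key tool is Lemma~\ref{golemma}(3): whenever Subcase~1.2 fires at triple $(n,L,R)$, the body $L$ of the $(n{-}1)$-indexed odd-size entry satisfies $|L^{\mbox{\em odd}}|\leq \mathfrak{e}$, whence $|L|\leq 2\mathfrak{e}-1$. I would then assign to each aggregation $A$ a well-founded lexicographic measure $\mu(A)$ whose coordinates, indexed by $j=k,k{-}1,\ldots,0$, record the ``status'' of the $j$-indexed entry of $A$ (absent; or even-size of some size; or odd-size of some size), with higher-index coordinates dominating. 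With an appropriate ordering on statuses, Subcase~1.1 at index $n$ raises coordinate $n$ to ``odd'' (while only trimming smaller-index odd-size coordinates), and Subcase~1.2 at index $n$ lowers coordinate $n{-}1$ from ``odd'' to ``even'' (while only trimming higher-index even-size coordinates); both transitions are then arranged to be strictly lex-descending.

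The hard part will be pinning down the order on entry-statuses so that \emph{both} subcases of Case~1 strictly decrease $\mu$. The asymmetry between Subcase~1.1---which may simultaneously create a new entry, flip an even-size entry to odd, and delete several odd-size entries at smaller indices---and Subcase~1.2---which flips the rightmost odd-size entry to even and may delete several even-size entries at higher indices---must be reconciled within one uniform measure. Extra care is also needed in the degenerate case where $\mathbb{SIM}_n(L^{\mbox{\em odd}},R^{\mbox{\em even}})=+\seq{}$ in Subcase~1.1: the $n$-indexed entry's body then grows by an empty movesequence with no new simulated move, yet $\mu$ must still strictly decrease, essentially on the grounds that the parity of that entry's size is being flipped. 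The bound $|L|\leq 2\mathfrak{e}-1$ supplied by Lemma~\ref{golemma}(3), combined with the finiteness of the index range $\{0,\ldots,k\}$, is what ultimately makes $\mu$ well-founded and closes the contradiction.
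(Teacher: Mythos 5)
Your first stage is fine: each completed Case~2 iteration consumes a fresh $\oo$-labeled move of the (legal, reasonable) real play, and there are at most $\mathfrak{d}$ of those, so only finitely many Case~2 iterations complete before $\mathbb{MAIN}$ either retires or polls forever. (The paper instead handles all three kinds of iteration uniformly: it defines a single relation $\prec$ on aggregations, proves $A_i\prec A_{i+1}$ for every consecutive pair, and counts the finitely many $\approx$-classes.)

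The second stage is where the proposal genuinely breaks, and the issue is deeper than an unresolved ``hard part''. A lexicographic measure whose $j$-th coordinate records the status of the $j$-indexed entry (absent / odd of some size / even of some size), with higher $j$ dominating, cannot be made monotone by \emph{any} choice of per-coordinate order $<_j$. Fix $j<k$. A Subcase~1.1 iteration can create a fresh size-$1$ entry at index $j$ with nothing higher touched, forcing ``odd, size $1$'' $<_j$ ``absent''. A Subcase~1.2 iteration with activity index $j{+}1$ and no small even-size entries to delete passes the $j$-indexed entry from odd size $1$ to even size $2$ with nothing higher touched, forcing ``even, size $2$'' $<_j$ ``odd, size $1$''. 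And a Subcase~1.2 iteration with activity index $\leq j$ can delete the $j$-indexed even-size-$2$ entry as the highest-indexed casualty, forcing ``absent'' $<_j$ ``even, size $2$''. These three demands are cyclic, so no strict order $<_j$ exists. The step you wave past --- that Subcase~1.2 trims even-size entries at \emph{higher} indices, which in your lex scheme outrank the coordinate $n{-}1$ you focus on --- is exactly where the conflict lives. The paper avoids it by keying the order on entry \emph{size} rather than entry \emph{index}: its $\prec$ scans sizes from large to small and compares the indices of the size-$s$ entries (larger index better for odd $s$, smaller for even $s$), and the bound $(\ref{bodycond})$ on sizes, obtained as you do from clause~3 of Lemma~\ref{golemma}, then yields finitely many equivalence classes. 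Re-keying your coordinates by size, with this parity-dependent index comparison, is the missing move.
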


\begin{proof} We first claim that 
\begin{equation}\label{bodycond}
\mbox{\em For any $i\in I$ and any entry $E$ of $A_i$, the size of $E$ does not exceed $2\mathfrak{e}+ 1$.} 
\end{equation}
To verify (\ref{bodycond}), deny it for a contradiction. Let then $i$ be the smallest number in $I$ such that $A_i$ has 
 an $(2\mathfrak{e}+2)$-size, $n$-indexed entry $[n,(\vec{\alpha}_1,\ldots,\vec{\alpha}_{2\mathfrak{e}+2})]$ --- it is not hard to see that such an $i$ exists, and $i> 1$ because $A_1$ has no entries. The only way the above entry could have emerged in $A_i$ is that $A_{i-1}$ contained the entry $[n,(\vec{\alpha}_1,\ldots,\vec{\alpha}_{2\mathfrak{e}+1})]$, and its body ``grew'' into $(\vec{\alpha}_1,\ldots,\vec{\alpha}_{2\mathfrak{e}+2})$ on the transition from $A_{i-1}$ to $A_i$.  
This, in turn, obviously means that the activity triple of $A_{i-1}$ was $(n+1,(\vec{\alpha}_1,\ldots,\vec{\alpha}_{2\mathfrak{e}+1}), B)$ for a certain body $B$, and $\mathbb{SIM}_{n+1}((\vec{\alpha}_1,\ldots,\vec{\alpha}_{2\mathfrak{e}+1})^{\mbox{\em odd}}, B^{\mbox{\em even}})=-\vec{\alpha}_{2\mathfrak{e}+2}$. This, however, is impossible by clause 3 of Lemma \ref{golemma}. 

Next, we define the binary relation $\prec$ on aggregations by stipulating that $B\prec A$ iff there is a positive integer $s$ such that the following three conditions are satisfied:
\begin{enumerate}
  \item $A$ has an entry $E_a$ of size $s$.
  \item  If $B$ has an entry $E_b$ of size $s$, then:
  \begin{enumerate}  
    \item if $s$ is odd, then the index of $E_a$ is greater than the index of $E_b$;
    \item if $s$ is even, then the index of $E_a$ is smaller than the index of $E_b$.
\end{enumerate}
\item For any integer $t$ with $t>s$, whenever one of the two aggregations $A,B$ has an entry of size $t$, so does the other, and the indices of the two entries are the same. 
\end{enumerate}

It is not hard to see (left to the reader) that 
\begin{equation}\label{ff1}
\mbox{\em $\prec$ is transitive and irreflexive.}
\end{equation}
 It is also easy to see (again left to the reader) that
\begin{equation}\label{ff2}
\mbox{\em For any $i$ with $(i+ 1)\in I$, we have $A_i\prec A_{i+1}$.}
\end{equation}

Next, for aggregations $A$ and $B$, we write $A\approx B$ iff neither $A\prec B$ nor $B\prec A$. In other words, $A\approx B$ holds iff, for any positive integer $t$, whenever one of the two aggregations has an entry of size $t$, so does the other, and the indices of the two entries are the same. Obviously $\approx$ is an equivalence relation. In view of (\ref{bodycond}), 
 it is also clear that 
\begin{equation}\label{ff3}
\mbox{\em $\approx$ partitions the set of all possible aggregations into (only) finitely many equivalence classes.} 
\end{equation}

Now, for a contradiction, assume $I$ is infinite. Then, by (\ref{ff2}), we have an infinite chain \mbox{$A_1\prec A_2\prec A_3\prec\cdots$.} In view of (\ref{ff1}), all aggregations of this chain belong to different $\approx$-equivalence classes, meaning that there are infinitely many such classes. However, (\ref{ff3}) tells us that this is not the case. 
 \end{proof}

We say that two bodies are {\bf consistent} with each other iff one is an extension of the other. This, of course, includes the case of their being simply equal. 

\begin{lemma}\label{jinan}
Consider any $n\in\{0,\ldots,k\}$ and any $i,j\in I$.  Suppose $A_i$ has an entry $[n,B_i]$, and $A_j$ has an entry $[n,B_j]$.  Then $B_i$ and $B_j$ are consistent with each other. 
\end{lemma}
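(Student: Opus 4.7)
The plan is to proceed by strong induction on $\max(i,j)$, establishing the statement jointly for all $n \in \{0,\ldots,k\}$. The base case $\max(i,j) = 1$ is immediate since $A_1 = \seq{}$ has no entries at all. For the inductive step, assuming without loss of generality $i < j$, I would examine how $A_j$ arises from $A_{j-1}$ in the $j$-th iteration of $\mathbb{MAIN}$.

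If the $n$-indexed entry is unchanged in the transition $A_{j-1}\to A_j$, the conclusion reduces directly to the induction hypothesis applied to $A_i$ and $A_{j-1}$. If the $n$-indexed entry is extended in this transition, then $A_{j-1}$ contains $[n,B_{j-1}]$ and $B_j$ is $B_{j-1}$ with a single new movesequence $\omega$ appended. The induction hypothesis gives consistency between $B_i$ and $B_{j-1}$. If $B_i$ is a (possibly non-strict) prefix of $B_{j-1}$, then $B_i$ is also a prefix of $B_j$ and we are done. If instead $B_i$ strictly extends $B_{j-1}$, I must show that the $(|B_{j-1}|+1)$-th movesequence $\omega'$ of $B_i$ coincides with $\omega$. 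Here $\omega$ is the output of an invocation of $\mathbb{SIM}_m$ at step $j-1$, where $m = n$ under Subcase 1.1 and $m = n+1$ under Subcase 1.2; its inputs $(L^{\mbox{\em odd}},R^{\mbox{\em even}})$ are read from the bodies of the $(m-1)$- and $m$-indexed entries in $A_{j-1}$. The earlier element $\omega'$ arose from an analogous $\mathbb{SIM}_m$-call at some step $i_0 \leq i$, with inputs $(\widetilde L^{\mbox{\em odd}},\widetilde R^{\mbox{\em even}})$. Applying the induction hypothesis, both $R$ and $\widetilde R$ are bodies of the same-indexed entry at moments when it had size $|B_{j-1}|$, hence they coincide in content (namely, with $B_{j-1}$); and $L$ and $\widetilde L$ are consistent. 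Lemma \ref{golemma} then forces $\omega = \omega'$: clause~1 in Subcase 1.1 (positive output, with $L^{\mbox{\em odd}}$ and $\widetilde L^{\mbox{\em odd}}$ related by extension in the direction that the clause tolerates), and clause~2 in Subcase 1.2 (negative output, with $R^{\mbox{\em even}}$ and $\widetilde R^{\mbox{\em even}}$ playing the analogous role). The case in which the $n$-indexed entry is freshly created in this transition is handled identically, with $B_{j-1}$ replaced by the empty body; and the Case 2 subcase, applicable only when $n=k$, is immediate since within a fixed computation branch the sequence of $\oo$-labeled moves read off the run tape is uniquely determined.

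The hard part will be orchestrating the joint induction so that, each time Lemma \ref{golemma} is invoked, the direction of extension between the two $\mathbb{SIM}_m$-input tuples matches the sign of the output: clause~1 demands that the first argument be extended when the output is positive, and clause~2 demands that the second argument be extended when the output is negative. This alignment is what the parity constraints (ii)--(iv) in the definition of aggregation, together with the parity-respecting deletion rules of Subcases 1.1 and 1.2, are engineered to guarantee; but verifying that this alignment is preserved through all possible deletions and re-creations of entries is where the bookkeeping becomes the most delicate.
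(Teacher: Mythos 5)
Your proposal is correct and follows essentially the same strategy as the paper's proof: complete induction on a well-founded measure of $(i,j)$, reduction to comparing the two $\mathbb{SIM}_m$ invocations that produced the respective last movesequences, and an appeal to Lemma \ref{golemma} to show those invocations agree. The paper uses $i+j$ where you use $\max(i,j)$, and it reduces symmetrically on both indices (first eliminating the case where either $A_{i-1}$ or $A_{j-1}$ already contains the respective $n$-indexed entry, then splitting on whether the two body sizes agree, and finally on the parity of the common size), whereas you reduce only on $j$ and trace $\omega'$ back to an auxiliary iteration $i_0$; both decompositions deliver the same core comparison.

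Two small corrections are in order. First, in your Subcase 1.2 bookkeeping the roles of $L$ and $R$ are swapped: there, with $m=n+1$, the body of the entry being extended (the $n$-indexed one, i.e.\ what you call $B_{j-1}$) is the \emph{first} input $L$ to $\mathbb{SIM}_m$, not $R$, so it is $L^{\mbox{\em odd}}$ that coincides across the two calls while $R^{\mbox{\em even}}$ may differ by extension. Second, and more substantively, the ``alignment'' you flag as the hard part of the bookkeeping is in fact automatic and requires no further orchestration: the parity of the position $|B_{j-1}|+1$ alone determines whether that movesequence was appended under Subcase 1.1 (odd position, active index $n$, positive output) or Subcase 1.2 (even position, active index $n+1$, negative output), so the two $\mathbb{SIM}_m$ invocations under comparison are always of the same subcase and hence the same output sign. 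The stable argument then equals $B_{j-1}$ in both, the variable argument is a consistent pair, and whichever of the two is the longer, the matching clause of Lemma \ref{golemma} applies. This is precisely the observation the paper makes explicitly by splitting on the parity of $a$; deletions and re-creations of entries create no additional difficulty because the induction hypothesis already controls all $n$-indexed bodies at strictly earlier indices.
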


\begin{proof} Assume the conditions of the lemma. Note that $i,j>1$, because $A_1$ has no entries. The case $i=j$ is trivial, so we shall assume that $i<j$. 
The case of either $B_i$ or $B_j$ being empty is also trivial, because the empty body is consistent with every body. Thus, we shall assume that $B_i$ looks like $(\vec{\alpha}_1,\ldots,\vec{\alpha}_a,\vec{\alpha}_{a+1})$ and $B_j$ looks like $(\vec{\beta}_1,\ldots,\vec{\beta}_b,\vec{\beta}_{b+1})$ for some $a,b\geq 0$. 

We prove the lemma by complete  induction on $i+ j$. Assume the aggregation $A_{i-1}$ contains the entry  $[n,B_i]$. Since $(i-1)+j<i+j$, the induction hypothesis applies, according to which  $B_i$ is consistent with $B_j$, as desired. The case of $A_{j-1}$ containing  the entry  $[n,B_j]$ is  similar. Now, for the rest of the present proof, we assume that 
\begin{equation}\label{hunqian}
\mbox{\em $A_{i-1}$ does not have the entry $[n,B_i]$, and  $A_{j-1}$ does not have the entry $[n,B_j]$.}
\end{equation}
Assume $a<b$. Note that then $b\geq 1$. In view of this fact and (\ref{hunqian}), it is easy to see that $A_{j-1}$ contains an $n$-indexed entry whose body is $(\vec{\beta}_1,\ldots,\vec{\beta}_b)$. By the induction hypothesis, $(\vec{\alpha}_1,\ldots,\vec{\alpha}_a,\vec{\alpha}_{a+1})$ is consistent with $(\vec{\beta}_1,\ldots,\vec{\beta}_b)$, meaning (as $a+1\leq b$) that the latter is an extension of the former. Hence, $(\vec{\alpha}_1,\ldots,\vec{\alpha}_a,\vec{\alpha}_{a+1})$ is also consistent with $(\vec{\beta}_1,\ldots,\vec{\beta}_{b+1})$, as desired. The case of $b<a$ will be handled in a similar way. Thus, for the rest of this proof, we further assume that $a=b$.  

Next we claim that 
\begin{equation}\label{hun}
(\vec{\alpha}_1,\ldots,\vec{\alpha}_a)=(\vec{\beta}_1,\ldots,\vec{\beta}_b).
\end{equation}
Indeed, the case of $a,b=0$ is trivial. Otherwise, if $a,b\not=0$, in view of (\ref{hunqian}),  obviously   $A_{i-1}$ contains the entry $[n,(\vec{\alpha}_1,\ldots,\vec{\alpha}_a)]$ and  $A_{j-1}$ contains the entry $[n,(\vec{\beta}_1,\ldots,\vec{\beta}_b)]$. Hence, by the induction hypothesis, the two bodies $(\vec{\alpha}_1,\ldots,\vec{\alpha}_a)$ and $(\vec{\beta}_1,\ldots,\vec{\beta}_b)$ are consistent, which, as $a=b$, simply means that they are the same. (\ref{hun}) is thus verified. 
In view of (\ref{hun}), all that now remains to show is that $\vec{\alpha}_{a+1}=\vec{\beta}_{a+1}$ ($=\vec{\beta}_{b+1}$). 

Assume $a$ is even. Analyzing the work of $\mathbb{MAIN}$ and keeping (\ref{hunqian}) in mind, we see that the activity triple of $A_{i-1}$ must be $(n,C,(\vec{\alpha}_1,\ldots,\vec{\alpha}_a))$ for a certain body $C$, and (thus) $
\mathbb{SIM}_{n}(C^{\mbox{\em odd}},(\vec{\alpha}_1,\ldots,\vec{\alpha}_a)^{\mbox{\em even}})=+\vec{\alpha}_{a+1}$. 
Similarly, the activity triple of $A_{j-1}$ is $(n,D,(\vec{\beta}_1,\ldots,\vec{\beta}_a))$ --- which, by (\ref{hun}), is the same as $(n,D,(\vec{\alpha}_1,\ldots,\vec{\alpha}_a))$ --- for a certain body $D$, and 
$\mathbb{SIM}_{n}(D^{\mbox{\em odd}},(\vec{\alpha}_1,\ldots,\vec{\alpha}_a)^{\mbox{\em even}})=+\vec{\beta}_{a+1}$. Here, if $n=0$, both $C$ and $D$ are $(\seq{})$  and hence consistent with each other. Otherwise, if $n>0$, $A_{i-1}$ contains the entry $[n-1,C]$, and $A_{j-1}$ contains the entry $[n-1,D]$. Then, by the induction hypothesis, again, $C$ is consistent with $D$. Thus, in either case, $C$ and $D$ are consistent. Then clause 1 of Lemma \ref{golemma} implies that 
$\mathbb{SIM}_{n}(C^{\mbox{\em odd}},(\vec{\alpha}_1,\ldots,\vec{\alpha}_a)^{\mbox{\em even}})=\mathbb{SIM}_{n}(D^{\mbox{\em odd}},(\vec{\alpha}_1,\ldots,\vec{\alpha}_a)^{\mbox{\em even}})$. In other words,  $\vec{\alpha}_{a+1}=\vec{\beta}_{a+1}$, as desired. 

The case of $a$ being odd is rather similar. In this case, the activity triple of $A_{i-1}$ is $(n+1,(\vec{\alpha}_1,\ldots,\vec{\alpha}_a),C)$ for a certain body $C$, with $\mathbb{SIM}_{n+1}((\vec{\alpha}_1,\ldots,\vec{\alpha}_a)^{\mbox{\em odd}},C^{\mbox{\em even}})=-\vec{\alpha}_{a+1}$. 
And the activity triple of $A_{j-1}$ is $(n+1,(\vec{\alpha}_1,\ldots,\vec{\alpha}_a),D)$ for a certain body $D$, with  
$\mathbb{SIM}_{n+1}((\vec{\alpha}_1,\ldots,\vec{\alpha}_a)^{\mbox{\em odd}},D^{\mbox{\em even}})=-\vec{\beta}_{a+1}$. If either $C$ or $D$ is empty, then 
the two bodies are consistent with each other. Otherwise, if both $C$ and $D$ are nonempty, then $A_{i-1}$ contains the entry $[n+1,C]$, $A_{j-1}$ contains the entry $[n+1,D]$ and hence, by the induction hypothesis, $C$ and $D$ are again consistent. Then clause 2 of Lemma \ref{golemma} implies that 
$\mathbb{SIM}_{n+1}((\vec{\alpha}_1,\ldots,\vec{\alpha}_a)^{\mbox{\em odd}},C^{\mbox{\em even}})=\mathbb{SIM}_{n+1}((\vec{\alpha}_1,\ldots,\vec{\alpha}_a)^{\mbox{\em odd}},D^{\mbox{\em even}})$, meaning that  $\vec{\alpha}_{a+1}=\vec{\beta}_{a+1}$, as desired.   
\end{proof}

For each $n$ with $0\leq n\leq k$, we define  $\mathbb{B}_n$, called the {\bf ultimate body} for $n$,   
as the smallest (smallest-size) body such that, for every $i\in I$, whenever 
$A_i$ has an $n$-indexed entry, $\mathbb{B}_n$ is an extension of the body of that entry. In view of Lemma \ref{jinan}, such a $\mathbb{B}_n$ always exists. 

When $\vec{\alpha}=\seq{\alpha_1,\ldots,\alpha_a}$ is a movesequence and $\xx$ is one of the players $\pp$ or $\oo$, we shall write $\xx\vec{\alpha}$ for the run  $\seq{\xx\alpha_1,\ldots,\xx\alpha_a}$. 

Consider any $n\in\{0,\ldots,k\}$, and let $(\vec{\alpha}_1,\ldots,\vec{\alpha}_s)$ be the ultimate body $\mathbb{B}_n$ for $n$.  We define  $\mathbb{R}_n$, called the {\bf ultimate run} for $n$,  as   the run $\seq{\pp\vec{\alpha}_1,\oo\vec{\alpha}_2,\ldots}$ obtained from $\mathbb{B}_n$ by replacing each $\vec{\alpha}_i$ with $\pp\vec{\alpha}_i$ if $i$ is odd, and with $\oo\vec{\alpha}_i$ if $i$ is even.

Some more notation and terminology. When $\Gamma$ and $\Delta$ are runs, we write $\Gamma\preceq \Delta$ to mean that $\Gamma$ is a (not necessarily proper) initial segment of $\Delta$. Next, as always in CoL, $\gneg \Gamma$ means the result of changing in $\Gamma$ each label $\pp$ to $\oo$ and vice versa.  $\Gamma^{0.}$  means the result of deleting from $\Gamma$ all moves (together with their labels, of course) except those of the form $0.\alpha$, and then further deleting the prefix ``$0.$''  in the remaining moves. Similarly for $\Gamma^{1.}$. Intuitively, when $\Gamma$ is a play of a parallel disjunction $G_0\mld G_1$ of games,  $\Gamma^{0.}$ (resp. $\Gamma^{1.}$) is the play that has taken place --- according to the scenario of $\Gamma$ --- in the $G_0$ (resp. $G_1$) component. We say that $\Gamma$ is {\bf bipartite} iff every move of $\Gamma$ has one of the two prefixes ``$0.$'' or ``$1.$''.  Obviously being bipartite is a necessary (but not sufficient) condition for $\Gamma$ to be a legal run of   $G_0\mld G_1$.

\begin{lemma}\label{wuhan} \ 
\begin{description}  
\item[1.] (a) There is a run $\Gamma_0$ generated by ${\cal H}_0$ such that $\mathbb{R}_0\preceq \Gamma_0$. (b) Furthermore, 
if $\mathbb{R}_{0}$ is a legal and reasonable run of $F'(0)$, then we simply have $\mathbb{R}_0= \Gamma_0$. 
\item[2.]  (a) For every $n\in\{1,\ldots,k\}$, there is a bipartite run $\Gamma_n$ generated by ${\cal H}_n$ such that  $\gneg\mathbb{R}_{n-1}\preceq \Gamma^{0.}_{n}$ and $\mathbb{R}_n\preceq \Gamma^{1.}_{n}$. (b) Furthermore, 
if $\mathbb{R}_{n-1}$ and  $\mathbb{R}_{n}$ are legal and reasonable runs of $F'(n-1)$ and $F'(n)$, respectively, then we simply have  $\gneg\mathbb{R}_{n-1}= \Gamma^{0.}_{n}$ and $\mathbb{R}_n= \Gamma^{1.}_{n}$. 
\end{description} 
\end{lemma}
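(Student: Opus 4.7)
My plan is, for each $n$, to construct an explicit scenario $\sigma_n$ of the adversary's play against ${\cal H}_n$, take $\Gamma_n$ to be the run ${\cal H}_n$ generates against $\sigma_n$, and then verify that $\Gamma_n$ has the prescribed prefix by directly analyzing which call to $\mathbb{SIM}_n$ produced each movesequence appended to $\mathbb{B}_{n-1}$ or $\mathbb{B}_n$ during the iterations of $\mathbb{MAIN}$. Consistency of the bodies across iterations is free from Lemma \ref{jinan}, so once I identify the exact activity triple responsible for each append, I can then use clauses 1 and 2 of Lemma \ref{golemma} to transfer the return value of a short $\mathbb{SIM}_n$-call to the return value on any compatible extension, including the full ultimate bodies.

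For part 1, let $\mathbb{B}_0=(\vec{\alpha}_1,\ldots,\vec{\alpha}_s)$. I take $\sigma_0$ to be the scenario in which the adversary makes the moves of $\vec{\alpha}_{2j}$ all at once at time $j\mathfrak{L}$ for each $j$ with $2j\leq s$, and no other moves. The central claim is that for each $i$ with $2i-1\leq s$, ${\cal H}_0$ emits exactly $\vec{\alpha}_{2i-1}$ during the window $[(i-1)\mathfrak{L},i\mathfrak{L})$ of $\sigma_0$. This follows because, at the iteration of $\mathbb{MAIN}$ in which $\vec{\alpha}_{2i-1}$ was appended via Subcase 1.1, Lemma \ref{jinan} forces the body of the $0$-indexed entry at that moment to be $(\vec{\alpha}_1,\ldots,\vec{\alpha}_{2i-2})$, and hence the activity triple must be $(0,(\seq{}),(\vec{\alpha}_1,\ldots,\vec{\alpha}_{2i-2}))$; then by the very definition of $\mathbb{SIM}_0$, the returned value $+\vec{\alpha}_{2i-1}$ is precisely the moves of ${\cal H}_0$ during $[(i-1)\mathfrak{L},i\mathfrak{L})$ in exactly the scenario $\sigma_0$. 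Concatenating these windows yields $\mathbb{R}_0\preceq \Gamma_0$.

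For part 2 with $n\geq 1$, the approach is analogous but the scenario is richer because both arguments of $\mathbb{SIM}_n$ are now nontrivial. Writing $\mathbb{B}_{n-1}=(\vec{\mu}_1,\ldots,\vec{\mu}_t)$ and $\mathbb{B}_n=(\vec{\nu}_1,\ldots,\vec{\nu}_u)$, I define $\sigma_n$ to feed $\vec{\mu}_1$ into the antecedent at time $0$ and, at each subsequent episode boundary, to feed the next unused odd-indexed $\vec{\mu}$ into the antecedent if ${\cal H}_n$ made at least one antecedent move in the previous episode, and the next unused even-indexed $\vec{\nu}$ into the consequent otherwise --- exactly mirroring the update rule for the variable $R$ inside $\mathbb{SIM}_n$. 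Taking $\Gamma_n$ to be the play of ${\cal H}_n$ against $\sigma_n$ makes bipartiteness automatic. Then, examining each iteration of $\mathbb{MAIN}$ that appended $\vec{\mu}_{2j}$ via Subcase 1.2 or $\vec{\nu}_{2i-1}$ via Subcase 1.1, I use Lemma \ref{jinan} to show that the activity triple at that iteration projects to a prefix of $(\mathbb{B}_{n-1}^{\mbox{\em odd}},\mathbb{B}_n^{\mbox{\em even}})$, and invoke clauses 1 and 2 of Lemma \ref{golemma} to conclude that the returned value of that short $\mathbb{SIM}_n$-call equals the value of $\mathbb{SIM}_n$ on the corresponding prefix of the full scenario $\sigma_n$. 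This pins each appended movesequence to a specific episode of ${\cal H}_n$'s behavior in $\sigma_n$, giving $\gneg \mathbb{R}_{n-1}\preceq \Gamma_n^{0.}$ and $\mathbb{R}_n\preceq \Gamma_n^{1.}$.

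For clause (b) of both parts, I appeal to Lemma \ref{m29a}: under the hypothesis of legality and reasonability, the adversary in $\sigma_n$ is itself playing legally and reasonably, so after the last prescribed adversary move ${\cal H}_n$ can continue for at most $\mathfrak{L}$ more cycles. Any moves ${\cal H}_n$ were to emit during these trailing cycles would be captured by the very next $\mathbb{SIM}_n$-call in $\mathbb{MAIN}$ and appended to the appropriate ultimate body; but by maximality, no such appending can occur, forcing exact equality. The main obstacle I anticipate is the bookkeeping in part 2: $\mathbb{SIM}_n$ branches at each episode according to whether the current episode produced antecedent moves, and the proof must show that this branching is faithfully realized both by the design of $\sigma_n$ and by the temporal order in which $\mathbb{MAIN}$ appends movesequences to $\mathbb{B}_{n-1}$ and $\mathbb{B}_n$. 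Showing that the successive activity triples at iterations touching index $n-1$ or $n$ induce exactly the chain of prefixes required to feed Lemma \ref{golemma} is the combinatorial heart of the argument.
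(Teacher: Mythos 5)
Your approach is essentially the same as the paper's: the scenario $\sigma_n$ you define is precisely the computation branch that a single call $\mathbb{SIM}_n(\mathbb{B}_{n-1}^{\mbox{\em odd}}, \cdot)$ traces, and you correctly identify Lemmas \ref{jinan} and \ref{golemma} as the tools for transferring the short $\mathbb{SIM}_n$-calls that occur during $\mathbb{MAIN}$ to the full ultimate bodies. However, there is a genuine missing ingredient that your outline never establishes: the fact that every ultimate body $\mathbb{B}_n$ has \emph{odd} size (the paper's claims (\ref{hhh}) and (\ref{hhhh}), proved by a sub-induction on $k-n$ bootstrapping from the passivity of the terminal aggregation). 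This parity fact is not a consequence of Lemmas \ref{jinan} or \ref{golemma}, and it is indispensable for clause (b): if $\mathbb{B}_n$ had even size, its last movesequence would be an adversary chunk, so in $\sigma_n$ the simulated ${\cal H}_n$ would still emit moves after the last prescribed adversary input and $\Gamma_n$ would \emph{strictly} extend $\mathbb{R}_n$, defeating the desired equality. Your clause-(b) reasoning also misattributes the source of termination: it is not that ``by maximality no such appending can occur''; rather, after $\sigma_n$'s last adversary move, Lemma \ref{m29a} bounds ${\cal H}_n$'s remaining activity to one further $\mathfrak{L}$-window --- a window that \emph{is} in fact appended (as the final, odd-indexed movesequence of $\mathbb{B}_n$) --- and then guarantees silence, which is what yields $\Delta_n = \Gamma_n$.

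The other piece you flag as ``the combinatorial heart'' and leave unresolved is indeed load-bearing and should not be hand-waved: one has to show that the sequence of intermediate $S$-values of the full $\mathbb{SIM}_n$ call, not merely its final return, coincides with $\mathbb{B}_n^{\mbox{\em odd}}$ on the positive side and $\mathbb{B}_{n-1}^{\mbox{\em even}}$ on the negative side (the paper's equations (\ref{cc1}), (\ref{cc2})). Lemma \ref{golemma} only speaks of final return values, so making the window-by-window identification go through requires the additional observation that no shorter prefix of $\mathbb{B}_{n-1}^{\mbox{\em odd}}$ already yields the same final output (the paper's claim (\ref{chenghaii})); without this, you cannot rule out that the full call finishes ``too early'' and leaves windows of $\sigma_n$ uncovered.
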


\begin{proof} We first verify that 
\begin{equation}\label{hhh}
\mbox{\em The size of $\mathbb{B}_k$ is odd.}
\end{equation}
Indeed, let $g$ be the greatest element of $I$, which exists by Lemma \ref{beijing}. Consider the last, i.e. $g$th, iteration of $\mathbb{MAIN}$. Obviously the aggregation $A=A_g$ dealt with throughout  that iteration is  passive, for otherwise there would be a next iteration. This, by the definition of ``passive'', means that the last --- $k$-indexed --- entry of $A_g$ is odd-size. With some analysis of  Case 2 of $\mathbb{MAIN}$ and keeping Lemma \ref{jinan} in mind,  it can be seen that the body of that entry is $\mathbb{B}_k$, for otherwise, again, there would be a next iteration. 

Next, we verify clause 2 of the lemma (skipping clause 1 for now) simultaneously with the following claim:
\begin{equation}\label{hhhh}
\mbox{\em For every $n\in\{1,\ldots,k\}$, the size of $\mathbb{B}_{n-1}$ is odd.}
\end{equation}

Consider any  $n\in\{1,\ldots,k\}$ and the ultimate bodies $\mathbb{B}_{n-1}=(\vec{\alpha}_1,\ldots,\vec{\alpha}_s)$ and  
$\mathbb{B}_{n}=(\vec{\beta}_1,\ldots,\vec{\beta}_t)$.   We want to show that the size of $\mathbb{B}_{n-1}$ is odd, and that clause 2 of the lemma holds for $n$. Our proof of (\ref{hhhh}) is, in fact, by induction on $k-n$. By (\ref{hhh}) if $k=n$ (i.e., if we are dealing with the basis of induction), and by the induction hypothesis if $k<n$ (i.e., if we are dealing with the inductive step), 
 we have: 
\begin{equation}\label{jjj}
\mbox{\em $t$, i.e., the size of $\mathbb{B}_n$, is odd.}
\end{equation}

 Let $i$ be the smallest member of $I$ such that $A_i$ 
contains the entry $[n,\mathbb{B}_n]$. Obviously (\ref{jjj}) implies  that ($i> 1$ and) $A_{i-1}$ has the entry $[n-1,C]$ for 
a certain odd-size body $C$ and, with $(n,C,(\vec{\beta}_1,\ldots,\vec{\beta}_{t-1}))$ being the activity triple of 
$A_{i-1}$,  $\mathbb{SIM}_n(C^{\mbox{\em odd}},(\vec{\beta}_1,\ldots,\vec{\beta}_{t-1})^{\mbox{\em even}})=+\vec{\beta}_t$. By  Lemma \ref{jinan}, $\mathbb{B}_{n-1}$ is consistent with $C$, which, in view of $\mathbb{B}_{n-1}$'s being ultimate, means that 
$\mathbb{B}_{n-1}$ is an extension of $C$. Hence, by clause 1 of Lemma \ref{golemma}, 
\begin{equation}\label{chenghai} 
\mathbb{SIM}_n((\vec{\alpha}_1,\ldots,\vec{\alpha}_s)^{\mbox{\em odd}},(\vec{\beta}_1,\ldots,\vec{\beta}_{t-1})^{\mbox{\em even}})=+\vec{\beta}_t.
\end{equation}
  
Also, note that, since $\mathbb{B}_{n-1}=(\vec{\alpha}_1,\ldots,\vec{\alpha}_s)$ is an extension of the odd-size $C$, we have $s\geq 1$. 

We now claim the following: 
\begin{equation}\label{chenghaii} 
\mbox{\em For no  $r$ with $1\leq r< s$ do we have $\mathbb{SIM}_n((\vec{\alpha}_1,\ldots,\vec{\alpha}_r)^{\mbox{odd}},(\vec{\beta}_1,\ldots,\vec{\beta}_{t-1})^{\mbox{even}})=+\vec{\beta}_t$.}
\end{equation}

Indeed, for a contradiction, assume $r$ is the smallest number with $1\leq r< s$ such that 
\begin{equation}\label{55}
\mathbb{SIM}_n((\vec{\alpha}_1,\ldots,\vec{\alpha}_r)^{\mbox{\em odd}},(\vec{\beta}_1,\ldots,\vec{\beta}_{t-1})^{\mbox{\em even}})=+\vec{\beta}_t.
\end{equation}
 Note that $r$ has to be odd, for otherwise $(\vec{\alpha}_1,\ldots,\vec{\alpha}_r)^{\mbox{\em odd}}=(\vec{\alpha}_1,\ldots,\vec{\alpha}_{r-1})^{\mbox{\em odd}}$ and hence $r-1$ would be a number with $1\leq r-1\leq s$ smaller than $r$ satisfying (\ref{55}).  
Let $j$ be the smallest element of $I$ such that $A_j$ contains the entry $[n-1,(\vec{\alpha}_1,\ldots,\vec{\alpha}_{r+1})]$. The size of this entry is even (because $r$ is odd) and non-zero.   Hence obviously ($j>1$ and) $A_{j-1}$ contains the entry    $[n-1,(\vec{\alpha}_1,\ldots,\vec{\alpha}_{r})]$, and we have 
$\mathbb{SIM}_{n}((\vec{\alpha}_1,\ldots,\vec{\alpha}_{r})^{\mbox{\em odd}},D^{\mbox{\em even}})=-\vec{\alpha}_{r+1}$, where $D$ is either empty or else $A_{j-1}$ contains an $n$-indexed, even-size entry and $D$ is the body of that entry. In either case (in the latter case by Lemma \ref{jinan}), $(\vec{\beta}_1,\ldots,\vec{\beta}_{t-1})$ is an extension of $D$.  Therefore, by clause 2 of Lemma \ref{golemma},  \[\mathbb{SIM}_n((\vec{\alpha}_1,\ldots,\vec{\alpha}_{r})^{\mbox{\em odd}},(\vec{\beta}_1,\ldots,\vec{\beta}_{t-1})^{\mbox{\em even}})=-\vec{\alpha}_{r+1}.\] The above, however, contradicts (\ref{55}). Claim (\ref{chenghaii}) is thus proven. 

We can now see why (\ref{hhhh}) holds for $n$, i.e., why the size of $\mathbb{B}_{n-1}=(\vec{\alpha}_1,\ldots,\vec{\alpha}_s)$ is odd. Indeed, if $s$ was even, then, with the earlier observed fact $s\geq 1$ in mind,  we would have $(\vec{\alpha}_1,\ldots,\vec{\alpha}_{s})^{\mbox{\em odd}}= (\vec{\alpha}_1,\ldots,\vec{\alpha}_{s-1})^{\mbox{\em odd}}$ and hence, by 
(\ref{chenghai}), 
\[\mathbb{SIM}_n((\vec{\alpha}_1,\ldots,\vec{\alpha}_{s-1})^{\mbox{\em odd}},(\vec{\beta}_1,\ldots,\vec{\beta}_{t-1})^{\mbox{\em even}})=+\vec{\beta}_t;\]
claim (\ref{chenghaii}), however, tells us that the above is impossible. This completes our inductive step for (\ref{hhhh}).

Let us now remember the definition of $\mathbb{SIM}_n$ and imagine how (\ref{chenghai}) is computed. Let $+\vec{\psi}_1,\ldots,+\vec{\psi}_d$ be the positive  values that the variable $S$ of the procedure $\mathbb{SIM}_n$ goes through when computing (\ref{chenghai}), and let 
$-\vec{\nu}_1,\ldots,-\vec{\nu}_c$ be the negative values. 
It is clear that $d$ equals the size of $(\vec{\beta}_1,\ldots,\vec{\beta}_{t-1})^{\mbox{\em even}}$ plus one.  In view of (\ref{chenghaii}), it is also not hard to see that $c$ equals the size of $(\vec{\alpha}_1,\ldots,\vec{\alpha}_s)^{\mbox{\em odd}}$   minus one. Since --- as we already know 
--- both $t$ and $s$ are odd, the above means that 

\marginpar{vvv}
\begin{equation}\label{vvv}
\mbox{\em $c=(s-1)/2$ \ and \ $d=(t+1)/2$.}
\end{equation} 

We  now claim that 
\begin{equation}\label{cc1}
(\vec{\psi}_1,\ldots,\vec{\psi}_d)=(\vec{\beta}_1,\ldots,\vec{\beta}_t)^{\mbox{\em odd}}.
\end{equation}
In view of (\ref{vvv}),  the above claim simply means that we have $\vec{\psi}_{(b+1)/2}=\vec{\beta}_{b}$ for each odd member $b$ of $\{1,\ldots,t\}$. But  indeed, consider any such $b$. Let $A_i$ be the smallest number in $I$ such that $A_i$ contains the entry $[n,(\vec{\beta}_1,\ldots,\vec{\beta}_b)]$. Then, since $b$ is odd, $A_{i-1}$ obviously contains the entry  $[n-1,D]$ for a certain odd-size body $D$ such that,
with $(n,D, (\vec{\beta}_1,\ldots,\vec{\beta}_{b-1}))$ being the activity triple of $A_{i-1}$, we have
 $\mathbb{SIM}_n(D^{\mbox{\em odd}},(\vec{\beta}_1,\ldots,\vec{\beta}_{b-1})^{\mbox{\em even}})=+\vec{\beta}_b$. In view of Lemma \ref{jinan},  $(\vec{\alpha}_1,\ldots,\vec{\alpha}_{s})$ is an extension of $D$. Hence, by clause 1 of Lemma \ref{golemma}, 
\begin{equation}\label{jul22a}
\mathbb{SIM}_n((\vec{\alpha}_1,\ldots,\vec{\alpha}_s)^{\mbox{\em odd}},(\vec{\beta}_1,\ldots,\vec{\beta}_{b-1})^{\mbox{\em even}})=+\vec{\beta}_b.
\end{equation} 
But how is the computation of (\ref{jul22a}) different from the computation of (\ref{chenghai})? The two computations 
obviously proceed in exactly the same ways, with the variable $S$ of $\mathbb{SIM}_n$ going through exactly the same values, 
with the only difference that, while the computation of (\ref{jul22a}) stops after $S$ takes its $((b+1)/2)$th positive value 
$+\vec{\psi}_{(b+1)/2}$ and returns that value as $+\vec{\beta}_b$, the computation of (\ref{chenghai}) continues its work further (unless $b=t$) 
until the value of $S$ becomes $+\vec{\psi}_{(t+1)/2}$. As we see, $\vec{\psi}_{(b+1)/2}$ is indeed the same 
as $\vec{\beta}_b$. 

Next we claim that 
\begin{equation}\label{cc2}
(\vec{\nu}_1,\ldots,\vec{\nu}_c)=(\vec{\alpha}_1,\ldots,\vec{\alpha}_s)^{\mbox{\em even}}.
\end{equation}
Our argument here is very similar to the preceding one.  In view of (\ref{vvv}),  the above claim simply means that we have $\vec{\nu}_{b/2}=\vec{\alpha}_{b}$ for each even member $b$ of $\{1,\ldots,s\}$. But  indeed, consider any such $b$.
 Let $A_i$ be the smallest number in $I$ such that $A_i$ contains the entry $[n-1,(\vec{\alpha}_1,\ldots,\vec{\alpha}_b)]$. Then, since $b$ is even and nonzero, $A_{i-1}$ obviously contains the entry $[n-1,(\vec{\alpha}_1,\ldots,\vec{\alpha}_{b-1})]$, and we have  $\mathbb{SIM}_n((\vec{\alpha}_1,\ldots,\vec{\alpha}_{b-1})^{\mbox{\em odd}},D^{\mbox{\em even}})=-\vec{\alpha}_b$, where $D$ is an even-size body such that $D$ is either empty or else $A_{i-1}$ has the entry $[n,D]$.  In view of Lemma \ref{jinan},  $(\vec{\beta}_1,\ldots,\vec{\beta}_{t})$ is an extension of $D$. But (\ref{jjj}) implies that $(\vec{\beta}_1,\ldots,\vec{\beta}_{t})^{\mbox{\em even}}=(\vec{\beta}_1,\ldots,\vec{\beta}_{t-1})^{\mbox{\em even}}$. Thus, $(\vec{\beta}_1,\ldots,\vec{\beta}_{t-1})^{\mbox{\em even}}$ is an extension of $D^{\mbox{\em even}}$. 
Hence, by clause 2 of Lemma \ref{golemma},  
\begin{equation}\label{jul22b}
\mathbb{SIM}_n((\vec{\alpha}_1,\ldots,\vec{\alpha}_{b-1})^{\mbox{\em odd}},(\vec{\beta}_1,\ldots,\vec{\beta}_{t-1})^{\mbox{\em even}})=-\vec{\alpha}_b.
\end{equation} But how is the computation of (\ref{jul22b}) different from the computation of (\ref{chenghai})? Again, they proceed in exactly the same ways, with the variable $S$ going through exactly the same values, with the only difference that, while the computation of (\ref{jul22b}) stops after $S$ takes its $(b/2)$th negative value $-\vec{\nu}_{b/2}$ and returns that value as $-\vec{\alpha}_b$, the computation of (\ref{chenghai}) continues its work further, with   (\ref{vvv}) guaranteeing  that  the latter does not stop ``too early'', i.e., before the value of its $S$ becomes $-\vec{\nu}_{b/2}$.
 As we see, $\vec{\nu}_{b/2}$ is indeed the same as $\vec{\alpha}_b$.

Having verified (\ref{cc1}) and (\ref{cc2}), let us imagine the computation of (\ref{chenghai}) once again. Let $a$ be the size of $(\vec{\alpha}_1,\ldots,\vec{\alpha}_{s})^{\mbox{\em odd}}$ and $b$ the size of $(\vec{\beta}_1,\ldots,\vec{\beta}_{t-1})^{\mbox{\em even}}$. 
Obviously what the procedure $\mathbb{SIM}_n$ does when computing (\ref{chenghai}) is that it traces a certain computation branch $B$ of  ${\cal H}_n$ --- more precisely, only the first $(a+b)\mathfrak{L}$ steps of $B$, where the adversary's last chunk of moves (either $\vec{\alpha}_s$ or $\vec{\beta}_{t-1}$) occurred on step $(a+b-1)\mathfrak{L}$, and where the adversary never ever made any subsequent moves. Let $\Gamma_n$ be the run spelled by $B$. We may assume that ${\cal H}_n$ is well-behaved enough to never make ``pathologically wrong'' and pointless moves that do not have one of the two prefixes ``$0.$'' or ``$1.$'', as we did regarding  another similar ``well-behavedness'' condition in the proof of Lemma \ref{golemma}. On this assumption, it is obvious that $\Gamma_n$ is bipartite.  If we only look at the first  $(a+b)\mathfrak{L}$ steps of $B$ and the initial segment $\Delta_n$ of $\Gamma_n$ consisting of the moves made by the two players during those steps, with a little thought and with  (\ref{vvv}) in mind,
 we find that 
\begin{equation}\label{www}  
\Delta_{n}^{1.}=\seq{\pp\vec{\psi}_1,\oo\vec{\beta}_2,\pp\vec{\psi}_2,\oo\vec{\beta}_4,\pp\vec{\psi}_3,\oo\vec{\beta}_6,\ldots,\pp\vec{\psi}_{d-1},\oo\vec{\beta}_{t-1},\pp\vec{\psi}_d}
\end{equation}
and 
\begin{equation}\label{uuu}  
\Delta_{n}^{0.}=\seq{\oo\vec{\alpha}_1,\pp\vec{\nu}_1,\oo\vec{\alpha}_3,\pp\vec{\nu}_2,\oo\vec{\alpha}_5,\pp\vec{\nu}_3,\ldots,\oo\vec{\alpha}_{s-2},\pp\vec{\nu}_c,\oo\vec{\alpha}_s}.
\end{equation}

Now, remembering our definition of ``ultimate run'', (\ref{cc1}) and (\ref{www}) together mean nothing but that $\Delta_{n}^{1.}=\mathbb{R}_n$. Similarly, (\ref{cc2}) and (\ref{uuu}) together mean that $\Delta_{n}^{0.}=\gneg \mathbb{R}_{n-1}$. This takes care of subclause (a) of clause 2 of the present lemma because, as we remember, $\Delta_n\preceq \Gamma_n$. 

For subclause (b) of clause 2, assume $\mathbb{R}_{n-1}$ and $\mathbb{R}_{n}$ are legal and reasonable. Remember that $\Delta_n$ consists of the moves made by the two players during the first $(a+b)\mathfrak{L}$ steps of the computation branch $B$ that spells $\Gamma_n$, and that, in that branch, ${\cal H}_n$'s adversary never moved after step $(a+b-1)\mathfrak{L}$. If so, Lemma \ref{m29a} tells us that ${\cal H}_n$ would never move after step   $(a+b)\mathfrak{L}$. This means that $\Delta_n$ and $\Gamma_n$ are simply the same, and thus, in view of what we already know about $\Delta_n$, we have $\Gamma_{n}^{1.}=\mathbb{R}_n$ and  $\Gamma_{n}^{0.}=\gneg \mathbb{R}_{n-1}$, as desired. Clause 2 of the lemma is now fully verified. 

Clause 1 of the lemma can be verified in a similar but considerably easier way, relying on the fact that, by (\ref{hhhh}), the size of $\mathbb{B}_0$ is odd. Such a verification --- if necessary --- is left as an exercise for the reader. 
\end{proof}

\begin{lemma}\label{shan}
For every $n\in\{0,\ldots,k\}$, $\mathbb{R}_n$ is a legal and reasonable run of $F'(n)$.
\end{lemma}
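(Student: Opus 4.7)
We proceed by induction on the iteration count $i$, proving the following invariant: at the beginning of the $i$th iteration of $\mathbb{MAIN}$, whenever $[n, B]$ with $B = (\vec{\gamma}_1, \ldots, \vec{\gamma}_s)$ is an entry of $A_i$, the sequence $\seq{\pp\vec{\gamma}_1, \oo\vec{\gamma}_2, \pp\vec{\gamma}_3, \ldots}$ (obtained by labeling $\vec{\gamma}_j$ with $\pp$ if $j$ is odd and with $\oo$ if $j$ is even) is a legal and reasonable run of $F'(n)$. Given this invariant, Lemma \ref{shan} will follow by noting that, by Lemma \ref{jinan}, the $n$-indexed entry bodies across iterations form a consistent chain whose maximal element is $\mathbb{B}_n$, so $\mathbb{R}_n$ coincides with the run associated to the $n$-indexed body at the iteration where this maximum is attained, and is therefore legal and reasonable.

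The base case $i = 1$ is immediate since $A_1 = \seq{}$ has no entries. For the inductive step, consider each case of $\mathbb{MAIN}$. In Subcase 1.1, the activity triple $(n, L, R)$ triggers $\mathbb{SIM}_n(L^{\mbox{\em odd}}, R^{\mbox{\em even}}) = +\vec{\omega}$, and $\vec{\omega}$ is appended as the next $\pp$-move sequence to the $n$-indexed body (either $R$ or a newly created singleton). By the inductive hypothesis, $L$ (when $n \geq 1$) represents a legal reasonable run of $F'(n-1)$ and $R$ (if nonempty) represents a legal reasonable run of $F'(n)$, so the movesequences of $L^{\mbox{\em odd}}$ are the $\pp$-moves of the former and those of $R^{\mbox{\em even}}$ are the $\oo$-moves of the latter. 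Feeding these to ${\cal H}_n$ via $\mathbb{SIM}_n$ amounts to presenting it with an adversary that plays the compound game $F'(n-1) \mli F'(n)$ legally and reasonably (the antecedent being vacuous when $n = 0$). Since ${\cal H}_n$ is a reasonable machine winning this game, its consequent response $\vec{\omega}$ is legal and reasonable at the position determined by $R$, and the extended body remains a legal reasonable run of $F'(n)$. Subcase 1.2 is completely symmetric, with $\mathbb{SIM}_n$'s negative output extending $L$ by a legal reasonable $\oo$-move sequence of $F'(n-1)$. In Case 2, the singleton $\seq{\omega}$ appended to the $k$-indexed entry is derived from the environment's move $\theta$: if $\theta$ is illegal the machine retires and nothing is added, while otherwise $\omega$ is either $\theta$ itself or its sanitization (replacing an unreasonable suffix by $0$), so $\omega$ is a legal reasonable $\oo$-move of $F'(k)$ at its position. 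The deletions performed in Subcases 1.1 and 1.2 trivially preserve the invariant for the surviving entries.

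The main obstacle is justifying that $\mathbb{SIM}_n$'s alternating $\mathfrak{L}$-step episodic schedule genuinely presents ${\cal H}_n$ with a legal reasonable adversary in the parallel compound $F'(n-1) \mli F'(n)$: one must argue that legality and reasonableness in the antecedent and consequent projections separately are sufficient (and not merely necessary) for legality and reasonableness in the compound, and that the specific interleaving dictated by the LOOP of $\mathbb{SIM}_n$ does not spuriously create illegal positions. This reduces to the standard decomposition of legality in parallel disjunctions together with the consistency properties of $\mathbb{SIM}_n$'s schedule already exploited in the proof of Lemma \ref{wuhan}. With this in place, the correctness of ${\cal H}_n$ yields the extension step of the invariant, the induction closes, and passage to the maximal $n$-indexed body via Lemma \ref{jinan} delivers Lemma \ref{shan}.
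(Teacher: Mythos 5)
Your proposal takes a genuinely different route from the paper's, but it contains a real gap that you have flagged as an ``obstacle'' without actually discharging it.

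The paper's proof is a short case-split over $n$ that invokes the already-established Lemma \ref{wuhan} (specifically clauses 1(a) and 2(a)): since $\mathbb{R}_n$ is an initial segment of a run generated by ${\cal H}_n$ (which plays legally and reasonably), any illegality or unreasonableness must be blamed on the adversary; but by clause 2(a), the adversary's moves in $\mathbb{R}_n$ are precisely ${\cal H}_{n+1}$'s moves in the component $\gneg F'(n)$ of its own game (or, for $n=k$, the sanitized real-environment moves from Case 2 of $\mathbb{MAIN}$), and these too are legal and reasonable by assumption. No induction on iteration count is needed; the heavy lifting was already done in Lemma \ref{wuhan}.

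Your proposal instead inducts on the iteration $i$ of $\mathbb{MAIN}$ with an invariant on the bodies of all entries of $A_i$. The step where you derive that $\mathbb{SIM}_n$'s fresh output $\vec{\omega}$ is a legal, reasonable continuation of the run associated with the $n$-indexed body is exactly where the argument is missing, and it cannot be closed simply by appealing to ``the consistency properties of $\mathbb{SIM}_n$'s schedule already exploited in the proof of Lemma \ref{wuhan}.'' What you need is that during the computation of $\mathbb{SIM}_n(L^{\mbox{\em odd}}, R^{\mbox{\em even}})$, ${\cal H}_n$'s responses in the antecedent and consequent projections coincide with $L^{\mbox{\em even}}$ and $R^{\mbox{\em odd}}$ respectively, so that the position seen by ${\cal H}_n$ at the moment it emits $\vec{\omega}$ is the one predicted by $L$ and $R$. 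That fact is precisely what claims (\ref{cc1}) and (\ref{cc2}) in the proof of Lemma \ref{wuhan} establish --- but those claims are proved only for the ultimate bodies $\mathbb{B}_{n-1}$, $\mathbb{B}_n$, not for the intermediate bodies appearing at iteration $i$. You cannot just cite Lemma \ref{wuhan} as stated; you would have to re-derive a per-iteration analogue of (\ref{cc1}) and (\ref{cc2}), which is a substantial piece of work your proposal does not perform. Until that is done, the inductive step does not close, and the paper's approach (proving Lemma \ref{wuhan} once, in terms of ultimate runs, and then a blame-based case split) remains both shorter and logically complete where yours is not.
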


\begin{proof} Remember the notion of a $\xx$-illegal run from Definition 3.1 of \cite{cla4}. As in the proofs of Lemmas \ref{golemma} and \ref{wuhan},  we can safely make yet another ``well-behavedness'' assumption regarding  the machine(s) that we are dealing with, according to which, 
 for any $n\in\{1,\ldots,k\}$, whenever 
$\Gamma$ is a run generated by ${\cal H}_n$, $\Gamma^{0.}$ is not a $\pp$-illegal run of $\gneg F'(n-1)$ and $\Gamma^{1.}$ is not a $\pp$-illegal run of $F'(n)$. In intuitive terms, the present assumption simply means that ${\cal H}_n$ plays legally not only the overall game $\gneg F'(n-1)\mld F'(n)$, but also in both individual disjuncts of it, even if its adversary has already made an illegal move outside that disjunct. 

Also, note that our assumption regarding ${\cal H}_n$'s playing $\gneg F'(n-1)\mld F'(n)$ reasonably automatically extends from the overall game to the individual disjuncts of it.  

First, consider the case  $n=0$.  Assume $\mathbb{R}_0$ is an illegal or unreasonable run of $F'(0)$. By clause 1(a) of Lemma \ref{wuhan},  $\mathbb{R}_0$ is an initial segment of a certain run $\Gamma_0$ generated by ${\cal H}_0$. Therefore, in view of our assumption that ${\cal H}_0$ plays $F'(0)$ legally and reasonably, the only way $\mathbb{R}_0$ could be illegal or unreasonable is if ${\cal H}_0$'s adversary made  an illegal or unreasonable move in it. This is however impossible, because then $\gneg\mathbb{R}_0$ would be an illegal or unreasonable run of $\gneg F'(0)$, with player $\pp$ 
 being responsible for making it so. But, according to clause 2(a) of Lemma \ref{wuhan}, a certain extension $\Gamma^{0.}_{1}$ of $\gneg \mathbb{R}_0$ is a run generated by ${\cal H}_1$ (with ${\cal H}_1$ playing as $\pp$) in the component $\gneg F'(0)$ of $\gneg F'(0)\mld F'(1)$, which, of course, is illegal or unreasonable because its initial segment $\gneg \mathbb{R}_0$ is so. Thus, ${\cal H}_1$ has played illegally or unreasonably in the $\gneg F'(0)$ component. This, however, is impossible in view of our assumption (see the preceding two paragraphs) that ${\cal H}_1$ plays in the  $\gneg F'(0)$ component of $\gneg F'(n-1)\mld F'(n)$ legally and reasonably. 

The case $0< n< k$ is handled in a similar way, focusing only on the consequent $F'(n)$ of $F'(n-1)\mli F'(n)$ and the corresponding runs $\mathbb{R}_n$ and $\Gamma_{n+1}^{0.}$.  

Finally, consider the case $n=k$. Just as in the preceding cases, ${\cal H}_k$ cannot be responsible for making $\mathbb{R}_k$ an illegal or unreasonable run of $F'(k)$. Analyzing Case 2 of the description of $\mathbb{MAIN}$, it is rather clear that ${\cal H}_k$'s imaginary environment does not make $\mathbb{R}_k$ illegal or unreasonable either. This is so because $\mathbb{MAIN}$ simply stops any activities --- including copying ${\cal M}_k$'s adversary's moves and adding them to $\mathbb{B}_k$ --- once it detects an illegal move by the environment; also, $\mathbb{MAIN}$ moderates ${\cal M}_k$'s environment's unreasonable moves by  ``making them reasonable'' before copying and adding them to $\mathbb{B}_k$. 
\end{proof}

\begin{lemma}\label{shantou}
For every $n\in\{0,\ldots,k\}$, $\mathbb{R}_n$ is  $\pp$-won run of $F'(n)$.
\end{lemma}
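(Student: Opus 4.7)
The plan is to prove this by straightforward induction on $n$, combining Lemma \ref{shan} with the ``equality'' subclauses (b) of Lemma \ref{wuhan} and with our standing assumptions that ${\cal H}_0$ wins $F'(0)$ and each ${\cal H}_n$ (for $1 \leq n \leq k$) wins $F'(n-1) \mli F'(n)$.

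For the base case $n=0$, Lemma \ref{shan} tells us that $\mathbb{R}_0$ is a legal and reasonable run of $F'(0)$. Hence clause 1(b) of Lemma \ref{wuhan} yields $\mathbb{R}_0 = \Gamma_0$ for a run $\Gamma_0$ generated by ${\cal H}_0$. Since ${\cal H}_0$ wins $F'(0)$, the run $\Gamma_0 = \mathbb{R}_0$ is $\pp$-won in $F'(0)$, as desired.

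For the inductive step, fix $n$ with $1 \leq n \leq k$ and assume $\mathbb{R}_{n-1}$ is $\pp$-won in $F'(n-1)$. By Lemma \ref{shan}, both $\mathbb{R}_{n-1}$ and $\mathbb{R}_n$ are legal and reasonable runs of their respective games, so clause 2(b) of Lemma \ref{wuhan} gives a run $\Gamma_n$ generated by ${\cal H}_n$ with $\Gamma_n^{0.} = \gneg \mathbb{R}_{n-1}$ and $\Gamma_n^{1.} = \mathbb{R}_n$. Since ${\cal H}_n$ wins $F'(n-1) \mli F'(n) = \gneg F'(n-1) \mld F'(n)$, the run $\Gamma_n$ is $\pp$-won in this parallel disjunction, which means $\pp$ wins at least one of the two disjuncts under the corresponding sub-run. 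The induction hypothesis says $\mathbb{R}_{n-1}$ is $\pp$-won in $F'(n-1)$, which is equivalent to $\Gamma_n^{0.} = \gneg \mathbb{R}_{n-1}$ being $\oo$-won (and thus not $\pp$-won) in $\gneg F'(n-1)$. So the winning must come from the right disjunct, and $\mathbb{R}_n = \Gamma_n^{1.}$ is a $\pp$-won run of $F'(n)$.

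No real obstacle arises at this stage; the argument is essentially an unwinding of the definition of $\pp$-winning an implication. All the heavy lifting has already been done by Lemmas \ref{wuhan} and \ref{shan}, which pin the ultimate runs $\mathbb{R}_n$ down to actual runs generated by the simulated machines ${\cal H}_n$ and certify their legality and reasonability, thereby enabling the use of clauses (b) in place of the weaker clauses (a) of Lemma \ref{wuhan}.
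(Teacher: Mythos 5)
Your proof is correct and matches the paper's argument essentially line for line: induction on $n$, with Lemma \ref{shan} feeding into clauses 1(b) and 2(b) of Lemma \ref{wuhan}, and the inductive step unwound via the winning condition of the parallel disjunction $\gneg F'(n-1)\mld F'(n)$. No meaningful difference from the paper's own proof.
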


\begin{proof} Induction on $n$. According to clause 1(b) of Lemma \ref{wuhan}, in conjunction with Lemma \ref{shan}, $\mathbb{R}_0$ is a run generated by ${\cal H}_0$. So, since ${\cal H}_0$ wins $F'(0)$, $\mathbb{R}_0$ is a $\pp$-won run of $F'(0)$.

Next, consider any $n$ with $0< n\leq k$. According to clause 2(b) of Lemma \ref{wuhan}, in conjunction with Lemma \ref{shan}, there is a bipartite run $\Gamma_n$ generated by ${\cal H}_n$ such that $\Gamma^{0.}_{n}=\gneg \mathbb{R}_{n-1}$ and $\Gamma^{1.}_{n}=\mathbb{R}_n$. But we know that ${\cal H}_n$ wins $\gneg F'(n-1)\mld F'(n)$. So, $\Gamma_n$ has to be a $\pp$-won run of  $\gneg F'(n-1)\mld F'(n)$, meaning that either $\Gamma^{0.}_{n}$, i.e.   $\gneg \mathbb{R}_{n-1}$, is a $\pp$-won run of $\gneg F'(n-1)$, or $\Gamma^{1.}_{n}$, i.e. $\mathbb{R}_{n}$, is a $\pp$-won run of $F'(n)$.  But, by the induction hypothesis, $\mathbb{R}_{n-1}$ is a $\pp$-won run of $F'(n-1)$. This obviously means that  $\gneg \mathbb{R}_{n-1}$ is a $\oo$-won (and thus not $\pp$-won) run of $\gneg F'(n-1)$. Therefore, $\mathbb{R}_{n}$ is a $\pp$-won run of $F'(n)$. 
\end{proof}

A run $\Pi$ is said to be a {\bf $\pp$-delay} of a run $\Sigma$ iff: (1)  for either player $\xx\in\{\twg,\tlg\}$, the subsequence of the $\xx$-labeled moves of $\Pi$ is the same as that of $\Sigma$, and (2)
for any $x,y\geq 1$, if the $x$th $\oo$-labeled move is made earlier than (is to the left of) the $y$th $\pp$-labeled move in $\Sigma$, then so is it in $\Pi$. 
For instance, $\seq{\oo\theta_1,\pp\rho_1,\oo\theta_2,\pp\rho_2}$ is a $\pp$-delay of  $\seq{\pp\rho_1,\oo\theta_1,\pp\rho_2,\oo\theta_2}$. 
It is rather obvious that, whenever $\Sigma$ is a $\pp$-won run of $F'(k)$ and $\Pi$ is a $\pp$-delay of $\Sigma$,  $\Pi$ is also a $\pp$-won run of $F'(k)$.\footnote{The same, of course, holds for any sentence in the role of $F'(k)$. Furthermore, according to a known fact of CoL (\cite{Jap03,Japfin}), the same simply holds for any static game.} 
  From Lemma \ref{shantou}, we know that $\mathbb{R}_k$ is a $\pp$-won run of $F'(k)$. Therefore we have:
\begin{equation}\label{trr}
\mbox{\em Whenever a run $\Pi$ is a $\pp$-delay of $\mathbb{R}_k$, $\Pi$ is a $\pp$-won run of $F'(k)$.}
\end{equation}

Let $\Upsilon$ be the run generated by ${\cal M}_k$ that took place in the real play of $F'(k)$. How does $\Upsilon$ relate to $\mathbb{R}_k$? As promised earlier, the real play of $F'(k)$ --- that is, the run $\Upsilon$ --- would be ``essentially synchronized'' with the play $\mathbb{R}_k$ by ${\cal H}_k$ in the consequent of $F'(k-1)\mli F'(k)$, meaning that $\Upsilon$ is ``essentially the same'' as $\mathbb{R}_k$. The qualification ``essentially'' implies that the two runs, while being similar, may not necessarily be strictly identical. 

One reason why $\mathbb{R}_k$ and $\Upsilon$ may not be exactly the same is that, as can be seen from Case 2 of the description of $\mathbb{MAIN}$, if $\Upsilon$ contains an illegal (with respect to $F'(k)$) move by $\oo$, such a move does not appear in $\mathbb{R}_k$. However, if the adversary made an illegal move in the (real) play of $F'(k)$, ${\cal M}_k$ is an automatic winner. So, we can and will safely assume that $\Upsilon$ does not contain illegal moves by ${\cal M}_k$'s adversary, for otherwise the case is trivial.

Another reason why $\mathbb{R}_k$ may differ from $\Upsilon$ is that, again as seen from Case 2 of the description of $\mathbb{MAIN}$, if $\Upsilon$ contains an unreasonable (with respect to $F'(k)$) move by $\oo$, such a move appears in $\mathbb{R}_k$ in a ``moderated'' and hence altered form. Namely, if ${\cal H}_k$'s adversary chose some ``unreasonably long'' constant $a$ for $z$ in a subcomponent $\ada z G$ of $F'(k)$, then the same move will appear in $\mathbb{R}_k$ as if $0$ was chosen instead of $a$. Note, however, that having made the above unreasonable choice makes $\oo$ lose in the $\ada z G$ component. So, ``moderating'' $\oo$'s unreasonable moves can only increase rather than decrease $\oo$'s chances to win the overall game. That is, if $\pp$ (i.e. ${\cal M}_k$) wins the game even after such moderation of the adversary's unreasonable moves,  it would also win (``even more so'') without moderation.  For this reason, we can and will further safely assume that ${\cal M}_k$'s environment plays not only legally, but also reasonably.

But even if ${\cal M}_k$'s adversary has played $\Upsilon$ legally and reasonably, there is one (last) reason remaining that could make  $\mathbb{R}_k$ ``somewhat'' different from $\Upsilon$. Namely, with some thought, one can see that $\Upsilon$ may be a proper $\pp$-delay of (rather than equal to) $\mathbb{R}_k$. Luckily, however, by (\ref{trr}), $\Upsilon$ is still a $\pp$-won run of $F'(k)$.

Thus, as desired, ${\cal M}_k$ wins $F'(k)$, and hence $\cal M$ wins $F(x)$. 

It remains to verify  that ${\cal M}$ runs in polynomial space. Remember from Subsection \ref{sagree}  that $\mathfrak{l}$ is the size of the greatest of the constants chosen by  ${\cal M}$'s environment for the free variables of $F(x)$. This, of course, means that the background of any clock cycle of ${\cal M}_k$ in any scenario of its work 
 will be at least 
$\mathfrak{l}$. For this reason, in order to show that ${\cal M}$ runs in polynomial space, it will be sufficient to show that the spacecost of any clock cycle of ${\cal M}_k$ is bounded by a certain polynomial function of the argument $\mathfrak{l}$. Correspondingly, in what follows, whenever we say ``polynomial'', it is to be understood as ``polynomial in $\mathfrak{l}$''.  

In asymptotic terms, the space consumed by ${\cal M}_k$ --- namely, by any given $i$th ($i\in I$) iteration of $\mathbb{MAIN}$ --- is the sum of the following two components:
\begin{eqnarray}
& \mbox{\em the space needed to hold (the value of) the aggregation $A$;} & \label{spc1}\\
& \mbox{\em the space needed to update $A=A_i$ to $A=A_{i+1}$ (if $(i+ 1)\in I$).}  & \label{spc2}
\end{eqnarray}

In verifying that the quantity (\ref{spc1}) is polynomial, we  observe that, by conditions (iii) and (iv) of Subsection \ref{saggr}, an aggregation cannot contain two same-size entries. Next, by (\ref{bodycond}), the size of an entry never exceeds $2\mathfrak{e}+ 1$. Thus, the number of entries in $A$  is bounded by the constant $2\mathfrak{e}+ 1$. For this reason, it is now sufficient to just show that holding any given entry $[n,B]$ of $A$ only takes a polynomial amount of space. But this is indeed so. The component $n$ can be written in linear space. As for the component $B$, in view of Lemma \ref{jinan}, the amount of space needed to represent it obviously does not exceed the amount of space needed to represent the run $\mathbb{R}_n$. Since, by Lemma \ref{shan}, such a run is legal and reasonable, the overall number of moves in it cannot exceed the constant bound $\mathfrak{d}$ and, as observed in Subsection \ref{sagree}, there is a polynomial function $\eta$ such that the size of no move (due to being reasonable) exceeds $\eta(\mathfrak{l})$. So, (\ref{spc1}) is indeed polynomial.

Verifying the polynomiality of the quantity (\ref{spc2}) means verifying that $\mathbb{SIM}_n$  runs in polynomial space. But this is indeed so. Let us only consider the case $n>0$, with the case $n=0$ being similar but simpler. The space used by $\mathbb{SIM}_n$ is the space needed for maintaining (the contents of) the variables $x,y,z,S,T$,  the space needed for simulating ${\cal H}_n$, and the space needed to keep track of how many steps of ${\cal H}_n$ have been simulated so far. The sizes of $x,y,z$ are bounded by a constant (namely, $|2\mathfrak{e}+1|$). And, asymptotically, the space needed for maintaining  $S$ and $R$ obviously does not exceed the space needed to hold $\mathbb{R}_n$, which, as we observed in the preceding paragraph, is polynomial. So is 
the space needed for simulating ${\cal H}_n$, because, by our assumption, 
 ${\cal H}_n$ runs in polynomial space $\phi$, and  simulating ${\cal H}_n$ obviously takes about the same amount of space. Finally, the count of simulated steps never exceeds a certain constant (specifically, $2\mathfrak{e}+1$) times $\mathfrak{L}$ and, looking back at the definition of $\mathfrak{L}$ in Section \ref{sagree}, it is clear that recording such a quantity can be done with polynomial space --- namely, while $\mathfrak{L}$ is exponential, the size $|\mathfrak{L}|$ of its binary representation is only polynomial.

Thus, as promised, $\cal M$ is indeed a polynomial space solution of $F(x)$. An (optimal) explicit polynomial bound  for the space complexity of $\cal M$ can be written after a long and tedious but otherwise certainly doable analysis of all details of the work of $\cal M$.

\section{The extensional completeness of $\arfive$}\label{sectcompl}
This section is devoted to proving the completeness part of Theorem \ref{tt1}. It means showing that, for any arithmetical problem $A$ that has a polynomial space  solution,  there is a theorem of $\arfive$ which, under the standard interpretation, equals (``expresses'') $A$. 

So, let us pick an arbitrary polynomial-space-solvable arithmetical problem $A$. By definition, $A$ is an arithmetical problem because, for some sentence $X$, $A=X^\dagger$. For the rest of this section, we fix such an  \(X,\)  and fix \({\cal X}\) as an HPM that solves $A$ (and hence $X^\dagger$) in polynomial space. Specifically, we assume that $\cal X$ runs in space 
$\chi$, 
where $\chi$, which we also fix for the rest of this section,  is a single-variable term --- and hence can be seen/written as an explicit polynomial function --- with $\chi(x)\geq x$ for all $x$.  We also agree that, throughout this section, ``{\bf formula}'' exclusively means a subformula of $X$, in which however some variables may  be renamed. 

$X$ may not necessarily be provable in $\arfive$, and our goal is to construct another sentence $\overline{X}$ for which, just like for $X$, we have $A=\overline{X}^\dagger$ and which, perhaps unlike $X$, is  provable in $\arfive$.

Remember the sentence $\mathbb{L}$ from Section 14.3 of \cite{cla4}, saying that $\cal X$ does not win $X$ in time $\chi$. Here we redefine that sentence so that now it says the same but about space rather than time. Namely, let $E_1(\vec{x})\ldots,E_n(\vec{x})$ be all subformulas of $X$, where all free variables of each $E_i(\vec{x})$ are among $\vec{x}$. Then  our present  
$\mathbb{L}$   is the $\mld$-disjunction of  natural formalizations of the following statements: 
\begin{quote} {\em 
\begin{enumerate}
\item There is a $\pp$-illegal position of $X$ spelled on the run tape of $\cal X$ on some clock  cycle of some computation branch of $\cal X$. 
\item There is a clock cycle $c$ in some computation branch of $\cal X$  whose spacecost (see Section \ref{a18}) exceeds   $\chi(\ell)$, where $\ell$ is the background of $c$. 
\item There is a (finite) legal run $\Gamma$ of $X$ generated by $\cal X$ and a tuple $\vec{c}$ of constants ($\vec{c}$ of the same length as $\vec{x}$) such that:
\begin{itemize}
\item  $\seq{\Gamma}X=E_1(\vec{c})$, and we have $\gneg  \elz{E_1(\vec{c})} $ (i.e., $ \elz{E_1(\vec{c})} $ is false),  
\item  or $\ldots$, or 
\item $\seq{\Gamma}X=E_n(\vec{c})$, and we have $\gneg  \elz{E_n(\vec{c})} $ (i.e., $ \elz{E_n(\vec{c})} $ is false). 
\end{itemize}
\end{enumerate} } 
\end{quote}

Next, remember the overline notation from Section 14.4 of \cite{cla4}. We adopt that notation without any changes, except that $\mathbb{L}$ in it means our present (rather than the old) $\mathbb{L}$. So, for any   formula $E$  including $X$,   
\(\overline{E}\label{ipver}\)
is the result of replacing in $E$ every politeral $L$ by $L\mld\mathbb{L}$. 

The following two lemmas are  proven exactly as the corresponding lemmas in \cite{cla4}: 

\begin{lemma}\label{august12a}
Lemma 14.2 of \cite{cla4} continues to hold. That is: 

For any  formula $E$, including $X$, we have  $E^\dagger=\overline{E}^\dagger$. 
\end{lemma}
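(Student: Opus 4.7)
The plan is to replay the proof of Lemma 14.2 of \cite{cla4} with only one change: verify that the redefined $\mathbb{L}$ (now expressing failure of $\cal X$ to win $X$ in space $\chi$, rather than in time $\chi$) remains an elementary sentence which is false under the standard interpretation. Once this is secured, the rest of the argument is a routine structural induction on $E$. In the politeral base case, $\overline{L} = L \mld \mathbb{L}$ and since $\mathbb{L}$ is elementary and false, its $\dagger$-interpretation is the constant lost game, so the $\dagger$-value of $L \mld \mathbb{L}$ coincides with that of $L$. For compound $E$, both $\overline{\cdot}$ and $\dagger$ commute with every connective and quantifier in our language, so the induction hypothesis lifts immediately.

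The main work is therefore showing that $\mathbb{L}$ is false. Each disjunct fails by appeal to the assumption that $\cal X$ solves $X$ in space $\chi$. Disjunct 1 fails because a machine which wins $X$ never leaves a $\pp$-illegal position of $X$ on its run tape. Disjunct 2 fails by the very choice of $\chi$: in every computation branch, the spacecost of every cycle $c$ is at most $\chi(\ell)$, where $\ell$ is the background of $c$. Disjunct 3 fails because, by the winning condition for $X$, every legal run generated by $\cal X$ traverses only positions $\seq{\Gamma}X$ whose elementarization is true; hence, for every tuple $\vec{c}$ and index $i$ with $\seq{\Gamma}X = E_i(\vec{c})$, the sentence $\elz{E_i(\vec{c})}$ holds. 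Taken together, these establish that $\mathbb{L}$ is false.

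The only point that requires a second glance is ensuring that the switch from a time-based to a space-based statement in Disjunct 2 does not disturb the elementarity of $\mathbb{L}$ or its formalizability in the language of $\pa$. But spacecost, like timecost, admits a natural arithmetical formalization in terms of the tape configurations of the fixed HPM $\cal X$ (whose code is a concrete natural number), and $\chi$ is an explicit polynomial term, so this step proceeds exactly as for its time-based counterpart in \cite{cla4}. No other modification to the proof of Lemma 14.2 of \cite{cla4} is required, and I expect no further obstacles.
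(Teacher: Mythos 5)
Your proposal is correct and takes the same approach as the paper, which simply appeals to the proof of Lemma 14.2 of \cite{cla4} unchanged. The single substantive verification — that the redefined (space-based) $\mathbb{L}$ is still an elementary sentence of $\pa$ that is false under the standard interpretation, so that $\mathbb{L}^\dagger$ is the always-lost game and $(L\mld\mathbb{L})^\dagger = L^\dagger$ — is exactly what you carry out, and the remainder is the same structural induction.
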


\begin{lemma}\label{jan4d}
Lemma 14.3 of \cite{cla4} continues to hold. That is: 

For any    formula $E$, $\arfive\vdash \mathbb{L} \mli \cla \overline{E}$. 
\end{lemma}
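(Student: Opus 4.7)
The plan is to prove the statement by structural induction on the formula $E$, mirroring the argument for Lemma 14.3 of \cite{cla4} and exploiting the fact that $\mathbb{L}$ enters the construction only as an opaque sentence. Recall that $\overline{E}$ is obtained from $E$ by replacing every politeral $L$ by the parallel disjunction $L\mld\mathbb{L}$. Intuitively, under the hypothesis $\mathbb{L}$, this substitution installs a uniform ``escape hatch'' at every leaf of the formula tree of $\overline{E}$: the player in charge of resolving any $L\mld\mathbb{L}$ can simply retreat into the $\mathbb{L}$-disjunct, which is known to be true.

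For the base case, I would take $E$ to be an elementary formula. Then $\overline{E}$ is a purely classical propositional combination in which each atomic subformula $L$ of $E$ has been replaced by $L\mld\mathbb{L}$. Under the hypothesis $\mathbb{L}$, each such disjunction holds classically, so $\mathbb{L}\mli\overline{E}$ is a propositional tautology over the underlying Peano language. Hence, by $\pa$-reasoning combined with LC, $\arfive\vdash\mathbb{L}\mli\overline{E}$, and $\cla$-closure is routine.

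For the inductive step I would split on the outermost operator of $E$. If $E$ is $G\mlc H$, $G\mld H$, $\cla x\, G(x)$ or $\cle x\, G(x)$, then $\overline{E}$ has the same outer shape built from $\overline{G}$, $\overline{H}$ (or $\overline{G(x)}$), and the induction hypotheses $\arfive\vdash\mathbb{L}\mli\cla\overline{G}$ and $\arfive\vdash\mathbb{L}\mli\cla\overline{H}$ can be packaged together by a single application of LC. The choice operators $\adc,\add,\ada,\ade$ are handled in the same way: given the induction hypothesis on the immediate subformulas (or on $\overline{G(x)}$ for the quantifier cases), LC produces a winning strategy for $\overline{E}$ from $\mathbb{L}$, since a strategy can just forward the environment's choices into the corresponding sub-strategy supplied by IH.

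The only point where one might worry is the fact that the present $\mathbb{L}$ encodes a space-complexity failure rather than the time-complexity failure used in \cite{cla4}; this is what needs to be checked before declaring the proof to carry over. However, the structural induction never inspects the internal meaning of $\mathbb{L}$. Its entire use is mediated by the trivial classical implication $\mathbb{L}\mli(L\mld\mathbb{L})$ at the base case, and all the inductive steps are purely formal manipulations under LC. Consequently, the proof of Lemma 14.3 of \cite{cla4} transfers verbatim with our new $\mathbb{L}$ in place of the old one, which is exactly what the lemma's ``continues to hold'' phrasing claims.
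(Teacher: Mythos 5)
Your reconstruction of the proof of Lemma 14.3 of \cite{cla4} by structural induction on $E$ is correct, and you rightly identify the reason it carries over verbatim to $\arfive$: the only role played by $\mathbb{L}$ is as an opaque closed formula whose truth is assumed at the base (politeral) case, so changing its internal content from a time-complexity to a space-complexity statement is immaterial. This matches the paper's own proof, which just states that the lemma is proven exactly as the corresponding lemma in \cite{cla4}.
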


In view of Lemma \ref{august12a}, what now remains to do for the completion of our completeness proof is to show that $\arfour\vdash\overline{X}$.

We encode configurations as in Appendix A of \cite{cla4}. 
Also remember the meanings (which we adopt here without any changes) of ``legitimate configuration'', ``yield'' and ``deterministic successor'' from Section 14.5 of \cite{cla4}. As in \cite{cla4}, terminologically we often identify configurations with their codes. For instance, we may say ``configuration $z$'' where what is really meant is ``the configuration encoded by $z$''. 

Based on reasons similar to those relied upon in the proof of Lemma \ref{m29a}, we can write a function $\chi'(z)$ (fix it) polynomial in $z$ (and hence exponential in the size $|z|$ of $z$)\footnote{Asymptotically, $\chi'(z)$ is $O(\mathfrak{s}^{\chi(|z|)})$, where $\mathfrak{s}$ is the number of  symbols of which configurations are composed (see Section A.1 of \cite{cla4}).} such that the following lemma holds: 

\begin{lemma}\label{m31a}
($\pa\vdash:$) Suppose $z$ is  a legitimate configuration, and   $\cal X$ moves in the {\em $i$th} (some $i\geq 0$) deterministic successor of $z$.  Then, as long as $\cal X$ (indeed) runs in space $\chi$,  $i< \chi'(z)$.  
\end{lemma}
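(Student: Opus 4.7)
The plan is to mimic the pigeonhole-style argument of Lemma \ref{m29a}, but now inside $\pa$ and for the sequence of deterministic successors of $z$ instead of clock cycles of ${\cal H}_n$. The key point is that, when one passes from $z$ to a deterministic successor, by definition (Section 14.5 of \cite{cla4}) the environment makes no moves and no $\pp$-move is added either; consequently, the run-tape content, the run-tape head location, and the background $\ell$ are all invariant along the whole block of deterministic successors $z,\text{dsucc}(z),\text{dsucc}^{2}(z),\ldots,\text{dsucc}^{i}(z)$ considered in the statement.

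First I would fix which components of a configuration can vary across this block: the state, the work-tape head location, and the work-tape contents. Assuming, as the lemma does, that $\cal X$ really runs in space $\chi$, the spacecost of each of these configurations is at most $\chi(\ell)$, and plainly $\ell\leq |z|$, so each of these configurations uses at most $\chi(|z|)$ work-tape cells. Counting possibilities then gives at most $\mathfrak{s}\cdot(\chi(|z|)+1)\cdot \mathfrak{q}^{\chi(|z|)}$ distinct configurations reachable as deterministic successors of $z$ (using the constants $\mathfrak{s}$ and $\mathfrak{q}$ from Subsection \ref{sagree}, whose analogues for $\cal X$ are fixed once and for all). This expression is a polynomial function of $z$ (and exponential in $|z|$), and I would take (at least) this quantity as my defining expression for $\chi'(z)$.

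Next, arguing in $\pa$, I would suppose for a contradiction that $i\geq \chi'(z)$ and that $\cal X$ moves in the $i$th deterministic successor of $z$. The sequence $z,\text{dsucc}(z),\ldots,\text{dsucc}^{i}(z)$ consists of $i+1>\chi'(z)$ configurations drawn from a pool of at most $\chi'(z)$ distinct ones, so some $\text{dsucc}^{j}(z)$ and $\text{dsucc}^{j'}(z)$ with $j<j'\leq i$ must coincide. Since deterministic succession is a function, this forces the sequence of deterministic successors to cycle forever with period $j'-j$, and in particular every later $\text{dsucc}^{m}(z)$ is still a deterministic successor of $z$, i.e.\ one in which $\cal X$ makes no move. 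This contradicts the assumption that $\cal X$ moves at step $i$.

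The only real work is bookkeeping: one has to check that the predicates ``$z$ is a legitimate configuration'', ``$w=\text{dsucc}(z)$'', ``$\cal X$ moves in $w$'', and ``the spacecost of $w$ does not exceed $\chi(\ell)$'' are all $\pa$-expressible in terms of the coding machinery of Appendix A of \cite{cla4}, and that the pigeonhole principle can then be applied at the relevant complexity level. The main obstacle is therefore not conceptual but notational: verifying that the counting bound above and the (bounded) pigeonhole argument against it genuinely go through inside $\pa$ with the chosen explicit polynomial $\chi'(z)$. Everything else is a direct transcription of the proof of Lemma \ref{m29a}.
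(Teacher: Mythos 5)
Your approach is exactly the one the paper intends: the text preceding the lemma explicitly says that $\chi'(z)$ can be written ``based on reasons similar to those relied upon in the proof of Lemma \ref{m29a}'', gives no further proof, and the footnote confirms that $\chi'(z)$ is asymptotically $O(\mathfrak{s}^{\chi(|z|)})$, i.e.\ a configuration count.  So reproducing the pigeonhole-on-configurations argument of Lemma \ref{m29a} inside $\pa$, with the run tape held fixed rather than bounded by $\eta(\mathfrak{l})$, is precisely what is meant, and your proof fills in the details the paper omits.

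Two small inaccuracies are worth flagging. First, you assert that the run-tape head location is invariant across the block of deterministic successors; it is not (a machine can move its run-tape head freely without making moves). It is, however, bounded by the length of the run tape spelled in $z$, hence by a polynomial in $|z|$, so your count only needs an extra factor of roughly $|z|+1$, which leaves $\chi'(z)$ polynomial in $z$ exactly as required. Second, the step from ``a configuration repeats'' to ``the cycle contains no $\pp$-move'' deserves one more sentence: if $\cal X$ moved at some step inside the purported cycle, the run tape (which is append-only) would be strictly longer at the end of the cycle than at its start, so the two configurations could not coincide. That observation is what forbids $\cal X$ from moving at any step $\geq j$, including step $i$. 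Neither fix affects the choice of $\chi'$ or the formalizability in $\pa$.
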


Let $E(\vec{s})$ be a   formula all of whose  free variables are among  $\vec{s}$ (but not necessarily vice versa), and $z$ be a variable not among $\vec{s}$. We 
 will write   $E^\circ(z,\vec{s})$ 
to denote an elementary formula whose free variables are $z$ and those of $E(\vec{s})$,  and which is a natural arithmetization of the predicate that,  for any constants $a,\vec{c}$ in the roles of $z,\vec{s}$, holds (that is, $E^\circ(a,\vec{c})$ is true) iff $a$ is a legitimate configuration and its yield is $E(\vec{c})$.
Further,  we 
 will write   $E^{\circ}_{\circ}(z,\vec{s})$ 
to denote an elementary formula whose free variables are $z$ and those of $E(\vec{s})$, and which  is a natural arithmetization of the predicate that,  for any constants $a,\vec{c}$  in the roles of $z,\vec{s}$, holds iff $E^\circ(a,\vec{c})\mlc E^\circ(b,\vec{c})$ is true, where $b$ is the   $\chi'(a)$th deterministic successor of $a$.

 Thus, our present $E^\circ(a,\vec{c})$ means virtually the same as in Section 14.5 of \cite{cla4}, while our present $E^{\circ}_{\circ}(z,\vec{s})$ is a modified version of the $E^{\circ}_{\circ}(z,\vec{s})$ of \cite{cla4}. Namely,  now  $b$ is the  $\chi'(a)$th (rather than $\chi(a)$th as in \cite{cla4}) deterministic successor of $a$.

\begin{lemma}\label{august20b}
Lemma 14.4 of \cite{cla4} continues to hold. That is: 

Assume $E(\vec{s})$ is a non-critical  formula all of whose  free variables are among  $\vec{s}$. 
Then
\[\pa\vdash \cla\bigl(E^{\circ}_{\circ}(z,\vec{s})  \mli \elz{\overline{E(\vec{s})}}\bigr).\] 
\end{lemma}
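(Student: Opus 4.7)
The plan is to adapt the proof of Lemma 14.4 of \cite{cla4} to the new setting, where $\chi'$ (polynomial in $z$ and hence exponential in $|z|$) now plays the pigeonhole role formerly played by the time bound. Reasoning inside $\pa$, I fix constants $a$ and $\vec{c}$ substituted for $z$ and $\vec{s}$, respectively, and assume $E^{\circ}_{\circ}(a,\vec{c})$; this unpacks to: $a$ is a legitimate configuration with yield $E(\vec{c})$, and the $\chi'(a)$-th deterministic successor $b$ of $a$ is also a legitimate configuration with yield $E(\vec{c})$. The goal is to derive $\elz{\overline{E(\vec{c})}}$.

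The basic reduction is that, by the definition of the overline operation, $\overline{E(\vec{c})}$ is obtained by replacing each politeral $L$ of $E(\vec{c})$ with $L\mld\mathbb{L}$. Since elementarization distributes over $\mld$, each such disjunction becomes true as soon as $\mathbb{L}$ is true, whence $\elz{\overline{E(\vec{c})}}$ follows from $\mathbb{L}$. It therefore suffices, under the hypothesis $E^{\circ}_{\circ}(a,\vec{c})$, to derive $\mathbb{L}$.

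Now, since the yield of a configuration is fully determined by the run-tape contents, the assumption that the yield is unchanged from $a$ to $b$ means that no moves were added to the run tape during the $\chi'(a)$ deterministic cycles between them. I split into cases. If $\cal X$ fails to respect the space bound $\chi$ somewhere along its actual computation, then clause 2 of $\mathbb{L}$ is witnessed and we are done. Otherwise $\cal X$ respects $\chi$ throughout; then Lemma \ref{m31a} tells us that any move of $\cal X$ on the deterministic trajectory from $a$ would occur within $\chi'(a)$ steps, and since no such move appears within that window, $\cal X$ is caught in a silent infinite loop from $a$ onwards. The legal run $\Gamma$ generated by $\cal X$ in this branch is therefore finite and satisfies $\seq{\Gamma}X=E(\vec{c})$; the non-criticality of $E(\vec{c})$ then supplies $\gneg\elz{E(\vec{c})}$, so clause 3 of $\mathbb{L}$ is witnessed.

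The main obstacle is the faithful arithmetization of the above inside $\pa$: encoding configurations and their deterministic-successor relation, verifying the pigeonhole step using the new space-driven quantity $\chi'(a)$ rather than a time bound, and confirming that ``non-critical'' propagates through the relevant subformulas in the way required to yield $\gneg\elz{E(\vec{c})}$ in the silent-loop case. Once these pieces are in place, the derivation parallels that of Lemma 14.4 of \cite{cla4}, with $\chi$ uniformly replaced by $\chi'$.
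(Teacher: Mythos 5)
Your proof has a genuine gap, and it comes precisely at the point where you write ``the non-criticality of $E(\vec{c})$ then supplies $\gneg\elz{E(\vec{c})}$.'' Non-criticality is a \emph{structural} property of $E$ (roughly, that $E$ has no surface $\add$ or $\ade$, so that no $\bot$ is introduced under elementarization); it says nothing about whether $\elz{E(\vec{c})}$ is actually true or false at a given valuation. Consequently, in the case where $\cal X$ respects the space bound and falls silent from $a$ onward, your argument does not establish that $\cal X$ has lost the game: if $\elz{E(\vec{c})}$ happens to be true, $\cal X$ has in fact \emph{won}, clause 3 of $\mathbb{L}$ is not witnessed, and you have no derivation of $\mathbb{L}$. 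The paper avoids this by arguing by contradiction: it assumes $\elz{\overline{E(\vec{s})}}$ is false, which \emph{entails} $\elz{E(\vec{s})}$ is false (since $L$ classically implies $L\mld\mathbb{L}$ at every politeral occurrence), and only under that assumption is the silent-loop branch a losing one. A direct proof like yours would need the corresponding case split: if $\elz{E(\vec{c})}$ is true, then $\elz{\overline{E(\vec{c})}}$ follows immediately and you are done; if it is false, your two-way split over the space bound yields $\mathbb{L}$, and then $\elz{\overline{E(\vec{c})}}$.

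Related to this, your justification for ``$\mathbb{L}$ implies $\elz{\overline{E(\vec{c})}}$'' is where non-criticality is genuinely needed, but you invoke only that ``elementarization distributes over $\mld$.'' Replacing every politeral by a true disjunction makes $\elz{\overline{E}}$ true only because, $E$ being non-critical, no surface choice operator contributes a $\bot$ that $\mathbb{L}$ cannot neutralize. That is exactly why the critical case (Lemma \ref{august20a}) needs a different conclusion. So you have, in effect, moved the hypothesis of non-criticality from the step where it is required to a step where it does no work.
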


\begin{proof} Assume the conditions of the lemma. Argue in $\pa$. Consider arbitrary $(\cla$) values of $z$ and $\vec{s}$, which we continue writing as $z$ and $\vec{s}$. Suppose, for a contradiction, that $E^{\circ}_{\circ}(z,\vec{s})$ is true but $\elz{\overline{E(\vec{s})}}$ is false.   The falsity of $\elz{\overline{E(\vec{s})}}$ implies the falsity of $\elz{E(\vec{s})}$. This is so because the only difference between the two formulas is that, wherever the latter has some politeral $L$, the former has a $\mld$-disjunction containing $L$ as a disjunct.  

The truth of $E^{\circ}_{\circ}(z,\vec{s})$ implies that $\cal X$ reaches the configuration (computation step) $z$ and, in the scenario where Environment does not move, $\cal X$ does not move either for at least $\chi'(z)$ steps afterwards. If  $\cal X$ does not move even after $\chi'(z)$ steps, then it has lost the game, because the eventual position hit in the play is $E(\vec{s})$ and the elementarization of the latter is false (as observed in \cite{cla4}, every such game is lost). And if $\cal X$ does make a move sometime after $\chi'(z)$ steps, then, in view of Lemma \ref{m31a},  $\cal X$  does not run in space $\chi$.  Thus, in either case, $\cal X$ does not win $X$ in space $\chi$, that is,  
$\mathbb{L}$ is true.
The rest of this proof is identical to what follows claim (23) in the proof of Lemma 14.4 of \cite{cla4}.  
\vspace{-7pt}  
\end{proof}

\begin{lemma}\label{august20a}
Lemma 14.5 of \cite{cla4} continues to hold. That is: 

Assume $E(\vec{s})$ is a  critical   formula all of whose  free variables are among  $\vec{s}$. Then
\begin{equation}\label{m13a}
\arfive\vdash   \cle E^{\circ}_{\circ} (z,\vec{s})  \mli \cla\overline{E(\vec{s})}.
\end{equation} 
\end{lemma}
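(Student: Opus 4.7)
My plan is to mirror the structure of Lemma 14.5 in \cite{cla4}, proceeding by structural induction on the critical formula $E(\vec{s})$ with cases on its outermost choice operator. Reasoning in $\arfive$, I fix a witness $z$ for the antecedent $\cle E^{\circ}_{\circ}(z,\vec{s})$ and describe a winning strategy for $\overline{E(\vec{s})}$; the outer $\cla$-closure is added by a final LC step.

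In the cases where $\cal X$ itself is expected to move --- namely $E = G_0 \add G_1$ and $E = \ade y\, G(y,\vec{s})$ --- the formula $E^{\circ}_{\circ}(z,\vec{s})$ asserts that $\cal X$ makes no move during the next $\chi'(z)$ deterministic steps. By Lemma \ref{m31a}, this forces $\cal X$ either to loop without ever moving or to overspend $\chi$. In either subcase one of the disjuncts of $\mathbb{L}$ becomes true (the third disjunct when $\cal X$ loops at a critical position whose elementarization is losing; the second disjunct when it overspends). I thus derive $\cle E^{\circ}_{\circ}(z,\vec{s}) \mli \mathbb{L}$ inside $\arfive$ (indeed inside $\pa$), after which Lemma \ref{jan4d} and LC deliver the target $\cla\overline{E(\vec{s})}$.

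The substantive cases are $E = G_0 \adc G_1$ and $E = \ada y\, G(y,\vec{s})$, where the environment moves. The strategy waits for the opponent's choice (say $i\in\{0,1\}$ or a constant $a$), appends the corresponding move onto $\cal X$'s simulated run tape to obtain a new configuration $z^{*}$, and then reduces to the subformula $G_i(\vec{s})$ or $G(a,\vec{s})$ --- invoking the inductive hypothesis if that subformula is still critical, or Lemma \ref{august20b} if it is non-critical. Either reduction rests on deriving, in $\arfive$, $\cle (G_i)^{\circ}_{\circ}(z^{*},\vec{s})$ (or its $\ada$ analog) from $\cle E^{\circ}_{\circ}(z,\vec{s})$. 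I do this by iterating $\cal X$'s deterministic successor function for $\chi'(z^{*})$ steps starting at $z^{*}$ and tracking that the yield stays $G_i(\vec{s})$. The iteration is formalized via $\arfive$-Induction (with the $+1$-step rule using Axiom 8), on a polynomially bounded induction formula of the shape $\ade w\bigl(|w|\leq S \mlc \Phi(x, w, z^{*}, \vec{s})\bigr)$, where $\Phi$ is an elementary predicate expressing ``$w$ is the $x$th deterministic successor of $z^{*}$ and its yield is $G_i(\vec{s})$'' and $S$ is a polynomial sizebound on $w$ arising from the space bound $\chi$.

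The main obstacle is verifying that this induction formula is genuinely polynomially bounded, as $\arfive$-Induction requires. Although the induction runs for $\chi'(z^{*})$ steps --- exponential in $|z^{*}|$ --- each intermediate configuration has code size polynomial in $|z^{*}|$, because $\cal X$ runs in space $\chi$, so the sizebound $S$ is indeed polynomial. A smaller subtlety is the handling of unreasonably long environment moves in the $\ada$-case: following the $\overline{\cdot}$ convention, any such move makes $\mathbb{L}$ true and the argument short-circuits through Lemma \ref{jan4d} exactly as in the first group of cases. This is where the proof genuinely exploits polynomial space (rather than polynomial time): the length of the simulation between successive environment moves is exponential in $|z^{*}|$, but its polynomial sizebound on states lets $\arfive$-Induction control it.
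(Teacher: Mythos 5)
Your proposal takes a fundamentally different (and more complicated) route than the paper's, and it contains a step that does not go through.

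The paper's own proof of this lemma is very short. For a \emph{critical} $E$, the elementarization $\elz{E(\vec{s})}$ is uniformly false, so the very same argument already given in the proof of Lemma \ref{august20b} shows, without any case analysis on the outermost operator, that $E^{\circ}_{\circ}(z,\vec{s})$ implies $\mathbb{L}$ provably in $\pa$: a position whose elementarization is false cannot be won by idling, so once $\cal X$ fails to move for $\chi'(z)$ deterministic steps, by Lemma \ref{m31a} it either never moves again (and hence loses) or moves later while overspending its space budget (again falsifying ``$\cal X$ wins $X$ in space $\chi$''). Either way $\mathbb{L}$ is true. From $\cle E^{\circ}_{\circ}(z,\vec{s}) \mli \mathbb{L}$ and Lemma \ref{jan4d} ($\arfive\vdash \mathbb{L}\mli\cla\overline{E}$), one closes with a single LC step. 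No structural induction on $E$, no reduction to subformulas, and no $\arfive$-Induction-based simulation tracking are needed; that machinery belongs to Lemmas \ref{m31c} and \ref{m2a}, not here. You already have the right key idea buried in your $\add$/$\ade$ cases, but you artificially restrict it: the ``$\cal X$ idles in a losing position'' argument does not depend on the outermost operator, only on criticality.

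The genuine gap is in your $\adc$/$\ada$ cases. You propose to wait for the opponent's move, form $z^{*}$, and then derive $\cle (G_i)^{\circ}_{\circ}(z^{*},\vec{s})$ (or the $\ada$-analogue) from $\cle E^{\circ}_{\circ}(z,\vec{s})$. This derivation is unjustified and in general false. By definition, $E^{\circ}_{\circ}(z,\vec{s})$ only says that $z$ has yield $E(\vec{s})$ and so does its $\chi'(z)$th \emph{deterministic} successor --- i.e., $\cal X$ stays put in the scenario where the environment never moves from $z$ onward. The instant the environment moves, producing $z^{*}$, this tells you nothing about $\cal X$'s behavior from $z^{*}$: $\cal X$ may perfectly well respond immediately, so the double-circled property at $z^{*}$ can fail. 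All one can legitimately conclude is the single-circled $(G_i)^{\circ}(z^{*},\vec{s})$ (the yield is right), which is exactly what Lemma \ref{m2a}(a) gives, and that is a different lemma serving a different purpose. Your ensuing discussion of polynomial sizebounds and $\arfive$-Induction addresses a non-issue; the real obstacle is logical, not a matter of bounding the induction formula.
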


\begin{proof} The same as the proof of Lemma 14.5 of \cite{cla4}, taking into account that Lemma 14.3 of \cite{cla4} on which the latter relies continues to hold in our present case according to Lemma \ref{jan4d}.\vspace{-7pt}   
\end{proof}

In the subsequent two lemmas, the notational and terminological conventions of Appendix A of \cite{cla4} are adopted without any changes whatsoever. 

 \begin{lemma}\label{m31b}
Lemmas A.1 through A.6 of \cite{cla4} continue to hold (with ``$\arfive$'' instead of ``$\arfour$'', of course).
\end{lemma}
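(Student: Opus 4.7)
The plan is to treat Lemma~\ref{m31b} as a bulk transfer statement and verify, for each of Lemmas A.1--A.6 of \cite{cla4}, that the original proof either goes through verbatim in the present setting or requires only a cosmetic substitution. The key enabling observation is structural: Appendix A of \cite{cla4} is devoted to the arithmetization of HPM configurations --- encoding codes, recognizing legitimacy, computing yields, and tracking the deterministic-successor relation --- and these notions are, by construction, formalizable in pure $\pa$. Since the Peano Axioms 1--7 are among the nonlogical axioms of $\arfive$ (as announced at the start of Section~\ref{ss11}), any $\pa$-provable sentence is automatically $\arfive$-provable, so the purely arithmetical content of A.1--A.6 transfers to $\arfive$ without any work.

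The next step is to inspect each $A.i$ and isolate the places where the ambient theory (rather than $\pa$ proper) is invoked. Wherever such an invocation occurs, I would appeal to Fact~\ref{m24b}: every $\arfour$-provable sentence is $\arfive$-provable. This collapses most of the residual gap in one stroke, since the original Appendix-A proofs that do use theory-level reasoning use it through the rule of Logical Consequence and $\arfour$-Induction, both of which are subsumed by what is available in $\arfive$ (in particular, any $\arfour$-Induction step in the proof can be rerun inside $\arfive$ by the argument of Fact~\ref{m24b}). If Axiom~9 of $\arfour$ is ever needed, it is supplied by Fact~\ref{onesuc}.

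The genuinely new issue is the replacement of $\chi$ by $\chi'$. Any statement or proof step in A.1--A.6 that mentions the ``$\chi(a)$th deterministic successor'' of a configuration $a$ must be re-read with ``$\chi'(a)$th deterministic successor'' in its place, in accordance with our modified definition of $E^{\circ}_{\circ}(z,\vec{s})$. This is exactly the substitution that makes Lemma~\ref{m31a} the right bound on how many steps separate two consecutive moves of $\cal X$ in the present (polynomial-space) regime, parallel to the role $\chi$ played in the polynomial-time regime of \cite{cla4}. Since $\chi'(z)$ is itself a $\pa$-definable term and the original derivations are structurally indifferent to whether the simulation length is polynomial in $z$ or in $|z|$, rerunning each proof with the substituted term preserves all the arithmetical identities and induction schemes involved.

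The main obstacle I anticipate is therefore not mathematical but bookkeeping: carefully tracking every occurrence of $\chi$ in the statements and proofs of A.1--A.6 and verifying that replacing it by $\chi'$ does not break any quantitative estimate --- e.g., that the arithmetizations of ``the $\chi'(a)$th deterministic successor of $a$'' behave under induction on $a$ in the same way as their $\chi$-counterparts did, and that Lemma~\ref{m31a} is invoked at the right place in lieu of its $\chi$-analogue from \cite{cla4}. Once this routine (but tedious) check is completed lemma-by-lemma, Lemma~\ref{m31b} follows.
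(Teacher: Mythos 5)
Your plan is essentially correct and takes the same broad route as the paper (verbatim transfer of the Appendix~A proofs, backed by Fact~\ref{m24b} and the observation that $\arfive$ contains $\pa$), but you spend a disproportionate amount of effort on a non-issue. The paper's own proof is a single sentence: the differences between the present context and that of \cite{cla4} are \emph{fully irrelevant} to Lemmas~A.1--A.6. In particular, the $\chi$-versus-$\chi'$ substitution, which you single out as ``the genuinely new issue'' and spend the better part of two paragraphs on, simply does not arise in A.1--A.6: those lemmas concern only the pure-$\pa$ arithmetization machinery (codes, legitimacy, yield, deterministic successor) and never mention the bounding function at all. The $\chi'$ modification enters only through the redefinition of $E^{\circ}_{\circ}$, which is consumed downstream in Lemma~\ref{m31c} (the strengthened A.7) and Lemma~\ref{m2a}(b), not in A.1--A.6. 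Your hedged phrasing (``any statement \ldots that mentions'') saves you from outright error, but presenting this as the central obstacle misallocates the argument's weight; a more accurate reading would have led you directly to the paper's one-line justification. The parts of your plan that do track the paper --- $\pa \subseteq \arfive$, Fact~\ref{m24b} subsuming any $\arfour$-level reasoning, and Fact~\ref{onesuc} supplying Axiom~9 if needed --- are exactly right and are what make the transfer immediate.
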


\begin{proof} The proofs of those lemmas given in \cite{cla4} go through without any changes, as the differences between our present context and the context of \cite{cla4} are fully irrelevant to them.  
\end{proof}

 \begin{lemma}\label{m31c}
Lemma A.7  of \cite{cla4} holds in our present case in  the following stronger form: \[\arfive\vdash \mathbb{C}\mli \mathbb{A}'(z,r)\add \ade x\mathbb{B}(z,x).\]
\end{lemma}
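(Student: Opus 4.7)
The plan is to follow the structure of the proof of Lemma A.7 of \cite{cla4} while replacing the binary induction of $\arfour$ by the unary induction of $\arfive$, and strengthening the analysis so as to explicitly produce the witnessing configuration $x$ claimed by $\ade x\mathbb{B}(z,x)$. The underlying intuition is that, instead of doubling the number of simulated computation steps with each application of the induction rule (as was done in \cite{cla4}), we simulate the computation of $\cal X$ one deterministic step at a time, iterating the process $\chi'(z)$-many times. This is affordable in space but not in time, which is precisely why the argument now lives in $\arfive$ rather than $\arfour$.

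First I would introduce a polynomially bounded auxiliary formula $G(n)$ of the shape
\[\mathbb{C} \mli \mathbb{A}'(z,r) \add \ade x\bigl(|x|\leq p(|z|)\mlc \Psi(z,x,n)\bigr),\]
where $p$ is an explicit polynomial bounding the size of every legitimate configuration reachable by $\cal X$ in at most $\chi'(z)$ steps (such a $p$ exists because $\cal X$ runs in space $\chi$ and configurations, coded as in Appendix A of \cite{cla4}, require space only polynomial in $|z|$), and $\Psi(z,x,n)$ is a natural elementary formalization of the predicate ``$x$ is the $n$th deterministic successor of $z$''. The explicit polynomial sizebound $|x|\leq p(|z|)$ is what keeps $G(n)$ polynomially bounded in the sense of Section \ref{a18}.

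Next I would verify the two premises of $\arfive$-Induction on $n$. The basis $G(0)$ is handled by choosing the right $\add$-disjunct with $x:=z$, since the $0$th deterministic successor of $z$ is $z$ itself. For the inductive step $G(n) \mli G(n+1)$, I would wait for Environment to resolve the consequent and then case-split: if the left $\add$-disjunct $\mathbb{A}'(z,r)$ had already been chosen in the antecedent, we propagate it; if the right $\add$-disjunct in the antecedent produced a configuration $x$ with $\Psi(z,x,n)$, we inspect $x$. Either $\cal X$ moves in $x$, in which case we select the left $\add$-disjunct of the consequent with the appropriate $r$, or the single deterministic successor $x'$ of $x$ satisfies $\Psi(z,x',n+1)$ and $|x'|\leq p(|z|)$, in which case we select the right $\add$-disjunct and supply $x'$. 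The single-step transition $x\mapsto x'$ and the preservation of the polynomial size bound are handled by the adaptations of Lemmas A.1--A.6 collected in Lemma \ref{m31b}, together with Axiom 8 of $\arfive$.

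Thus $\arfive$-Induction yields $\ada G(n)$; instantiating $n$ to $\chi'(z)$ and using Lemma \ref{m31a} to convert $\Psi(z,x,\chi'(z))$ into $\mathbb{B}(z,x)$ (which is definitionally about the $\chi'(z)$th deterministic successor), LC delivers the target. The main obstacle I anticipate is verifying that $G(n)$ remains genuinely polynomially bounded throughout the inductive step: this requires the polynomial sizebound $p(|z|)$ to be chosen once and for all independently of $n$, and no auxiliary choice quantifiers introduced along the way to escape such bounds. That is precisely what the polynomial space complexity of $\cal X$ guarantees, and is also the reason why the switch to $\chi'$ (polynomial in $z$, hence exponential in $|z|$) as opposed to $\chi$ is crucial: the step counter $n$ is free to range up to an exponential value, while the witnesses produced by $\ade$ remain polynomially sized.
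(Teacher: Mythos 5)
Your auxiliary formula $G(n)$ omits the key piece of information that the induction must carry, namely a place to record the configuration at which $\cal X$ makes a move. Its right $\add$-disjunct $\ade x\bigl(|x|\leq p(|z|)\mlc\Psi(z,x,n)\bigr)$ merely asserts that $x$ is the $n$th deterministic successor of $z$, while its left disjunct $\mathbb{A}'(z,r)$ is an elementary statement about the free variable $r$ which the player cannot make true by any choice. This breaks the inductive step: when you inspect the witness $x$ returned by the antecedent and find that $\cal X$ moves in $x$, your plan to ``select the left $\add$-disjunct of the consequent with the appropriate $r$'' is not a legal move --- $r$ is a free variable of the target, not a quantified one, and $\mathbb{A}'(z,r)$ is precisely falsified when $\cal X$ moves before step $r$. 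It also breaks the end game: after instantiating $n:=\chi'(z)$, knowing the $\chi'(z)$th deterministic successor of $z$ does not yield $\ade x\mathbb{B}(z,x)$ by LC, since the move configuration (if there is one) sits at some earlier step $i<\chi'(z)$ by Lemma \ref{m31a} and is not recoverable from the last configuration alone.

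The paper instead keeps both $\ade x$-options inside the inductive formula and runs the induction directly on $r$:
\[\mathbb{C}(z)\mli \ade x\bigl(|x|\leq |z|+ |r| \mlc \mathbb{A}(z,x,r)\bigr) \add \ade x \bigl(|x|\leq (|z|+ |r|)\zero \mlc  \mathbb{B}(z,x)\bigr).\]
Both the base case (with $x:=z$) and the inductive step (with $x:=d$, the deterministic successor of the antecedent's $\mathbb{A}$-witness, computed via Lemma A.6 of \cite{cla4}) choose the $\mathbb{B}$-option when the configuration is a move state and the $\mathbb{A}$-option otherwise, while a $\mathbb{B}$-witness in the antecedent is simply copied into the consequent. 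The bounds $|z|+|r|$ and $(|z|+|r|)\zero$ grow with $|r|$ but remain polynomial sizebounds. Your step-by-step simulation idea and your attention to polynomial boundedness are both correct and indeed drive the paper's proof; what is missing is a $\mathbb{B}$-disjunct inside $G(n)$ itself.
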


\begin{proof} This proof, as expected, is very close to the proof of Lemma A.7 of \cite{cla4}. Argue in $\arfive$. By $\arfive$-Induction on $r$, we want to show  
\[\mathbb{C}(z)\mli \ade x\bigl(|x|\leq |z|+ |r| \mlc \mathbb{A}(z,x,r)\bigr) \add \ade x \bigl(|x|\leq (|z|+ |r|)\zero \mlc  \mathbb{B}(z,x)\bigr),\] 
from which the target $\mathbb{C}(z)\mli  \mathbb{A}'(z,r) \add \ade x \mathbb{B}(z,x) $ easily follows by LC.

To solve the base   \(\mathbb{C}(z)\mli \ade x\bigl(|x|\leq |z|+ |0| \mlc \mathbb{A}(z,x,0)\bigr) \add \ade x\bigl(|x|\leq (|z|+ |0|)\zero\mlc  \mathbb{B}(z,x)\bigr)
\), we figure out whether the state of $z$ is a move state or not. If yes, we choose the right $\add$-disjunct; if not, we choose the left $\add$-disjunct. In either case, we further choose the value of $z$ for the variable $x$ and win.  

The inductive step is 
\begin{equation}\label{m4c}
\begin{array}{l}
\Bigl(\mathbb{C}(z)\mli \ade x\bigl(|x|\leq |z|+ |r| \mlc \mathbb{A}(z,x,|r|)\bigr) \add \ade x\bigl(|x|\leq (|z|+ |r|)\zero\mlc  \mathbb{B}(z,x)\bigr)\Bigr)\ \mli \\
\Bigl(\mathbb{C}(z)\mli \ade x\bigl(|x|\leq |z|+ |r\successor| \mlc \mathbb{A}(z,x,|r\successor|)\bigr) \add \ade x\bigl(|x|\leq (|z|+ |r\successor|)\zero\mlc  \mathbb{B}(z,x)\bigr)\Bigr).
\end{array}
\end{equation}
To solve (\ref{m4c}), we wait till Environment selects one of  the two $\add$-disjuncts in the  antecedent. 

If the right $\add$-disjunct is selected, we wait further till a constant $c$ for $x$ is selected there. Then  we  select the right $\add$-disjunct in the consequent, and choose the same $c$ for $x$ in it.  

Suppose now the left $\add$-disjunct is selected in the antecedent of (\ref{m4c}). Wait further till a constant $c$ for $x$ is selected there. We may assume that $\mathbb{A}(z,c,r)$  is true, or else we win the game. Using Lemma A.6 of \cite{cla4} (which continues to hold by our Lemma \ref{m31b}),  we find  the deterministic successor $d$ of the configuration $c$. With a little thought, one can see that the size of $d$ cannot exceed the sum of the sizes of $z$ and $r\successor$ more than twice, so that $|d|\leq (|z|+ |r\successor|)\zero$ holds. We figure out whether the state of $d$ is a move state or not. If not, we select the left $\add$-disjunct in the consequent of  (\ref{m4c}), otherwise,  select the right disjunct. In either case, we further choose $d$ for $x$  and win.   
\vspace{-7pt} 
\end{proof}

\begin{lemma}\label{m2a}
Lemma 14.6 of \cite{cla4} continues to hold. That is: 

Assume $E(\vec{s})$ is a formula all of whose free variables are among $\vec{s}$, and $y$ is a variable not occurring in $E(\vec{s})$. Then:      

(a) For every $(\oo,y)$-development $H_{i}(y,\vec{s})$ of $E(\vec{s})$,  $\arfive$ proves $E^{\circ}_{\circ}  (z,\vec{s})  \mli \ade u H_{i}^{\circ}(u,y,\vec{s})\).

(b) Where $H_1(y,\vec{s}),\ldots,H_n(y,\vec{s})$ are all of the $(\pp,y)$-developments of $E(\vec{s})$, $\arfive$ proves
\begin{equation}\label{m2e}   E^{\circ}  (z,\vec{s})  \mli E^{\circ}_{\circ}  (z,\vec{s})\add \mathbb{L}\add \ade u\ade y H_{1}^{\circ}(u,y,\vec{s})\add\ldots\add\ade u\ade y H_{n}^{\circ}(u,y,\vec{s})  . \end{equation}
\end{lemma}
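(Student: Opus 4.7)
The plan is to follow the structure of the proof of Lemma 14.6 of \cite{cla4}, making only the adjustments forced by the modifications in our present context: $E^{\circ}_{\circ}$ now involves the $\chi'(a)$th (rather than the old $\chi(a)$th) deterministic successor, $\mathbb{L}$ now asserts a space (rather than a time) violation, and the underlying induction lemma is the strengthened Lemma \ref{m31c} in place of Lemma A.7 of \cite{cla4}. Part (a) should require no essentially new idea beyond that of \cite{cla4}; part (b) is where Lemma \ref{m31c} genuinely buys us something.

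For clause (a), I argue in $\arfive$ (in fact in $\pa$) and assume $E^{\circ}_{\circ}(z,\vec{s})$. By definition, both $z$ and its $\chi'(z)$th deterministic successor $b$ are legitimate configurations whose yield is $E(\vec{s})$. An $(\oo,y)$-development $H_i(y,\vec{s})$ corresponds to a specific choice move of $\oo$ in $E(\vec{s})$ parametrized by $y$. Using the Peano-formalizable manipulations on encoded configurations made available by Lemma \ref{m31b}, I construct the configuration $u$ obtained from $b$ by appending the corresponding $\oo$-labeled move to its run tape. That $u$ is legitimate and has yield $H_i(y,\vec{s})$ is a straightforward computation, so $H_i^{\circ}(u,y,\vec{s})$ holds; choosing this $u$ in the $\ade u$ settles the implication.

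For clause (b), I argue in $\arfive$, assume $E^{\circ}(z,\vec{s})$, and invoke Lemma \ref{m31c} with $r=\chi'(z)$ (using Lemma \ref{m31b} to produce the side-condition $\mathbb{C}(z)$ from $E^{\circ}(z,\vec{s})$). This yields $\mathbb{A}'(z,\chi'(z))\add \ade x\mathbb{B}(z,x)$, at which point I wait for the adversary to choose a $\add$-disjunct. In the $\mathbb{A}'$ case, $\cal X$ has made no move during $\chi'(z)$ internal steps starting from $z$; by Lemma \ref{m31a}, this either means $\cal X$ will never move from $z$ onward in this scenario (in which case $E^{\circ}_{\circ}(z,\vec{s})$ holds) or else that $\cal X$ is violating its space bound (in which case $\mathbb{L}$ holds), and I select the matching disjunct of the consequent in (\ref{m2e}). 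In the $\ade x\mathbb{B}(z,x)$ case, the adversary supplies a legitimate configuration $x$ whose state is a move state; by inspecting $\cal X$'s move at $x$, I algorithmically determine which $(\pp,y)$-development $H_j(y,\vec{s})$ is triggered, extract the corresponding value of $y$, select the $\ade u\ade y H_j^{\circ}(u,y,\vec{s})$ disjunct of the consequent, and choose $x$ for $u$ together with the extracted constant for $y$. The resulting position is true and so I win.

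The main obstacle I anticipate is the verification that Lemma \ref{m31c}, in its strengthened form, is indeed strong enough for this argument in the present polynomial-space setting: unlike in \cite{cla4}, where the analogous step used the weaker Lemma A.7 together with a polynomial-time solver, our $\cal X$ is only polynomial-space and may perform exponentially (in $|z|$) many internal steps between moves. This is exactly why the parameter $r$ in Lemma \ref{m31c} is instantiated to $\chi'(z)$ rather than to $\chi(|z|)$, and one has to check carefully that the size bounds $|x|\leq |z|+|r|$ and $|x|\leq (|z|+|r|)\zero$ appearing in $\mathbb{A}$ and $\mathbb{B}$ suffice to certify the $E^{\circ}_{\circ}$ and $H_j^{\circ}$ witnesses, so that the whole argument stays inside the polynomially-bounded fragment to which $\arfive$-Induction applies.
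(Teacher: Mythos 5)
Your overall plan is the right one (and matches the paper's): appeal to Lemma \ref{m31b} for clause (a), and for clause (b) compute $\chi'(z)$ as a legitimate term, feed it into Lemma \ref{m31c}, and split on the resulting $\add$. However, your handling of the $\mathbb{A}'$ branch contains a genuine error.

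You claim that in the $\mathbb{A}'$ case there is a further dichotomy --- either $\cal X$ never moves after $z$ (whence $E^{\circ}_{\circ}(z,\vec{s})$ holds) or $\cal X$ moves later and thereby violates its space bound (whence $\mathbb{L}$ holds) --- and that one ``selects the matching disjunct.'' This is wrong on two counts. First, $E^{\circ}_{\circ}(z,\vec{s})$ says nothing about whether $\cal X$ ever moves after the $\chi'(z)$th step; by definition it only asserts that $z$ and its $\chi'(z)$th deterministic successor $b$ are both legitimate configurations with yield $E(\vec{s})$. Under the standing assumption $E^{\circ}(z,\vec{s})$, the truth of $\mathbb{A}'\bigl(z,\chi'(z)\bigr)$ already yields exactly this (no move is made between $z$ and $b$, so the yield is preserved), and hence $E^{\circ}_{\circ}(z,\vec{s})$ holds \emph{unconditionally} in this branch. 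Second, even setting the first point aside, the dichotomy you propose is not decidable by the strategy at play time: in the game (\ref{m2e}) one must commit to a $\add$-disjunct based on information actually available, and ``$\cal X$ will never move from $z$ onward'' is not that kind of information. The paper's proof simply selects $E^{\circ}_{\circ}(z,\vec{s})$ and is done. The reasoning you have grafted on (involving Lemma \ref{m31a} and the truth of $\mathbb{L}$) belongs to the proof of Lemma \ref{august20b}, where one argues in $\pa$ about the \emph{truth} of $\mathbb{L}$ rather than constructing a play-time strategy; importing it here conflates the two.

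The $\mathbb{B}$ branch and clause (a) are described at a level of detail that is hard to adjudicate against \cite{cla4} without its text in front of us, but they are in the right spirit: the paper discharges both by observing that the corresponding arguments from Sections A.2 and A.3 of \cite{cla4} go through verbatim once Lemmas A.1--A.7 and Fact 12.6 are known to hold (here you should also note, as the paper does, that $\chi'(z)$ being polynomial in $z$ is what lets Fact 12.6 of \cite{cla4} apply so that $\chi'(z)$ can be $\ade$-computed). Your closing worry about the size bounds in $\mathbb{A}$ and $\mathbb{B}$ is reasonable, but it is resolved inside the proof of Lemma \ref{m31c} itself and does not need to be revisited here.
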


\begin{proof} The proof of clause (a) of Lemma 14.6, given in Section A.2 of \cite{cla4}, only relies on Lemmas A.4 and A.6 of \cite{cla4}, which (by our Lemma \ref{m31b}), continue to hold in our present case. So, clause (a) of our lemma is taken care of. 

For clause (b), assume its conditions.  
In $\arfive$, we can solve  (\ref{m2e}) as follows. Assume  $E^{\circ}  (z,\vec{s})$, which,  of course, implies $\mathbb{C}(z)$. Since $\chi'(z)$ is a polynomial function, we may assume that it is represented as a legitimate term of the language of $\pa$, so that Fact 12.6 of \cite{cla4} applies.  We compute the value  of  $\chi'(z)$, and use that value to specify $r$ in the resource of  
Lemma \ref{m31c}. As a result, we get the resource
$\mathbb{A}'\bigl(z,\chi'(z)\bigr) \add \ade x  \mathbb{B}(z,x)$. This means that we will either know that $\mathbb{A}'\bigl(z,\chi'(z)\bigr)$ is true, or find a constant $a$ for which we will know that $\mathbb{B}(z,a)$ is true. 

If $\mathbb{A}'\bigl(z,\chi'(z)\bigr)$ is true,  then so is $E^{\circ}_{\circ}  (z,\vec{s})$ and, by choosing the latter, we win (\ref{m2e}). 

Now suppose $\mathbb{B}(z,a)$ is true. Then the desired conclusion can be achieved by literally repeating  the corresponding part of the proof given in Section A.3 of \cite{cla4}, as all lemmas relied upon there continue to hold in our present case as well.
\end{proof}

\begin{lemma}\label{m2c}
Lemma 14.7 of \cite{cla4} continues to hold. That is: 

Assume $E(\vec{s})$ is a formula all of whose free variables are among $\vec{s}$. 
Then $\arfive$ proves $E^{\circ}  (z,\vec{s})  \mli \overline{E(\vec{s})}$.  
\end{lemma}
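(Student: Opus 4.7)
The plan is to mimic the proof of Lemma 14.7 of \cite{cla4}, proceeding by induction on the complexity of $E(\vec{s})$. The whole point of the preceding chain of lemmas (\ref{august12a}--\ref{m2a}) is to port every ingredient of that earlier proof to the present, space-based setting: $\mathbb{L}$ now refers to space rather than time, $\chi'$ replaces $\chi$, the $\arfive$-Induction supersedes the $\arfour$-Induction via Lemma \ref{m31c}, and Lemmas A.1--A.6 of \cite{cla4} are absorbed wholesale through Lemma \ref{m31b}. So the induction hypothesis reads: for every proper subformula $D(\vec{t})$ of $E(\vec{s})$, $\arfive \vdash D^{\circ}(z,\vec{t}) \mli \overline{D(\vec{t})}$.

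Argue in $\arfive$ and assume $E^{\circ}(z,\vec{s})$; the goal is to win $\overline{E(\vec{s})}$. The central step is to apply clause (b) of Lemma \ref{m2a}, which supplies the reduction
\[E^{\circ}(z,\vec{s})\mli E^{\circ}_{\circ}(z,\vec{s})\add \mathbb{L}\add \ade u\ade y\, H_{1}^{\circ}(u,y,\vec{s})\add\cdots\add\ade u\ade y\, H_{n}^{\circ}(u,y,\vec{s}),\]
where $H_{1},\ldots,H_{n}$ enumerate the $(\pp,y)$-developments of $E(\vec{s})$. The strategy then selects the correct $\add$-disjunct (this is a computable choice, as in the proof of Lemma \ref{m2a}) and handles each case. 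If $E^{\circ}_{\circ}(z,\vec{s})$ is selected, split on whether $E(\vec{s})$ is critical: for critical $E$, Lemma \ref{august20a} gives $\cla\overline{E(\vec{s})}$ directly from $\cle E^{\circ}_{\circ}(z,\vec{s})$; for non-critical $E$, Lemma \ref{august20b} gives $\elz{\overline{E(\vec{s})}}$, and the standard CoL dispatch through LC using the induction hypothesis on the immediate subformulas of $\overline{E(\vec{s})}$ produces a win (this is where structural recursion is used non-trivially for top-level $\mlc,\mld,\mli,\adc,\ada$). If $\mathbb{L}$ is selected, Lemma \ref{jan4d} immediately delivers $\cla\overline{E(\vec{s})}$. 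If some $\ade u\ade y\, H_{i}^{\circ}(u,y,\vec{s})$ is selected, witnesses $a,b$ are fetched and the induction hypothesis applied to $H_{i}$ (which is strictly simpler than $E$) yields $\overline{H_{i}(b,\vec{s})}$; then, because $H_{i}(y,\vec{s})$ is a $(\pp,y)$-development of $E(\vec{s})$, $\pp$'s winning strategy for $\overline{H_{i}(b,\vec{s})}$ extends to one for $\overline{E(\vec{s})}$ by first making (in $\overline{E(\vec{s})}$) the choice move that turns $E(\vec{s})$ into $H_{i}(b,\vec{s})$, and then copying the strategy for the residue.

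The main obstacle is the development branch: we must verify that the LC combination of $\overline{H_{i}(b,\vec{s})}$ with the appropriate single $\pp$-move in $\overline{E(\vec{s})}$ indeed wins the latter, uniformly over the shape of the top-level operator of $E$. In \cite{cla4} this is carried out case by case on whether that operator is $\add$, $\ade$, etc., and the argument rests only on the abstract notion of $(\pp,y)$-development together with Lemmas A.1--A.6; since both are preserved in our setting (Lemma \ref{m31b}), the case analysis transcribes verbatim. All other pieces---Lemma \ref{jan4d} for the $\mathbb{L}$-branch, the critical/non-critical split in the $E^{\circ}_{\circ}$-branch, and the bookkeeping that checks $H_{i}$ is of lower complexity than $E$---are purely routine, so nothing beyond a careful rereading of the corresponding sections of \cite{cla4} is required.
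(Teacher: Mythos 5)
Your proposal is correct and takes essentially the same route as the paper: the paper's own proof of Lemma \ref{m2c} is a one-line appeal to the proof of Lemma 14.7 of \cite{cla4}, noting that all the lemmas that proof relies on (14.3--14.6 of \cite{cla4}) have been ported to the $\arfive$ setting via Lemmas \ref{jan4d}, \ref{august20b}, \ref{august20a}, and \ref{m2a}. Your write-up reconstructs the internals of that earlier argument (the metainduction on complexity, the dispatch via clause (b) of Lemma \ref{m2a}, the critical/non-critical split, and the development machinery), but the conclusion is the same as the paper's: every ingredient has been carried over, so the argument transcribes without change.
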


\begin{proof} The proof of Lemma 14.7 of \cite{cla4} goes through here without any changes, taking into account that Lemmas 14.3, 14.4, 14.5 and 14.6 of \cite{cla4} on which the latter relies continue to hold in our present case according to Lemmas \ref{jan4d}, \ref{august20b},  \ref{august20a} and  \ref{m2a}. 
\end{proof}
 
Now we are ready to claim the target result of this section in exactly the same way as at the end of Section 14 of \cite{cla4}. Namely: Let $a$ be the code of the start configuration of $\cal X$, and $\hat{a}$ be a standard variable-free term representing $a$, such as $0$ followed by $a$  ``$\successor$''s.  Of course, $\pa$ and hence $\arfive$ proves $X^\circ(\hat{a})$. By Fact 12.6 of \cite{cla4}, $\arfive$ proves $\ade z(z=\hat{a})$. 
 By Lemma \ref{m2c},   $\arfive$ also proves $\ada z\bigl(X^{\circ}  (z)  \mli \overline{X}\bigr)$. These three can be seen to imply  $\overline{X}$ by LC. Thus, $\arfive\vdash\overline{X}$, as desired. 

\section{$\arsix$,  a theory of elementary recursive computability}\label{ssel}

The language of theory $\arsix$ is the same as that of  $\arfive$, and so are its axioms and the logical rule LC. In addition, just like $\arfive$, $\arsix$ has a single nonlogical rule, which  we call {\bf $\arsix$-Induction}:
\[\frac{\ada \bigl(F(0)\bigr)\hspace{30pt} \ada\bigl( F(x)\mli F(x\successor)\bigr)}{\ada \bigl(F(x)\bigr)},\]
where $F(x)$ is any exponentially bounded formula.

Thus, the only difference between $\arfive$ and $\arsix$ is that, while the induction rule of the former requires the formula $F(x)$ to be polynomially bounded, 
the induction rule of the latter has the weaker requirement that $F(x)$ should be {\em exponentially bounded}.  Below we explain the precise meaning of this term. 

For a variable $x$, by an {\bf exponential sizebound for $x$} we shall mean a standard   formula of the language of $\pa$ saying that $|x|\leq\tau(y_1,\ldots,y_n)$, where $y_1,\ldots,y_n$ are any variables different from $x$, and $\tau(y_1,\ldots,y_n)$ is any   $(0,\successor,+,\times)$-combination of $y_1,\ldots,y_n$.  For instance, $|x|\leq y+ z$ is an exponential sizebound for $x$, which is a formula of $\pa$ saying that    the size of $x$ does not exceed the sum of  $y$ and $z$.  Now, we say that a formula $F$ is {\bf exponentially bounded} iff 
 every subformula $\ada x G(x)$ (resp. $\ade x G(x)$) of $F$ has the form   $\ada x ( S(x)\mli H(x))$ (resp. $\ade x ( S(x)\mlc H(x))$), where  $S(x)$ is an exponential sizebound for $x$  none of whose free variables is bound by $\cla$ or $\cle$ within $F$.

\begin{fact}\label{a1a}
Every $\arfive$-provable (and hence also every $\arfour$-provable) sentence is provable in $\arsix$.
\end{fact}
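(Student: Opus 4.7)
The plan is to proceed by induction on the length of the $\arfive$-proof of a given sentence $X$. Since $\arfive$ and $\arsix$ share exactly the same axioms and exactly the same logical rule LC, the axiom and LC cases transfer verbatim to $\arsix$, and only the case in which the final step of the $\arfive$-proof is an application of $\arfive$-Induction requires attention. So the whole task reduces to showing that $\arsix$ is closed under $\arfive$-Induction.

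For this, I would observe that every polynomially bounded formula is already exponentially bounded, which would make every instance of $\arfive$-Induction an instance of $\arsix$-Induction. Indeed, any polynomial sizebound $|y|\leq p(|z_1|,\ldots,|z_n|)$ is provably in $\pa$ implied by the corresponding exponential sizebound $|y|\leq p(z_1,\ldots,z_n)$, because $\pa\vdash \cla z(|z|\leq z)$ and every $(0,\successor,+,\times)$-combination $p$ is monotone in its arguments. This is in line with the paper's explicit description of exponential boundedness as a ``weaker requirement'' than polynomial boundedness. Given this containment, the $\arfive$-Induction step transfers verbatim to $\arsix$, and the outer induction on proof length closes.

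If the containment between the two classes is not to be read as a literal syntactic inclusion but rather as an inclusion up to the $\pa$-provable implication between the two sizebound formulas, the remedy is to first transform $F(x)$ into its exponentially bounded counterpart $\hat F(x)$ by replacing each polynomial sizebound with the corresponding exponential one, then derive $\ada \hat F(0)$ and $\ada(\hat F(x)\mli \hat F(x\successor))$ in $\arsix$ from the given $\arsix$-provable premises $\ada F(0)$ and $\ada(F(x)\mli F(x\successor))$ via LC, apply $\arsix$-Induction to $\hat F(x)$ to obtain $\ada \hat F(x)$, and finally recover $\ada F(x)$ by LC. The main obstacle in this alternative route is polarity: the direction of implication between a polynomial sizebound and its exponential counterpart flips at $\ada$ versus $\ade$, so the propagation through $F$ must be tracked by a straightforward structural induction. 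Since LC is sound and complete for $\cltw$, once the base inequality $|z|\leq z$ is in hand this propagation is routine and presents no real difficulty.
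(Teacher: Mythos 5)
Your reduction to showing that $\arsix$ is closed under $\arfive$-Induction is the right outline, and matches the paper. But both routes you propose for that closure step have problems.

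Your primary route asserts that every polynomially bounded formula is already exponentially bounded, so that $\arfive$-Induction is literally a special case of $\arsix$-Induction. This is false as a syntactic matter: a polynomial sizebound has the shape $|y|\leq \tau(|z_1|,\ldots,|z_n|)$ with the \emph{sizes} $|z_i|$ plugged into the polynomial, whereas an exponential sizebound has the shape $|y|\leq \tau(z_1,\ldots,z_n)$ with the variables themselves plugged in. Since $|z_i|$ is not a $(0,\successor,+,\times)$-combination of variables, the former formula is simply not an exponential sizebound. The $\pa$-provable implication between the two is a semantic observation, not a membership certificate for a syntactically defined class, so ``the paper's weaker requirement'' does not translate into a literal inclusion. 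Your own sentence already concedes that one bound merely \emph{implies} the other, which should have been the signal that this route closes nothing.

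Your fallback route---replace each polynomial sizebound $S(z)$ by the corresponding exponential sizebound $S'(z)$ to get an exponentially bounded $\hat F(x)$---has the more serious gap. You note the polarity problem and wave it off as ``routine,'' but it is fatal: since $S(z)\mli S'(z)$ but not conversely, $\ada z\bigl(S'(z)\mli G\bigr)$ is a \emph{strictly harder} game than $\ada z\bigl(S(z)\mli G\bigr)$ (Environment may choose a $z$ where $S'(z)$ holds but $S(z)$ fails, and your strategy for $F$ has nothing to offer there), while at $\ade$ the direction reverses. Consequently neither $F\mli\hat F$ nor $\hat F\mli F$ is available by LC. But to derive the inductive-step premise $\ada\bigl(\hat F(x)\mli \hat F(x\successor)\bigr)$ from $\ada\bigl(F(x)\mli F(x\successor)\bigr)$, you would need \emph{both} $\hat F(x)\mli F(x)$ (in the antecedent) and $F(x\successor)\mli \hat F(x\successor)$ (in the consequent), i.e.\ full equivalence, which your replacement destroys. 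No amount of structural bookkeeping repairs this; the transformation itself is wrong. The paper's fix is to \emph{nest} rather than replace: it sends $\ada z\bigl(S(z)\mli G\bigr)$ to $\ada z\Bigl(S'(z)\mli \bigl(S(z)\mli G\bigr)\Bigr)$ and $\ade z\bigl(S(z)\mlc G\bigr)$ to $\ade z\Bigl(S'(z)\mlc \bigl(S(z)\mlc G\bigr)\Bigr)$, keeping the polynomial guard inside so that $F'$ is both exponentially bounded and LC-equivalent to $F$ given $\pa\vdash\cla z\bigl(S(z)\mli S'(z)\bigr)$. That nesting trick is the key idea your proposal is missing.
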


\begin{proof} We only need to show that $\arsix$ is closed under  $\arfive$-Induction. So, assume $F(x)$ is a polynomially bounded formula, and $\arsix$ proves (the $\ada$-closures of)  both of the following two premises of $\arfive$-Induction:
\begin{eqnarray}
 & F(0);   & \label{am24c}\\
& F(x)\mli F(x\successor). &  \label{am24e}
\end{eqnarray}
Our goal is to show that $\arsix$ proves (the $\ada$-closure of) $F(x)$, the conclusion of $\arfive$-Induction.

For every variable $z$ and every polynomial sizebound $S(z)$ for $z$ that looks like $|z|\leq \tau(|y_1|,\ldots,|y_n|)$, let $S'(z)$ denote the exponential sizebound $|z|\leq \tau(y_1,\ldots,y_n)$ for $z$.  Further, let $F'(x)$ be the result of simultaneously replacing in $F(x)$:
\begin{itemize}
\item every subformula $\ada z \bigl(S(z)\mli G\bigr)$ by $\ada z\Bigl(S'(z)\mli \bigl(S(z)\mli G)\bigr)\Bigr)$;
\item every subformula $\ade z \bigl(S(z)\mlc G\bigr)$ by $\ade z\Bigl(S'(z)\mlc \bigl(S(z)\mlc G)\bigr)\Bigr)$.
\end{itemize}
Note that $F'(x)$ is an exponentially bounded formula. Further, for each of the above sizebounds $S(z)$, $\pa$ obviously proves $\cla z\bigl(S(z)\mli S'(z)\bigr)$. This fact, together with (\ref{am24c}) and (\ref{am24e}), can be easily seen to imply $F'(0)$ and  $ F'(x)\mli F'(x\successor)$ by LC. Thus, $\arsix$ proves both  
$F'(0)$ and  $ F'(x)\mli F'(x\successor)$. Since $F'(x)$ is an exponentially bounded formula, $\arsix$-Induction applies, by which $\arsix$ proves $F'(x)$. The latter, again in conjunction with    $\cla z\bigl(S(z)\mli S'(z)\bigr)$, can be seen to imply $F(x)$ by LC.
\end{proof}

In what follows, we use $\mbox{\em EXP}(x)$ as an (alternative, linear) notation for the function $2^x$.
 
Remember the concept of an explicit polynomial function $\tau$ from Section 10 of \cite{cla4}. An {\bf explicit elementary recursive function} $\tau=\seq{\tau_{f_1},\ldots,\tau_{f_k}}$ is defined in the same way, with the only difference that now, together with $\successor$, $+$, $\times$ and $f_{1},\ldots,f_{i-1}$, each functional (graph)  $\tau_{f_i}$ is allowed to contain $\mbox{\em EXP}$ as an additional function letter with its standard interpretation.  For instance, $\seq{\mbox{\em EXP}(x+ x)_{f_1},\ f_1(f_1(x))_{f_2}}$ (with its two elements written as graphs) is an explicit elementary recursive function. This term represents --- and hence  we identify it with --- the function $2^{(2^{x+x}+2^{x+x})}$. 
When $\tau$ is an explicit elementary recursive function and $\cal M$ is a $\tau$ time (resp. space) machine, we   say that $\tau$ is an {\bf explicit elementary recursive bound} for the time (resp. space) complexity of $\cal M$.

We say that a given HPM $\cal M$ runs in {\bf elementary recursive time} (resp. {\bf space}) iff there is an (explicit) elementary recursive function $\tau$ such that $\cal M$ runs in time (resp. space) $\tau$. And we say that a given problem has an {\bf elementary recursive solution} iff it has a solution that runs in elementary recursive time. The reason why we omitted the word ``time'' here is that, as it is not hard to see (left as an exercise for the reader),  a problem has an elementary recursive time solution if and only if it has an elementary recursive space solution. 

\begin{theorem}\label{ett1}
An arithmetical problem has an elementary recursive solution iff it is provable in $\arsix$. 

Furthermore, there is an efficient procedure that takes an arbitrary extended $\arsix$-proof of an arbitrary sentence $X$ and constructs a   
 solution of $X$ (of $X^\dagger$, that is) together with an explicit elementary recursive bound for its time complexity. 
\end{theorem}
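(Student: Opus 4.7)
The plan is to adapt the proof of Theorem \ref{tt1} (for $\arfive$) by replacing ``polynomial'' with ``elementary recursive'' throughout, leveraging the fact that the class of elementary recursive functions is closed under the same operations (composition, addition, multiplication, and exponentiation) that kept the polynomial class closed in the $\arfive$ analysis.

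For soundness, I would induct on the length of the $\arsix$-proof of a sentence $X$. The axiom cases and the rule LC are handled verbatim as in the opening of Section \ref{sectsound}; the essential case is $\arsix$-Induction with an exponentially bounded $F(x)$. Here I would reuse the construction of the machine $\cal M$ from Subsections \ref{sagree}--\ref{stf} with no structural changes: the same ${\cal H}_n$'s, the same procedure $\mathbb{SIM}_n$, the same aggregations, and the same loop $\mathbb{MAIN}$. The only quantitative difference is that, since $F(x)$ is exponentially rather than polynomially bounded, the term $\eta(w)$ of Subsection \ref{sagree} that bounds move sizes is now polynomial in the \emph{values} of its parameters (not merely their bit-sizes); consequently $\eta(\mathfrak{l})$, $\mathfrak{L}$, and $\phi(\eta(\mathfrak{l}))$ become elementary recursive in the input size $\mathfrak{l}$ instead of polynomial. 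The combinatorial Lemmas \ref{beijing}--\ref{shantou} are untouched, so $\cal M$ still wins $F(x)$, and the complexity analysis at the end of Subsection \ref{stf} now delivers an elementary recursive bound rather than a polynomial one. The ``furthermore'' clause is immediate since every step of the construction is explicit.

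For completeness, I would follow Section \ref{sectcompl}, now taking $\cal X$ to be an elementary recursive solution of $X$ with explicit elementary recursive bound $\chi$. The sentence $\mathbb{L}$ and the overline transformation are defined as before. The analogues of Lemmas \ref{august12a}, \ref{jan4d}, \ref{august20b}, \ref{august20a}, \ref{m31b}, and \ref{m2c} go through unchanged, since their proofs appeal only to totality and explicitness of $\chi$, not polynomiality. The auxiliary function $\chi'(z)$ is now elementary recursive in $z$; however, the formula driving the inductive proof of Lemma \ref{m31c}'s analogue is still polynomially bounded in $|z|$ and $|r|$, so that proof appeals only to $\arfive$-Induction, which is a derived rule of $\arsix$ by Fact \ref{a1a}.

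The main obstacle is the analogue of Lemma \ref{m2a}, where $\chi'(z)$ must be ``computed'' inside $\arsix$ in order to be substituted for the variable $r$ in the conclusion of Lemma \ref{m31c}. For $\arfive$ this step used Fact 12.6 of \cite{cla4}, which supplies polynomial computability; for $\arsix$ I would first establish the following elementary-recursive version: for every explicit elementary recursive function $f$, $\arsix \vdash \ada \vec{z}\, \ade y\,(y = f(\vec{z}))$. This is proved by induction on the build-up of $f$. The base cases and closure under composition are immediate (the former from Fact 12.6 of \cite{cla4}, the latter by chaining $\ade$-quantifiers via LC). The only nontrivial step is closure under $\mbox{\em EXP}$, which I would dispatch by $\arsix$-Induction on $x$ applied to the exponentially bounded formula $\ade y\bigl(|y| \leq x\successor \mlc y = \mbox{\em EXP}(x)\bigr)$: the basis is witnessed by $y = 1$, and the inductive step passes from $y = 2^x$ to $y' = 2y = 2^{x\successor}$ using Fact \ref{onesuc}. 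With this totality fact in hand, the remainder of the completeness argument proceeds exactly as at the end of Section \ref{sectcompl}.
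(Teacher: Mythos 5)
Your completeness argument matches the paper's route: the key new ingredient is exactly the paper's Fact~\ref{a1c} ($\arsix\vdash\ade z(z=\mbox{\em EXP}(x))$, by $\arsix$-Induction on the exponentially bounded formula $\ade z(|z|\leq x\successor\mlc z=\mbox{\em EXP}(x))$) followed by the paper's Fact~\ref{a1d} (totality of every explicit elementary recursive function in $\arsix$, by structural induction), after which the machinery of Section~\ref{sectcompl} is reused. Your soundness argument, however, takes a genuinely different and substantially heavier route than the paper's. For the $\arsix$-Induction case the paper does \emph{not} reuse the sequential, space-recycling simulation of Subsections~\ref{sagree}--\ref{stf}; instead it reverts to the straightforward \emph{parallel} simulation of all $k+1$ machines ${\cal H}_0,\ldots,{\cal H}_k$ at once (as in Section~13 of \cite{cla4} for $\arfour$), which is affordable here precisely because $k\leq 2^{\ell}$ keeps everything elementary recursive, and which delivers an explicit elementary recursive \emph{time} bound directly (roughly $\mathfrak{d}\times(2^{\ell}+1)\times\phi(\eta(\ell))$). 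Your plan to port the $\arfive$ construction wholesale does work in principle --- Lemmas~\ref{beijing}--\ref{shantou} are indeed purely combinatorial, and with exponentially bounded $F(x)$ and elementary recursive premise bounds the quantities $\eta(\mathfrak{l})$, $\mathfrak{L}$, $\phi(\eta(\mathfrak{l}))$ all remain elementary recursive --- but it is considerably more effort than needed, and note that the analysis at the end of Subsection~\ref{stf} produces a \emph{space} bound, not the explicit elementary recursive \emph{time} bound promised in the ``furthermore'' clause; you would still owe a (short but not entirely automatic) argument converting the explicit space bound into an explicit time bound, e.g.\ via a Lemma~\ref{m29a}-style configuration-counting step, or else explicitly invoke the time/space equivalence stated in Section~\ref{ssel}. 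For completeness you also silently use the $\arfive$-style space-based $\mathbb{L}$ and $\chi'$ rather than the paper's choice of keeping the time-based $\mathbb{L}$ of \cite{cla4}; both work, but it is worth flagging that you are mixing templates.
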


\begin{proof} The soundness (``if'') part of this theorem will be proven in Section \ref{esectsound}, and the completeness (``only if'') part in Section 
\ref{esectcompl}.\vspace{-7pt}
\end{proof}

\section{The soundness of $\arsix$}\label{esectsound}
As in Section \ref{sectsound}, here we will limit ourselves to proving the first, main part (of the soundness part) of Theorem \ref{ett1}; the ``furthermore''  clause of the theorem can be taken care of in the same way as in the similar proof for $\arfour$ given in \cite{cla4}.
  
 Consider any $\arsix$-provable sentence $X$. We proceed by induction on its proof. 

Assume $X$ is an axiom of $\arsix$. As in the cases of $\arfour$ and $\arfive$, if $X$ is one of Peano axioms, then it is a true  elementary sentence and therefore is won by a machine that makes no moves.  
And if $X$ is  $\ada x\ade y(y= x\successor)$, then it is won by a machine that (for the constant $x$ chosen by Environment for the variable $x$)    computes  the value $a$ of $x+ 1$,  makes the move $a$  and retires in a moveless infinite loop.  

Next, suppose $X$ is obtained from premises $X_1,\ldots,X_n$ by LC. By the induction hypothesis, for each $i\in\{1,\ldots,n\}$, we already have a solution (HPM) ${\cal N}_i$ of $X_i$ together with an explicit elementary recursive bound  $\xi_i$ for the time complexity of ${\cal N}_i$.  We can think of each such HPM ${\cal N}_i$ as an $n$-ary GHPM that ignores its inputs. Then, in view of  clause 2 of Theorem 10.1 of \cite{cla4}, we can (efficiently) construct a solution ${\cal M}(\code{{\cal N}_1},\ldots,\code{{\cal N}_n})$ of $X$, together with an explicit elementary recursive bound $\tau(\xi_1,\ldots,\xi_n)$ for its time complexity.

Finally, assume $X$ is (the $\ada$-closure of) $F(x)$, where $F(x)$ is an exponentially bounded formula, and $X$ is obtained by $\arsix$-Induction on $x$. So, the premises are (the $\ada$-closures of) $F(0)$ and $F(x)\mli F(x\successor)$. By the induction hypothesis, there are HPMs ${\cal N}$ and ${\cal K}$ --- with certain explicit elementary recursive bounds $\xi,\zeta$ for their time complexities, respectively --- that solve these two premises, respectively.  Fix them.  We want to construct a solution $\cal M$ of $F(x)$.

As we did in Section 13 of \cite{cla4} or in Section \ref{sectsound} of the present paper, we replace  ${\cal N}$ and  ${\cal K}$   by their ``{reasonable counterparts}'' ${\cal N}'$ and ${\cal K}'$ and corresponding  explicit elementary recursive bounds $\xi',\zeta'$ for their time complexities. Of course, the meaning of  ``reasonable'' is correspondingly redefined now. Namely, now a move's being unreasonable means that the size of the constant chosen by a player for a variable bound by a bounded choice quantifier violates the conditions imposed on it by the exponential (rather than polynomial as before) sizebound for that variable. For simplicity, as in Section \ref{sectsound}, we replace $\xi',\zeta'$ by the common elementary recursive bound $\phi=\xi'+\zeta'$ for the time complexities of both ${\cal N}'$ and ${\cal K}'$.

For further simplicity considerations, 
as  in (near the end of) Section \ref{sectsound}, we  assume that 
 the environment of $\cal M$  never makes illegal moves, for otherwise $\cal M$ easily detects illegal behavior and, being an automatic winner, retires in an moveless infinite loop. We further assume that the environment of $\cal M$, just like ${\cal N}'$ and ${\cal K}'$, plays reasonably. This will not affect the outcome of the game in $\pp$'s ($\cal M$'s) favor, as $\oo$'s unreasonable moves always result in the corresponding subgame's being lost by $\oo$, anyway. From our description of $\cal M$ it will be clear that, as long as Environment plays legally and reasonably, so does $\cal M$, because all it does is copycatting moves by Environment in the real play and moves by ${\cal N}'$ and ${\cal K}'$ in a series of imaginary (simulated) plays. For the same reason, the imaginary adversaries of the simulated ${\cal N}'$ and ${\cal K}'$ will also play legally and reasonably. To summarize, we (safely) assume that  
 all --- real or imaginary --- machines that we consider, as well as their --- real or imaginary --- adversaries, play legally and reasonably.

To describe $\cal M$, as in Section \ref{sectsound}, assume $x,\vec{v}$ are exactly the free variables of $F(x)$, so that   $F(x)$ can be rewritten as $F(x,\vec{v})$. 
At the beginning,   $\cal M$  waits for Environment to choose constants for the free variables of $F(x,\vec{v})$.   Assume $k$ is the  constant chosen for the variable $x$, and $\vec{c}$ are the constants chosen for $\vec{v}$. From now on, we shall write $F'(x)$ for  $F(x,\vec{c})$. Further, as in Section \ref{sectsound}, we shall write ${\cal H}_{0}$ for the ``machine'' that works just like ${\cal N}'$ does in the scenario where the adversary, at the beginning of the play, has chosen the constants $\vec{c}$ for the variables $\vec{v}$. So, ${\cal H}_{0}$ wins the  game $F'(0)$. Similarly, for any $n\geq 1$, we will write 
${\cal H}_{n}$ for the ``machine'' that works just like ${\cal K}'$ does in the scenario where the adversary, at the beginning of the play, has chosen the constants $\vec{c}$ for the variables $\vec{v}$ and the constant $n- 1$ for the variable $x$. So, ${\cal H}_{n}$ wins the  game $F'(n- 1)\mli F'(n)$. Similarly, we will write ${\cal M}_k$ for the ``machine'' that works just like ${\cal M}$ does after the above event of Environment's having chosen $k$ and $\vec{c}$ for $x$ and $\vec{v}$, respectively. So, in order to complete our description of $\cal M$, it will suffice to simply define ${\cal M}_k$ and say that, after Environment has chosen constants for all free variables of $F(x)$, $\cal M$ continues playing as  ${\cal M}_k$.

The work of ${\cal M}_k$ consists in continuously polling its run tape to see if Environment has made any new moves, combined with simulating, in parallel, one play of $ F'(0) $  by ${\cal H}_0$ and --- for each $n\in\{1,\ldots,k\}$ --- 
one play  of $ F'(n- 1)\mli F'(n) $ by ${\cal H}_n$. 
In this mixture of one real  and $k+ 1$ imaginary plays,   
$\cal M$  synchronizes $k+ 1$ pairs of (sub)games, real or imaginary. Namely:

\begin{itemize}
\item It synchronizes  the consequent of the imaginary play of $F'(k-1)\mli F'(k)$ by ${\cal H}_k$ with the real play of $F'(k)$.  
\item For each $n\in\{1,\ldots,k- 1\}$,  it synchronizes the consequent of the imaginary play of  $F'(n-1)\mli F'(n)$ by ${\cal H}_n$ with the antecedent of the 
 imaginary play of $F'(n)\mli F'(n+ 1)$ by ${\cal H}_{n+ 1}$. 
\item It synchronizes  the  imaginary play of $F'(0)$  by ${\cal H}_0$ with the antecedent of the imaginary play of $F'(0)\mli F'(1)$ by ${\cal H}_1$. 
\end{itemize}

This completes our description of ${\cal M}_k$ and hence of $\cal M$.  
Remembering our assumption that (${\cal N},{\cal K}$ and hence) ${\cal N}',{\cal K}'$ win the corresponding games, it is obvious  that ${\cal M}_k$ wins $F'(k)$ and hence $\cal M$ wins $\ada\bigl(F(x)\bigr)$, as desired. It now remains to show that the time complexity of $\cal M$ is also as desired.

For the rest of this proof, pick and fix an arbitrary play (computation branch) of $\cal M$, and an arbitrary clock cycle $\mathfrak{c}$ on which $\cal M$ makes a move $\alpha$ in the real play of $F(x)$. Let $\hbar$ and $\ell$ be the timecost and the background of this move, respectively. Let 
$k$, $F'(x)$, ${\cal H}_0,\ldots,{\cal H}_k$, ${\cal M}_k$  be as in the description of the work of $\cal M$. Note that $\ell$ is not smaller than the size of the greatest of the constants chosen by Environment for the free variables of $F(x)$. Remembering that all players that we consider play legally and reasonably,  one can easily write an explicit elementary recursive function $\eta(\ell)$ (fix it) such that we have:
\begin{equation}\label{m12a}
\mbox{\em The sizes of no moves ever made by ${\cal M}_k$ or the simulated ${\cal H}_n$ ($0\leq n\leq k$) exceed $\eta(\ell)$.}
\end{equation}   
 For instance, if $F(x)$ is 
\(\ade u\bigl(|u|\leq x\times z\mlc\ada v(|v|\leq u+ x\mli G)\bigl)\) where $G$ is elementary, then $\eta(\ell)$ can be taken to be $\mbox{\em EXP}(\ell)\times
\mbox{\em EXP}(\ell)+ \mbox{\em EXP}(\ell)+ 0\successor\successor\successor\successor$.

The polling, simulation and copycat performed by ${\cal M}_k$ do impose some time overhead, but the latter, as in Section 13 of \cite{cla4}, can be safely  ignored.  We will further pretend that the polling and the several simulations happen in a truly parallel fashion, in the sense that ${\cal M}_k$ spends a single clock cycle on tracing a single computation step of  all $k+ 1$ machines simultaneously, as well as on checking out its run tape to see if Environment has made a new move. If so, the rest of our argument is almost literally the same as in Section 13 of \cite{cla4}. Namely:

Let $\beta_1,\ldots,\beta_m$ be the moves by simulated machines that ${\cal M}_k$ detects by time $\mathfrak{c}$, arranged according to the times $t_1\leq \ldots\leq t_m$ of their detections (which, by our simplifying assumptions, coincide with the timestamps of those moves in the corresponding simulated plays).      
Let   $d=\mathfrak{c}- \hbar$. Let $j$ be the smallest integer among $1,\ldots,m$ such that $t_j\geq d$. Since each simulated machine  runs in time $\phi$, in view of (\ref{m12a}) it is clear that $t_j- d$ does not exceed $\phi\bigl(\eta(\ell)\bigr)$. Nor does $t_{i}- t_{i- 1}$ for any $i$ with 
$j< i\leq m$. Hence $t_m- d\leq  (m- j+ 1)\times \phi\bigl(\eta(\ell)\bigr)$. Since $m,j\geq 1$, let us be generous and simply say that $t_m- d\leq  m\times \phi\bigl(\eta(\ell)\bigr)$. But notice that $\beta_m$ is a move made by ${\cal H}_k$ in the consequent of $F'(k- 1)\mli F'(k)$, immediately (by our simplifying assumptions) copied by ${\cal M}_k$ in the real play when it made its move $\alpha$. In other words, $\mathfrak{c}= t_m$. And $\mathfrak{c}- d= \hbar$. So, $\hbar$ does not exceed $m\times  \phi\bigl(\eta(\ell)\bigr)$.   And, by (\ref{m12a}), the size of $\alpha$ does not exceed $m\times  \phi\bigl(\eta(\ell)\bigr)$, either. But observe that $k\leq 2^\ell$ and that $m$ cannot exceed   $k+ 1$ times the depth  $\mathfrak{d}$ of $F(0)$; therefore,    $m\leq  \mathfrak{d}\times (2^{\ell}+ 1)$. Thus, (as long as we pretend that there is no polling/simulation/copycat overhead) neither the timecost nor the size of $\alpha$ exceed $\mathfrak{d}\times (2^{\ell}+ 1)\times \phi\bigl(\eta(\ell)\bigr)$. 

An upper bound for the above function $\mathfrak{d}\times (2^{\ell}+ 1)\times \phi\bigl(\eta(\ell)\bigr)$, even after ``correcting'' the latter so as to precisely  account for the so far suppressed polling/simulation/copycat overhead, can be expressed 
as an explicit elementary recursive function  $\tau$.   This is exactly the sought explicit elementary recursive bound for the time complexity of $\cal M$.

\section{The extensional completeness of $\arsix$}\label{esectcompl}

We treat $\mbox{\em EXP}(x)$ as a pseudoterm and, when writing ``$z=\mbox{\em EXP}(x)$'' within a formula, it is to be understood as an abbreviation of a standard formula of the language of $\pa$ saying that $z$ equals $2^x$. 
 
\begin{fact}\label{a1c}
{\em $\arsix\vdash  \ade z\bigl(z=\mbox{\em EXP}(x)\bigr)$}.
\end{fact}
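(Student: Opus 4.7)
The plan is to prove, by $\arsix$-Induction on $x$, the strengthened formula
\[F(x)\ =\ \ade z\bigl(|z|\leq x\successor\mlc z=\mbox{\em EXP}(x)\bigr),\]
from which the target $\ada x\ade z\bigl(z=\mbox{\em EXP}(x)\bigr)$ (the $\ada$-closure of the statement of Fact \ref{a1c}) follows by LC by simply discarding the sizebound conjunct. First I would observe that $F(x)$ is indeed exponentially bounded in the sense defined just before Fact \ref{a1a}: its only choice quantifier $\ade z$ has the required shape $\ade z(S(z)\mlc H)$ with $S(z)$ the exponential sizebound $|z|\leq x\successor$, whose only free variable $x$ is not bound by $\cla$ or $\cle$ within $F(x)$ (note that $x\successor$ is a legal $(0,\successor,+,\times)$-combination of the variable $x$). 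So $\arsix$-Induction is applicable to $F(x)$.

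For the basis $F(0)=\ade z\bigl(|z|\leq 0\successor\mlc z=\mbox{\em EXP}(0)\bigr)$, we choose $0\successor$ for $z$; the resulting elementary sentence $|0\successor|\leq 0\successor\mlc 0\successor=\mbox{\em EXP}(0)$ is true by $\pa$, since $\mbox{\em EXP}(0)=1=0\successor$ and $|0\successor|=0\successor$.

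For the inductive step $F(x)\mli F(x\successor)$, we wait until Environment selects a value $a$ for $z$ in the antecedent. We may assume the resulting antecedent $|a|\leq x\successor\mlc a=\mbox{\em EXP}(x)$ is true, else the implication is already won. Using the provable resource $\ada y\ade w(w=y\zero)$ --- this is Fact \ref{onesuc}, which by Fact \ref{a1a} is available in $\arsix$ --- we compute $b=a\zero$ (that is, $b=2a$) and choose $b$ for $z$ in the consequent. The resulting position
\[|a|\leq x\successor\mlc a=\mbox{\em EXP}(x)\ \mli\ |a\zero|\leq x\successor\successor\mlc a\zero=\mbox{\em EXP}(x\successor)\]
is elementary and $\pa$-provable: $\pa\vdash\cla a(|a\zero|\leq |a|+ 1)$ upgrades $|a|\leq x\successor$ to $|a\zero|\leq x\successor\successor$, and $2\cdot 2^x=2^{x\successor}$ upgrades $a=\mbox{\em EXP}(x)$ to $a\zero=\mbox{\em EXP}(x\successor)$.

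No serious obstacle is expected; the only delicate point is the choice of sizebound. A stronger bound such as $|z|\leq x$ would be false of $\mbox{\em EXP}(x)$ itself and so would wreck the basis; a weaker one would fail to be carried through the step $a\mapsto a\zero$. The tight choice $|z|\leq x\successor$ is both correct for $\mbox{\em EXP}(x)$ and of exactly the syntactic shape admitted by the definition of ``exponential sizebound'', which is precisely why Fact \ref{a1c} is provable in $\arsix$ but (presumably) not in the polynomially bounded $\arfive$.
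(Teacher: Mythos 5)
Your proposal is correct and follows essentially the same route as the paper's own proof: the same strengthened formula $\ade z\bigl(|z|\leq x\successor\mlc z=\mbox{\em EXP}(x)\bigr)$ is used for $\arsix$-Induction, with the same handling of the basis (choose $0\successor$) and the inductive step (double the received $a$ to $a\zero$). Your additional remarks verifying exponential boundedness and explicitly citing Fact \ref{onesuc} for the doubling are merely more explicit than the paper's exposition, not a different argument.
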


\begin{proof} Argue in $\arsix$. By $\arsix$-Induction on $x$, we want to prove  $ \ade z\bigl(|z|\leq x\successor \mlc  z=\mbox{\em EXP}(x) \bigr)$, which immediately implies the target $ \ade z\bigl(z=\mbox{\em EXP}(x)\bigr)$ by LC.

The basis  $ \ade z\bigl(|z|\leq 0\successor \mlc  z=\mbox{\em EXP}(0) \bigr)$ is solved by computing the value of $0\successor$ and choosing that value for $z$, which yields the true $ |0\successor|\leq 0\successor \mlc  0\successor =\mbox{\em EXP}(0) $.

To solve the inductive step 
\(\ade z\bigl(|z|\leq x\successor \mlc  z=\mbox{\em EXP}(x)\bigr) \mli \ade z\bigl(|z|\leq x\successor\successor \mlc  z=\mbox{\em EXP}(x\successor)\bigr) ,\)
we wait till Environment chooses a value $a$ for $z$ in the antecedent. Then we compute the value of $a\zero$ and choose that value for $z$ in the consequent, yielding the true 
\(|a|\leq x\successor \mlc  a=\mbox{\em EXP}(x) \mli |a\zero|\leq x\successor\successor \mlc  a\zero=\mbox{\em EXP}(x\successor).\)
\end{proof}

By an {\bf elementary recursive tree-term} we mean a term of the language of $\cltw$ (but not necessarily one of the languages of $\arfour$-$\arsix$) containing no constants other than $0$, and no function symbols other than $\successor$ (unary), $+$ (binary), $\times$ (binary), $\mbox{\em EXP}$ (unary). Every such term represents an elementary recursive function (of the same {\bf arity} as the number of variables in the term) under the standard meaning of its function symbols and $0$. We treat every $n$-ary elementary recursive tree-term $\tau(x_1,\ldots,x_n)$ (can be simply written as $\tau$ instead) as a pseudoterm of the language of $\arsix$ and, when writing ``$z=\tau(x_1,\ldots,x_n)$'', it is to be understood as an abbreviation of a standard formula of   
 $\pa$ saying that $z$ equals the value of $\tau(x_1,\ldots,x_n)$. Such a formula is ``standard'' in the sense that $\pa$ knows the construction of  $\tau$. That is, for instance, if the root   of $\tau(x)$ has the label $+$ and the two tree-terms rooted at the children of the root are $\theta_1(x)$ and $\theta_2(x)$,  then   $\pa\vdash \cla x\bigl(\tau(x)= \theta_1(x)+\theta_2(x)\bigr) $. 

Every explicit elementary recursive function $\tau$ can be translated, in a standard way, into an equivalent (``equivalent'' in the sense of representing the same function) unary elementary recursive tree-term, which we here shall denote by $\tau^\bullet$. This allows us to identify $\tau$ with $\tau^\bullet$ and treat the former, just like the latter, as a pseudoterm. Namely, when writing ``$z=\tau$'' within a formula of the language of $\arsix$, it is to be understood as ``$z=\tau^\bullet$''.  

\begin{fact}\label{a1d}
For any explicit elementary recursive function $\tau$ (not containing $z$),  $\arsix\vdash  \ade z(z=\tau)$.
\end{fact}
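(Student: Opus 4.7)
The plan is to prove this by structural induction on the elementary recursive tree-term $\tau^\bullet$ that represents $\tau$, combining at each step the inductive hypothesis with a ``computability-of-$f$'' lemma for the outermost function symbol $f$ via LC.

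First I would establish the arithmetic toolkit in $\arsix$. The case $f=\successor$ is Axiom 8. The case $f=\mbox{\em EXP}$ is Fact \ref{a1c} (together with its $\ada$-closure). The case $f=+$ is already available: the proof of Fact \ref{onesuc} showed $\arfive\vdash \ada x\ada y\ade z(z=x+y)$, which by Fact \ref{a1a} transfers to $\arsix$. For $f=\times$ I would prove the analogous lemma $\arsix\vdash \ada x\ada y\ade z(z=x\times y)$ by $\arfive$-Induction on $y$ applied to the polynomially bounded formula
\[\ade z\bigl(|z|\leq |x|+|y|\mlc z=x\times y\bigr),\]
with basis ($y=0$) solved by picking $z=0$, and inductive step solved by feeding the IH's witness together with Environment's $x$ into the already-established $\ade w(w=u+v)$ resource to compute $x\times y+x$, which is provably $x\times(y\successor)$ in $\pa$ and fits under the size bound $|x|+|y\successor|$.

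Next I carry out the structural induction. If $\tau^\bullet$ is $0$ or a variable $y$, then $\ade z(z=\tau)$ is won by moving $0$ or $y$ for $z$, since $z=\tau$ becomes a true $\pa$-provable identity. If $\tau^\bullet=f(\sigma_1^\bullet,\ldots,\sigma_k^\bullet)$, the IH yields $\arsix\vdash\ade z_i(z_i=\sigma_i)$ for each $i$, and the toolkit gives the universal closure of $\ade z(z=f(u_1,\ldots,u_k))$. By LC these combine: after chaining the $k$ existential resources to obtain witnesses for the $\sigma_i$, substitute them into the universally quantified resource for $f$ to obtain a witness for $z=f(\sigma_1,\ldots,\sigma_k)$; the $\pa$-provable equational laws (which $\pa$ knows because $\tau^\bullet$ is constructed in a standard way, as stressed before the statement) close the gap between $z=f(\sigma_1,\ldots,\sigma_k)$ and $z=\tau$. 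The requirement that $\tau$ not contain $z$ is used merely to avoid variable capture in writing $\ade z(z=\tau)$.

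The only nontrivial ingredient is the multiplication lemma; the remaining work is purely bookkeeping through LC. I expect the main obstacle to lie there: I must check that $|x\times y|\leq |x|+|y|$ is indeed (a $\pa$-provable instance of) a polynomial sizebound in the sense of Section 11 of \cite{cla4}, and that the inductive step really reduces to one application of the addition resource plus a $\pa$-provable rewriting --- but both are routine once spelled out.
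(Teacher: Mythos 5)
Your proposal is correct and uses the same structural meta-induction on the tree-term $\tau^\bullet$ as the paper's own proof, handling the genuinely new $\mbox{\em EXP}(\theta)$ case in the identical way (induction hypothesis plus Fact \ref{a1c} combined by LC). The one difference is presentational: for the base cases and the $\successor$, $+$, $\times$ cases the paper simply cites the proof of Fact 12.6 of \cite{cla4}, whereas you re-derive them in the present setting --- $\successor$ from Axiom 8, $+$ from the intermediate conclusion $\ada x\ada y\ade z(z=x+y)$ established inside the proof of Fact \ref{onesuc} (transferred via Fact \ref{a1a}), and $\times$ by a fresh $\arfive$-Induction on the polynomially bounded $\ade z\bigl(|z|\leq|x|+|y|\mlc z=x\times y\bigr)$ --- all of which is sound and was unavoidable without access to \cite{cla4}.
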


\begin{proof} As we know, ``$\ade z(z=\tau)$'' means ``$\ade z(z=\sigma)$'', where $\sigma=\tau^\bullet$.  We prove $\arsix\vdash  \ade z(z=\sigma)$ by (meta)induction on the complexity of the elementary recursive tree-term $\sigma$. The base cases and the cases of $\sigma$ being $\theta\successor$, $\theta_1+\theta_2$ or $\theta_1\times \theta_2$ are handled as in the proof of Lemma 12.6 of \cite{cla4}. The case of $\sigma$ being $\mbox{\em EXP}(\theta)$ is also similar. Namely, by the induction hypothesis, $\arsix$ proves $\ade z(z= \theta)$. And, by Fact \ref{a1c}, $\arsix$ also proves $\ade z\bigl(z=\mbox{\em EXP}(x)\bigr)$. These two easily imply the desired $\ade z\bigl(z= \mbox{\em EXP}(\theta)\bigr)$ by LC.
\end{proof}

The rest of this section is devoted to a proof of the extensional completeness of $\arsix$. Our argument here is almost the same as in Section \ref{sectcompl}, which, in turn, as we remember, mainly consisted in showing that all relevant lemmas employed in the similar completeness proof of $\arfour$ given in \cite{cla4} continued to hold in the present case as well.

We pick an arbitrary elementary-recursively-solvable arithmetical problem $A$ and a sentence  $X$ with $A=X^\dagger$. For the rest of this section, we fix  \(X,\)  and fix \({\cal X}\) as an HPM that solves $A$ (and hence $X^\dagger$) in elementary recursive time. Specifically, we assume that $\cal X$ runs in time 
$\chi$, 
where $\chi$, which we also fix for the rest of this section,  is an explicit elementary recursive function.  We also agree that, throughout the rest of this section, ``{\bf formula}'' exclusively means a subformula of $X$, in which some variables may be renamed. 
Our goal is to construct a sentence $\overline{X}$ for which, just like for $X$, we have $A=\overline{X}^\dagger$ and which, perhaps unlike $X$, is  provable in $\arsix$.

Again, remember the sentence $\mathbb{L}$ from Section 14.3 of \cite{cla4}, saying that $\cal X$ does not win $X$ in time $\chi$. We adopt this meaning of $\mathbb{L}$ without any changes (only now $\chi$ is our present $\chi$ rather than that of \cite{cla4}, of course). The overline notation introduced in Section 
14.4 of \cite{cla4} also retains its old meaning without any changes. And the same holds for the single-circle and double-circle notations $E^{\circ}(z,\vec{s})$ and $E^{\circ}_{\circ}(z,\vec{s})$ introduced in Section 14.5 of \cite{cla4}. Since all relevant concepts here are the same as in \cite{cla4}, we have:

\begin{lemma}\label{a1e}
Lemmas 14.2 through 14.5 of \cite{cla4} continue to hold in our present case (with ``$\arsix$'' instead of ``$\arfour$'').  
\end{lemma}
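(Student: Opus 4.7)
The plan is to check each of Lemmas 14.2 through 14.5 of \cite{cla4} separately, observing that the arguments from \cite{cla4} transfer essentially verbatim to our present setting, with only trivial cosmetic adjustments in which ``$\arfour$'' is replaced by ``$\arsix$''. Before launching into the four cases, it is worth noting that this situation is in fact simpler than the corresponding part of the $\arfive$-completeness argument. In Section \ref{sectcompl} we had to modify $E^{\circ}_{\circ}$ by substituting $\chi'$ for $\chi$ (to reflect the space-to-time blow-up inherent to bounded-space computation), which in turn forced a genuinely new proof of the analogue of Lemma 14.4. In the $\arsix$ setting $\chi$ is itself a bound on running time, so no such surgery is needed, and the definitions from \cite{cla4} can be re-used as they stand.

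For Lemma 14.2, the identity $E^\dagger = \overline{E}^\dagger$ is a purely game-theoretic fact about the overline notation and the standard interpretation; it depends neither on the ambient theory nor on the particular choice of $\chi$, so there is literally nothing to redo. Lemma 14.3 claims $\arsix \vdash \mathbb{L} \mli \cla \overline{E}$. The argument from \cite{cla4} is carried out in $\arfour$ using $\arfour$-Induction on polynomially bounded formulas. Rather than recast that induction as an $\arsix$-Induction, the path of least resistance is to simply observe that the $\arfour$-proof still goes through verbatim and then lift the conclusion to $\arsix$ by means of Fact \ref{a1a}.

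Lemma 14.4 is a $\pa$-level statement relating $E^{\circ}_{\circ}(z,\vec{s})$ to $\elz{\overline{E(\vec{s})}}$ for non-critical $E(\vec{s})$. Because we have explicitly retained the original meaning of $E^{\circ}_{\circ}$ from \cite{cla4}, with $\chi$ itself rather than some $\chi'$, the $\pa$-argument given there transfers without a single modification; in particular, the counterfactual ``if $\cal X$ does not halt in $\chi$ steps then $\cal X$ does not run in time $\chi$'' is now immediate, whereas in the $\arfive$-case it had to be routed through Lemma \ref{m31a}. Finally, Lemma 14.5 was deduced in \cite{cla4} from Lemma 14.3 together with some routine $\pa$-level manipulations, and since Lemma 14.3 has just been reaffirmed in the $\arsix$-context, the same derivation yields Lemma 14.5.

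The only mild point of friction is thus Lemma 14.3, which is the sole one whose proof in \cite{cla4} genuinely invokes the induction rule of the ambient theory; but even there, quoting Fact \ref{a1a} lets us avoid re-verifying the induction, so the whole lemma reduces to a bookkeeping check that nothing in the \cite{cla4} arguments depends on $\chi$ being polynomial rather than elementary recursive, or on the ambient theory being $\arfour$ rather than $\arsix$.
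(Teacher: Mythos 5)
Your proof is correct and matches the paper's approach: the paper states Lemma \ref{a1e} without any explicit proof, relying entirely on the preceding observation that the present $\mathbb{L}$, the overline notation, and the circle notations are adopted from \cite{cla4} without any changes, so the proofs of Lemmas 14.2--14.5 carry over verbatim. Your lemma-by-lemma walkthrough faithfully unpacks that implicit reasoning, and your detour through $\arfour$ plus Fact \ref{a1a} for Lemma 14.3 is a valid (and arguably cleaner) way to organize the transfer, though the paper's intended route is to simply rerun the \cite{cla4} argument in $\arsix$ directly.
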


Namely, according to Lemma 14.2 of \cite{cla4}, we have $X^\dagger= \overline{X}^\dagger$. So,  what now remains to do for the completion of our completeness proof is to show that $\arsix\vdash\overline{X}$.

In the subsequent two lemmas, the notational and terminological conventions of Appendix A of \cite{cla4} are adopted without any changes. 

 \begin{lemma}\label{em31b}
Lemmas A.1 through A.7 of \cite{cla4} continue to hold (with ``$\arsix$'' instead of ``$\arfour$'').
\end{lemma}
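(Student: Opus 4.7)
The plan is to handle Lemmas A.1 through A.6 separately from A.7, mirroring the organization used for $\arfive$ in Lemmas \ref{m31b} and \ref{m31c}. For Lemmas A.1 through A.6, I would simply observe that the proofs given in \cite{cla4} are entirely syntactic/combinatorial — they concern the encoding of configurations, the yield operation, deterministic successors, and basic facts about the simulated machine $\cal X$ that live inside $\pa$. Nothing in those arguments depends on whether the underlying complexity bound $\chi$ is polynomial or elementary recursive, nor on the particular form of the induction rule. Consequently, exactly the same proofs go through here, with only a cosmetic change of ``$\arfour$'' to ``$\arsix$'' — this is the same justification used for Lemma \ref{m31b}.

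For Lemma A.7, I would redo the proof of Lemma \ref{m31c} essentially verbatim, replacing $\arfive$-Induction by $\arsix$-Induction. The crucial observation is that the formula used for induction on $r$, namely
\[\mathbb{C}(z)\mli \ade x\bigl(|x|\leq |z|+|r|\mlc \mathbb{A}(z,x,r)\bigr)\add \ade x\bigl(|x|\leq (|z|+|r|)\zero \mlc \mathbb{B}(z,x)\bigr),\]
is already polynomially bounded and hence, \emph{a fortiori}, exponentially bounded, so $\arsix$-Induction applies without any issue. The basis and inductive-step arguments themselves only invoke Axiom 8 (to compute successors) and Lemma A.6 of \cite{cla4} (to access the deterministic successor of a configuration, whose size grows at most by a constant factor), both of which are available in $\arsix$. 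So one actually obtains the strengthened form $\arsix\vdash \mathbb{C}\mli \mathbb{A}'(z,r)\add \ade x\mathbb{B}(z,x)$, which is what will in fact be needed in the subsequent adaptation of Lemma 14.6 of \cite{cla4} (and which in particular implies the original form of Lemma A.7 by LC).

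Note that one cannot simply cite Lemma \ref{m31c} together with Fact \ref{a1a} to conclude A.7 in $\arsix$, because the formulas $\mathbb{A}'$, $\mathbb{B}$, $\mathbb{C}$ now refer to the elementary recursive $\chi'(z)$ of the $\arsix$ context rather than the polynomial $\chi'(z)$ of the $\arfive$ context, and so are strictly speaking different formulas in the two settings. This is the one place that warrants care, but it is not a real obstacle: neither the statement nor the proof of A.7 depends quantitatively on $\chi'$ — only on its existence as a term (guaranteed by Fact \ref{a1d}, since $\chi'$ is elementary recursive) and on the structural properties of configurations. Thus the whole argument reduces to verifying that every ingredient invoked in the $\arfive$ proof (Axiom 8, Lemmas A.1 through A.6, and $\arsix$-Induction applied to a polynomially bounded formula) remains available in the present context, which it does.
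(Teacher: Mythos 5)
Your handling of Lemmas A.1--A.6 matches the paper exactly: the proofs are combinatorial arguments about configurations and are insensitive to the complexity class, so they transfer verbatim. For A.7, however, you depart from the paper in a way that, while not wrong, is driven by a misreading of the $\arsix$ setup. You propose to re-derive the \emph{strengthened} form of A.7 from Lemma~\ref{m31c} (the $\arfive$ version, using a polynomially bounded induction formula and $\arsix$-Induction), and you justify this by saying the strengthened form ``is what will in fact be needed in the subsequent adaptation of Lemma 14.6'' and that $\mathbb{A}'$, $\mathbb{B}$, $\mathbb{C}$ ``now refer to the elementary recursive $\chi'(z)$ of the $\arsix$ context.'' Both of these justifications are off the mark.

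There is no $\chi'$ in the $\arsix$ completeness argument. Unlike Section~\ref{sectcompl}, Section~\ref{esectcompl} does \emph{not} redefine $\mathbb{L}$ to be about space, does \emph{not} introduce a space-to-time bound $\chi'$, and does \emph{not} modify $E^{\circ}_{\circ}$: the paper explicitly states that $\mathbb{L}$, the overline notation, and both circle notations retain their meanings from \cite{cla4} unchanged, with only $\chi$ itself now an elementary recursive rather than polynomial time bound. As a result, the proof of Lemma 14.6(b) for $\arsix$ (Lemma~\ref{em2a}) is carried out exactly as in Appendix A.3 of \cite{cla4}, invoking A.7 in its \emph{original} form plus Fact~\ref{a1d} in place of Fact~12.6 -- the strengthened form of Lemma~\ref{m31c} is never needed. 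So the paper's one-line proof of Lemma~\ref{em31b} is genuinely adequate: since nothing about $\mathbb{L}$, $E^{\circ}_{\circ}$, or the appendix formulas changed, the \cite{cla4} proofs of A.1--A.7 (including A.7 with its original $\arfour$-Induction, available in $\arsix$ via Facts~\ref{m24b} and \ref{a1a}) go through as written. Your detour through the strengthened A.7 does prove the statement -- the induction formula you cite is polynomially hence exponentially bounded, and the basis/step arguments work -- but you should understand that it is a workaround for a problem ($\chi'$ vs.~$\chi$) that does not exist in the $\arsix$ setting; it was only $\arfive$'s space/time mismatch that forced the introduction of $\chi'$ and the accompanying strengthening of A.7.
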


\begin{proof} The proofs of those lemmas given in \cite{cla4} go through without any changes, as the differences between our present context and the context of \cite{cla4} are fully irrelevant to them.  
\end{proof}

\begin{lemma}\label{em2a}
Lemma 14.6 of \cite{cla4} continues to hold. That is: 

Assume $E(\vec{s})$ is a formula all of whose free variables are among $\vec{s}$, and $y$ is a variable not occurring in $E(\vec{s})$. Then:      

(a) For every $(\oo,y)$-development $H_{i}(y,\vec{s})$ of $E(\vec{s})$,  $\arsix$ proves $E^{\circ}_{\circ}  (z,\vec{s})  \mli \ade u H_{i}^{\circ}(u,y,\vec{s})\).

(b) Where $H_1(y,\vec{s}),\ldots,H_n(y,\vec{s})$ are all of the $(\pp,y)$-developments of $E(\vec{s})$, $\arsix$ proves
\[E^{\circ}  (z,\vec{s})  \mli E^{\circ}_{\circ}  (z,\vec{s})\add \mathbb{L}\add \ade u\ade y H_{1}^{\circ}(u,y,\vec{s})\add\ldots\add\ade u\ade y H_{n}^{\circ}(u,y,\vec{s})  .\]
\end{lemma}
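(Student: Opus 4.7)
My plan is to follow the proof of Lemma \ref{m2a} (the $\arfive$ analog) essentially verbatim, modifying only the point at which a concrete numerical bound is materialized inside the theory. Clause (a) of the present lemma, just as in the $\arfive$ case, rests entirely on Lemmas A.4 and A.6 of \cite{cla4}; both of these continue to hold in our present setting by Lemma \ref{em31b}, so no further work is needed there.

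For clause (b), I would argue inside $\arsix$ as follows. Assume $E^{\circ}(z,\vec{s})$; this immediately gives $\mathbb{C}(z)$. The next step is to materialize the value of the bound $\chi$ that appears in the definition of $E^\circ_\circ$. Note that, in contrast to Section \ref{sectcompl} where the space-complexity setting of $\arfive$ forced the introduction of a substitute $\chi'$, Section \ref{esectcompl} retains the single- and double-circle notations of \cite{cla4} without modification, so the relevant bound here is simply $\chi$ itself. Since $\chi$ is an explicit elementary recursive function fixed at the start of Section \ref{esectcompl}, Fact \ref{a1d} supplies, inside $\arsix$, the numerical value of $\chi$ at the required argument. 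Feeding that value as the parameter $r$ into Lemma A.7 of \cite{cla4} (which continues to hold in our setting by Lemma \ref{em31b}) produces the resource $\mathbb{A}'\bigl(z,\chi(z)\bigr)\add\ade x\,\mathbb{B}(z,x)$.

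I would then branch on the $\add$-disjunct selected by Environment. If the left disjunct $\mathbb{A}'\bigl(z,\chi(z)\bigr)$ is selected, its truth is exactly what $E^{\circ}_{\circ}(z,\vec{s})$ asserts --- namely, that throughout the full $\chi(z)$-step window beginning at $z$ the machine makes no move --- so I would select $E^{\circ}_{\circ}(z,\vec{s})$ in the consequent and win. If the right disjunct $\ade x\,\mathbb{B}(z,x)$ is selected, I would repeat verbatim the corresponding portion of Section A.3 of \cite{cla4}: every ancillary lemma invoked there is covered by Lemma \ref{em31b}, so no adjustment to that part of the argument is required.

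The only real subtlety is confirming that Fact \ref{a1d} genuinely applies to the particular expression $\chi(z)$ demanded by the $E^\circ_\circ$ definition of \cite{cla4}, and that the overall argument inside $\arsix$ respects the exponentially-bounded-induction restriction on inductions invoked along the way. Since $\chi$ is by assumption explicit elementary recursive, Fact \ref{a1d} applies directly; and because all inductive reasoning is inherited through Lemma A.7 of \cite{cla4}, which is already asserted to hold in $\arsix$ by Lemma \ref{em31b}, the bounded-induction restriction poses no obstacle. I thus expect no genuine difficulty beyond the bookkeeping of substituting the elementary recursive $\chi$ (with Fact \ref{a1d}) in place of the polynomial $\chi'$ (with Fact 12.6 of \cite{cla4}) used in the $\arfive$ proof.
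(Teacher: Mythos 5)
Your proposal is correct and takes essentially the same approach as the paper's. The paper's own proof is even terser: it simply notes that the Section~A.3 proof of \cite{cla4} relies only on Lemmas A.3, A.5, A.6, A.7 and Fact~12.6 of \cite{cla4}, that the A-lemmas carry over by Lemma \ref{em31b}, and that Fact~12.6 is replaced by Fact \ref{a1d} --- precisely the two substitutions you identify (your added branching detail, modeled after the $\arfive$ proof of Lemma \ref{m2a}, merely unfolds what ``the rest of the proof goes through'' means, correctly observing that no analogue of the strengthened Lemma \ref{m31c} is needed here since $E^{\circ}_{\circ}$ retains the original $\chi$ of \cite{cla4} rather than a substitute $\chi'$).
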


\begin{proof} The proof of clause (a) of Lemma 14.6, given in Section A.2 of \cite{cla4}, only relies on Lemmas A.4 and A.6 of \cite{cla4}, which (by our Lemma \ref{em31b}), continue to hold in the present case. So, clause (a) of our lemma is taken care of. 

The proof of clause (b) of Lemma 14.6, given in Section A.3 of \cite{cla4}, only relies on Lemmas A.3, A.5, A.6, A.7 and Fact 12.6 of \cite{cla4}. By our Lemma \ref{em31b}, those lemmas   continue to hold in the present case. And Fact 12.6 of \cite{cla4}  we now  replace by Fact \ref{a1d} of the present paper. With this adjustment, the rest of this proof is virtually the same as the proof given in Section A.3 of \cite{cla4}. So, clause (b) of our lemma is also taken care of. 
\end{proof}

\begin{lemma}\label{em2c}
Lemma 14.7 of \cite{cla4} continues to hold. That is: 

Assume $E(\vec{s})$ is a formula all of whose free variables are among $\vec{s}$. 
Then $\arsix$ proves $E^{\circ}  (z,\vec{s})  \mli \overline{E(\vec{s})}$.  
\end{lemma}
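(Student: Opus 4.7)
The plan is to essentially transfer the proof of Lemma 14.7 of \cite{cla4} verbatim, checking that every dependency it invokes has a valid counterpart in the present $\arsix$ setting. The structure of the original argument proceeds by induction on the complexity of $E(\vec{s})$, using the disjunctive decomposition $E^{\circ}(z,\vec{s}) \mli E^{\circ}_{\circ}(z,\vec{s}) \add \mathbb{L} \add \bigadd_i \ade u \ade y H^{\circ}_i(u,y,\vec{s})$ (clause (b) of Lemma 14.6), and then handling each disjunct: the $E^{\circ}_{\circ}$ branch via the elementarization/critical-vs.-noncritical dichotomy (Lemmas 14.4 and 14.5), the $\mathbb{L}$ branch via Lemma 14.3, and each $\ade u \ade y H^{\circ}_i$ branch via the induction hypothesis applied to the developments $H_i$.

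Accordingly, I would first explicitly record the replacements: the role formerly played by Lemma 14.3 of \cite{cla4} is now discharged by Lemma \ref{a1e} (specifically its coverage of 14.3); Lemmas 14.4 and 14.5 of \cite{cla4} are likewise supplied by Lemma \ref{a1e}; and Lemma 14.6 of \cite{cla4} is replaced by our Lemma \ref{em2a}. With these substitutions in hand, every inferential step of the original inductive argument -- the $\ada$-closing, the copycat handling between $E^{\circ}_{\circ}$ and $\overline{E(\vec{s})}$, and the LC combinations for the development cases -- can be reproduced verbatim.

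The bulk of the write-up is then simply a citation-style justification rather than fresh technical work. I would state the lemma, argue by induction on $E(\vec{s})$, and at each step invoke the corresponding replacement lemma. Since none of the proof steps in \cite{cla4} use any feature of $\arfour$ beyond what Lemma \ref{a1e} and Lemma \ref{em2a} already guarantee for $\arsix$, no genuine modifications are needed. The only substantive point to confirm along the way is that no auxiliary fact tacitly used in \cite{cla4} (for example, the existential-value facts used to instantiate $\chi$-related terms) fails in the $\arsix$ setting; Fact \ref{a1d} covers exactly this, subsuming the role that Fact 12.6 of \cite{cla4} played.

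The main obstacle, if any, is purely bookkeeping: one must verify that the induction-hypothesis invocations in the original proof are legitimate in $\arsix$, which reduces to checking that all cited lemmas have been upgraded. Since the upgrades are already in place, I anticipate no genuine difficulty. The final proof will read, in essence: ``The proof of Lemma 14.7 of \cite{cla4} goes through here without changes, taking into account that Lemmas 14.3, 14.4, 14.5 and 14.6 of \cite{cla4} continue to hold in the present case according to Lemmas \ref{a1e} and \ref{em2a}.''
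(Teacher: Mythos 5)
Your proposal is correct and takes exactly the same approach as the paper: cite that Lemmas 14.3--14.6 of \cite{cla4}, on which the original proof of Lemma 14.7 rests, continue to hold in $\arsix$ via Lemmas \ref{a1e} and \ref{em2a}, and then transfer the proof verbatim. Your additional remark about Fact \ref{a1d} subsuming Fact 12.6 of \cite{cla4} is accurate, though that substitution is already absorbed into the proof of Lemma \ref{em2a} rather than being a separate dependency of the present lemma.
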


\begin{proof} The proof of Lemma 14.7 of \cite{cla4} goes through here without any changes, taking into account that Lemmas 14.3, 14.4, 14.5 and 14.6 of \cite{cla4} on which the latter relies continue to hold in our present case according to Lemmas \ref{a1e} and  \ref{em2a}. 
\end{proof}
 
Now we can claim the target result of this section in exactly the same way as in the last paragraph of Section 14 of \cite{cla4} or the last paragraph of 
Section \ref{sectcompl} of the present paper.

\section{$\arseven$,  a theory of primitive recursive computability}\label{psel}

The language of  $\arseven$ is the same as those of  $\arfive,\arsix$, and so are its axioms and the logical rule LC. In addition, just like $\arfive$ and $\arsix$, theory $\arseven$ has a single nonlogical rule, which  we call {\bf $\arseven$-Induction}:
\[\frac{\ada \bigl(F(0)\bigr)\hspace{30pt} \ada\bigl( F(x)\mli F(x\successor)\bigr)}{\ada \bigl(F(x)\bigr)},\]
where $F(x)$ is any formula.

Thus, the only difference between $\arseven$ and $\arfive$ or $\arsix$ is that, while the induction rules of the latter require the formula $F(x)$ to be polynomially or exponentially bounded, 
the induction rule of the former imposes  no restrictions on $F(x)$ at all.

\begin{fact}\label{pa1a}
Every $\arsix$-provable (and hence also every $\arfour$-provable and every $\arfive$-provable) sentence is provable in $\arseven$.
\end{fact}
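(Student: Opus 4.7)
The plan is to observe that $\arseven$ is an extension of $\arsix$ in a very direct sense: the two theories share the same language, the same nonlogical axioms (the Peano axioms together with $\ada x\ade y(y= x\successor)$), and the same logical inference rule LC. The only difference between them lies in the induction rule, where $\arseven$-Induction imposes no syntactic restriction on $F(x)$, while $\arsix$-Induction requires $F(x)$ to be exponentially bounded.

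First I would argue by induction on the length of an arbitrary $\arsix$-proof of a given sentence that every $\arsix$-provable sentence is $\arseven$-provable. The axiom cases and the LC step are immediate, because the axioms and LC coincide in the two theories. The only remaining case is that of $\arsix$-Induction: suppose the last inference is an application of this rule producing $\ada\bigl(F(x)\bigr)$ from $\ada\bigl(F(0)\bigr)$ and $\ada\bigl(F(x)\mli F(x\successor)\bigr)$, where $F(x)$ is exponentially bounded. By the induction hypothesis, $\arseven$ already proves both premises. Since $\arseven$-Induction permits $F(x)$ to be \emph{any} formula, the exponentially bounded $F(x)$ qualifies in particular, and applying $\arseven$-Induction to these two premises yields the conclusion $\ada\bigl(F(x)\bigr)$ in $\arseven$. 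Hence every instance of $\arsix$-Induction is, verbatim, a valid instance of $\arseven$-Induction, and the induction goes through.

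For the parenthetical part of the claim, I would chain the previous inclusions: by Fact \ref{m24b}, every $\arfour$-provable sentence is $\arfive$-provable; by Fact \ref{a1a}, every $\arfive$-provable sentence is $\arsix$-provable; and by the argument above, every $\arsix$-provable sentence is $\arseven$-provable. Composing these yields the full statement.

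There is essentially no real obstacle here, and I would not expect any technical difficulty — the result is built into the design of the three systems $\arfive, \arsix, \arseven$, whose induction rules differ only by a monotone relaxation of the bound on the induction formula while the rest of the axiomatizations coincide. The ``proof'' is therefore a one-paragraph observation rather than a substantive argument, and the only thing worth being slightly careful about is confirming that the language, the non-induction axioms and the LC rule really do agree across the three theories, which the paper has explicitly stated.
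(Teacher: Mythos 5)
Your proof is correct and follows the same approach as the paper, which simply observes that $\arsix$-Induction is a special case of $\arseven$-Induction since the two theories share the same language, axioms, and LC rule. The induction on proof length you spell out is exactly the (implicit) content of the paper's one-line argument.
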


\begin{proof} This is straightforward, as $\arsix$-Induction is a special case of $\arseven$-Induction.
\end{proof}

Let $f$ be a function letter of the language of $\cltw$ of indicated (by the number of explicitly shown arguments) arity. 

An {\bf absolute primitive recursive definition} of $f$ is a $\cltw$-formula of one of the following forms:

\begin{quote}
\begin{description}
\item[(I)] $\cla x \bigl(f(x)= x'\bigr)$.
\item[(II)] $\cla x_1\ldots\cla x_n \bigl(f(x_1,\ldots,x_n)=0\bigr)$.
\item[(III)] $\cla x_1\ldots\cla x_n \bigl(f(x_1,\ldots,x_n)= x_i\bigr)$ (some $i\in\{1,\ldots,n\}$).
\end{description}
\end{quote}

And a {\bf relative primitive recursive definition} of $f$ is a $\cltw$-formula of one of the following forms:

\begin{quote}
\begin{description}
\item[(IV)] $\cla x_1\ldots\cla x_n \Bigl(f(x_1,\ldots,x_n)= g\bigl(h_1(x_1,\ldots,x_n),\ldots,h_m(x_1,\ldots,x_n)\bigr)\Bigr)$.
\item[(V)] $\begin{array}{l}
\cla x_2\ldots\cla x_n \bigl(f(0,x_2,\ldots,x_n)= g(x_2,\ldots,x_n)\bigr)\ \mlc
\\
\cla x_1 \cla x_2\ldots\cla x_n \Bigl(f(x'_1,x_2,\ldots,x_n)= h\bigl(x_1,f(x_1,x_2,\ldots,x_n),x_2,\ldots,x_n\bigr)\Bigr).
\end{array}$
\end{description}
\end{quote}

We say that (IV) defines $f$ {\bf in terms of} $g,h_1,\ldots,h_m$. Similarly, we say that (V) defines $f$ in terms of $g$ and $h$. 

A {\bf primitive recursive construction} of $f$ is a sequence $E_1,\ldots,E_k$ of $\cltw$-formulas, where each $E_i$ is a primitive recursive definition of some $g_i$, all such $g_i$ are distinct, $g_k= f$ and, for each $i$, $E_i$ is either an absolute primitive recursive definition of $g_i$, or a relative primitive recursive definition of $g_i$ in terms of some $g_j$s with $j< i$.

Terminologically, we will usually identify a primitive recursive construction of a function $f$ with the function $f$ itself. Further, to keep our terminology uniform,  we will be using the words ``{\bf explicit primitive recursive function}'' as a synonym of ``primitive recursive construction of a unary function''.    
When $\tau$ is an explicit primitive recursive function and $\cal M$ is a $\tau$ time (resp. space) machine, we  say that $\tau$ is an {\bf explicit primitive recursive bound} for the time (resp. space) complexity of $\cal M$.

We say that a given HPM $\cal M$ runs in {\bf primitive recursive time} (resp. {\bf space}) iff there is an explicit primitive recursive function $\tau$ such that $\cal M$ runs in time (resp. space) $\tau$. And we say that a given problem has a {\bf primitive recursive solution} iff it has a solution that runs in primitive recursive time. The reason why we omitted the word ``time'' here is that, as in the case of elementary recursiveness,  it is not hard to see  that a problem has a primitive recursive time solution if and only if it has a primitive  recursive space solution. 

\begin{theorem}\label{ptt1}
An arithmetical problem has a primitive recursive solution iff it is provable in $\arseven$. 

Furthermore, there is an efficient procedure that takes an arbitrary extended $\arseven$-proof of an arbitrary sentence $X$ and constructs a   
 solution of $X$ (of $X^\dagger$, that is) together with an explicit primitive recursive bound for its time complexity. 
\end{theorem}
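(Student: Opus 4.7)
The plan is to prove Theorem \ref{ptt1} by splitting it into soundness (``if'') and extensional completeness (``only if'') parts, following the pattern of the proofs of Theorems \ref{tt1} and \ref{ett1}. The key novelty is that $\arseven$-Induction places no syntactic restriction on $F(x)$, so moves in the simulated plays are no longer constrained by sizebound subformulas of $F$; bounds must instead be extracted from the time complexities of the simulating machines themselves.

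For soundness I would induct on the length of the $\arseven$-proof of $X$. The axiom and LC cases transfer verbatim from Section \ref{esectsound}, using that the explicit primitive recursive functions are closed under the composition scheme provided by Theorem 10.1(2) of \cite{cla4}. For the induction rule I would reuse the parallel-simulation construction from Section \ref{esectsound}: after Environment picks $k$ for $x$ and $\vec{c}$ for $\vec{v}$, simultaneously simulate ${\cal H}_0$ on $F'(0)$ and, for $1\leq n\leq k$, ${\cal H}_n$ on $F'(n-1)\mli F'(n)$, synchronizing consecutive consequents with subsequent antecedents and the final consequent with the real play of $F'(k)$. The new ingredient is the time analysis: since $F(x)$ need not be bounded, there is no formula-derived size bound $\eta(\ell)$. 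Instead, the size of any move made by ${\cal H}_n$ is bounded by its time complexity $\phi$ applied to its current input size, which in turn depends on moves received from ${\cal H}_{n-1}$, ${\cal H}_{n+1}$, or (when $n=k$) the real opponent, whose moves are trivially bounded by the background $\ell$. Unrolling this dependency along the chain ${\cal H}_k,\ldots,{\cal H}_0$ yields a bound consisting of an iterated application of $\phi$ of depth at most $k+1$ to a linear function of $\ell$; since $k\leq 2^\ell$ and the map $\ell\mapsto \phi^{2^\ell+1}(\ell)$ is primitive recursive (being obtainable by composing $\mbox{\em EXP}$ with the two-argument primitive recursion $h(0,y)=y$, $h(n\successor,y)=\phi(h(n,y))$), this delivers the desired explicit primitive recursive bound on the time complexity of $\cal M$.

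For completeness the crucial first step is to prove the primitive-recursive analog of Fact \ref{a1d}: for every explicit primitive recursive function $\tau(\vec{x})$, one has $\arseven\vdash\ade z\bigl(z=\tau(\vec{x})\bigr)$. I would prove this by metainduction on the primitive recursive construction of $\tau$. The absolute clauses (I--III) are immediate, the composition clause (IV) is handled by LC from the induction hypotheses applied to $g,h_1,\ldots,h_m$, and the primitive recursion clause (V), where $f(0,\vec{x})=g(\vec{x})$ and $f(y\successor,\vec{x})=h\bigl(y,f(y,\vec{x}),\vec{x}\bigr)$ with $g$ and $h$ already handled, is dispatched by $\arseven$-Induction on $y$ applied to the formula $\ade z\bigl(z=f(y,\vec{x})\bigr)$. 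It is precisely at this point that the absence of any sizebound restriction on the induction formula is essential, since in general no primitive recursive sizebound on $f$ is available a priori. With this fact in hand, the remainder of the argument parallels Section \ref{esectcompl} almost verbatim: fix an HPM $\cal X$ that solves $X^\dagger$ in primitive recursive time $\chi$, redefine $\mathbb{L}$ and $\overline{X}$ using this $\chi$, and verify that the analogs of Lemmas 14.2--14.7 and A.1--A.7 of \cite{cla4} continue to hold with $\arseven$ in place of $\arfour$; their proofs never use the boundedness of $F$ beyond invoking Fact 12.6 of \cite{cla4}, which is now replaced by the primitive-recursive version just established. The conclusion $\arseven\vdash\overline{X}$ then follows from $\pa\vdash X^\circ(\hat{a})$ together with the analog of Lemma 14.7 of \cite{cla4} by LC, and via $X^\dagger=\overline{X}^\dagger$ this gives the required representation of $A$.

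The main obstacle I anticipate is the soundness-side time analysis: one must carefully verify that the iterated-$\phi$ bound produced by unrolling the chain of simulated machines is genuinely primitive recursive as a function of the background, and that the polling, copycat and simulation overheads (performed across exponentially-many-in-$\ell$ parallel simulations) are absorbed into the same bound. Everything else is a routine adaptation of the $\arsix$-arguments, since the class of primitive recursive functions, like the elementary recursive ones, is closed under all the operations (composition, bounded iteration, exponentiation) that the parallel-simulation construction and the completeness translation require.
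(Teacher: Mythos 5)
Your proposal is correct and follows essentially the same route as the paper: the soundness side uses the parallel simulation from the $\arsix$ case with the time analysis replaced by the chained bound $m\times\phi^m(\ell)$ (paper's version: $2^\ell\times\mathfrak{d}\times\phi^{2^\ell\times\mathfrak{d}}(\ell)$), and the completeness side hinges on the primitive-recursive analogue of Fact \ref{a1d}, established by metainduction on the primitive recursive construction with the recursion clause dispatched by unrestricted $\arseven$-Induction, exactly as in the paper's Fact \ref{a2a}.
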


\begin{proof} The soundness (``if'') part of this theorem will be proven in Section \ref{psectsound}, and the completeness (``only if'') part in Section 
\ref{psectcompl}.\vspace{-7pt}
\end{proof}

\section{The soundness of $\arseven$}\label{psectsound}
As in Sections \ref{sectsound} and \ref{esectsound}, we will limit ourselves to proving the pre-``furthermore'' part (of the soundness part) of Theorem \ref{ptt1}. 
 Consider any $\arseven$-provable sentence $X$. We proceed by induction on its proof. 

The case of  $X$ being an axiom is handled in the same way as in the soundness proofs for the previous systems. So is the case of $X$ being obtained by LC. 

For the rest of this section, suppose $X$ is (the $\ada$-closure of) $F(x)$, and $X$ is obtained by $\arseven$-Induction on $x$. So, the premises are (the $\ada$-closures of) $F(0)$ and $F(x)\mli F(x\successor)$. By the induction hypothesis, there are HPMs ${\cal N}$ and ${\cal K}$ --- with certain explicit primitive recursive bounds $\xi,\zeta$ for their time complexities, respectively --- that solve these two premises, respectively.  We replace  $\xi$ and $\zeta$ by one common bound $\phi= \xi+ \zeta$ for the time complexities of both ${\cal N}$ and ${\cal K}$.

As in Section \ref{esectsound},   we will assume that the adversary of the purported solution $\cal M$ of $F(x)$ that we are going to construct  never makes illegal moves.  From our description of $\cal M$ it will be clear that, as long as Environment plays legally,   so does $\cal M$.

To describe  $\cal M$, as before, assume $x,\vec{v}$ are exactly the free variables of $F(x)$, so that   $F(x)$ can be rewritten as $F(x,\vec{v})$. 
At the beginning,   $\cal M$  waits for Environment to choose constants for the free variables of $F(x,\vec{v})$.   Assume $k$, with $k\geq 1$ (the case of $k= 0$ is straightforward),  is the  constant chosen for the variable $x$, and $\vec{c}$ are the constants chosen for $\vec{v}$. From now on, we shall write $F'(x)$ for $F(x,\vec{c})$. Further, as in Sections \ref{sectsound} and \ref{esectsound}, we shall write ${\cal H}_{0}$ for the ``machine'' that works just like ${\cal N}$ does in the scenario where the adversary, at the beginning of the play, has chosen the constants $\vec{c}$ for the variables $\vec{v}$. So, ${\cal H}_{0}$ wins  $F'(0)$. Similarly, for any $n\geq 1$, we will write 
${\cal H}_{n}$ for the ``machine'' that works just like ${\cal K}$ does in the scenario where the adversary, at the beginning of the play, has chosen the constants $\vec{c}$ for the variables $\vec{v}$ and the constant $n- 1$ for the variable $x$. So, ${\cal H}_{n}$ wins  $F'(n- 1)\mli F'(n)$. Similarly, we will write ${\cal M}_k$ for the ``machine'' that works just like ${\cal M}$ does after the above event of Environment's having chosen $k$ and $\vec{c}$ for $x$ and $\vec{v}$, respectively. So, in order to complete our description of $\cal M$, it will suffice to simply define ${\cal M}_k$ and say that, after Environment has chosen constants for all free variables of $F(x)$, $\cal M$ continues playing as  ${\cal M}_k$.

The work of ${\cal M}_k$ consists in continuously polling its run tape to see if Environment has made any new moves, combined with simulating, in parallel, one play of $ F'(0) $  by ${\cal H}_0$ and --- for each $n\in\{1,\ldots,k\}$ --- 
one play  of $ F'(n- 1)\mli F'(n) $ by ${\cal H}_n$. 
In this mixture of one real  and $k+ 1$ imaginary plays,   
$\cal M$  synchronizes $k+ 1$ pairs of (sub)games, real or imaginary. Namely:

\begin{itemize}
\item It synchronizes  the consequent of the imaginary play of $F'(k- 1)\mli F'(k)$ by ${\cal H}_k$ with the real play of $F'(k)$.  
\item For each $n\in\{1,\ldots,k- 1\}$,  it synchronizes the consequent of the imaginary play of  $F'(n- 1)\mli F'(n)$ by ${\cal H}_n$ with the antecedent of the 
 imaginary play of $F'(n)\mli F'(n+ 1)$ by ${\cal H}_{n+ 1}$. 
\item It synchronizes  the  imaginary play of $F'(0)$  by ${\cal H}_0$ with the antecedent of the imaginary play of $F'(0)\mli F'(1)$ by ${\cal H}_1$. 
\end{itemize}

This completes our description of ${\cal M}_k$ and hence of $\cal M$.   
Remembering our assumption that ${\cal N},{\cal K}$ win the corresponding games, it is obvious  that ${\cal M}_k$ wins $F'(k)$ and hence $\cal M$ wins $\ada\bigl(F(x)\bigr)$, as desired. It now remains to show that the time complexity of $\cal M$ is also as desired.

For the rest of this proof, pick and fix an arbitrary play   of $\cal M$, and an arbitrary clock cycle $\mathfrak{c}$ on which $\cal M$ makes a move $\alpha$ in the real play of $F(x)$. Let $\hbar$ and $\ell$ be the timecost and the background of this move, respectively. Let 
$k$, $F'(x)$, ${\cal H}_0,\ldots,{\cal K}_k$, ${\cal M}_k$  be as in the description of the work of $\cal M$. 

As done before in similar proofs, we ignore the polling, simulation and copycat overhead, and also  pretend that the polling and the several simulations happen in a truly parallel fashion, in the sense that $\cal M$ spends a single clock cycle on tracing a single computation step of  all $k+ 1$ machines simultaneously, as well as on checking out its run tape to see if Environment has made a new move.

Let $\beta_1,\ldots,\beta_m$ be the moves by simulated machines  that ${\cal M}_k$ detects by time $\mathfrak{c}$, arranged according to the times $t_1\leq \ldots\leq t_m$ of their detections (which, by our simplifying assumptions, coincide with the timestamps of those moves in the corresponding simulated plays).       
Let   $d=\mathfrak{c}- \hbar$. Let $j$ be the smallest integer among $1,\ldots,m$ such that $t_j\geq d$. Since each simulated machine runs in time $\phi$, it is clear that neither the size of $\beta_j$ nor $t_j- d$ exceed $\phi(\ell)$. For similar reasons, with $\phi(\ell)$ now acting in the role of $\ell$, neither the size of $\beta_{j+ 1}$ nor $t_{j+ 1}- t_j$ exceed $\phi(\phi(\ell))$. Therefore, neither the size of $\beta_{j+ 1}$ nor $t_{j+ 1}- d$ exceed $2\phi(\phi(\ell))$. Similarly, neither the size of $\beta_{j+ 2}$ nor $t_{j+ 2}- d$ exceed $3\phi(\phi(\phi(\ell)))$. And so on. Thus, neither the size of $\beta_{m}$ nor $t_{m}- d$ exceed $(m- j+ 1)\times \phi^{m- j+ 1}(\ell)$ and hence (as $m,j\geq 1$) 
$m\times \phi^{m}(\ell)$, where $\phi^{m}$ means the $m$-fold composition of $\phi$ with itself. Also note that $m$ cannot exceed $2^{\ell}\times \mathfrak{d}$, where $\mathfrak{d}$ is the depth of $F(x)$. We conclude that neither the size of $\beta_m$ nor $\hbar$ exceed $2^{\ell}\times \mathfrak{d}\times \phi^{2^{\ell}\times \mathfrak{d}}(\ell)$.  
But notice that $\beta_m$ is a move made by ${\cal H}_k$ in the consequent of $F'(k-1)\mli F(k)$, immediately (by our simplifying assumptions) copied by ${\cal M}_k$ in the real play when it made its move $\alpha$. In other words, $\mathfrak{c}= t_m$.  Thus, (as long as we pretend that there is no polling/simulation/copycat overhead) neither the timecost nor the size of $\alpha$ exceed  $2^{\ell}\times \mathfrak{d}\times \phi^{2^{\ell}\times \mathfrak{d}}(\ell)$.  

Obviously an upper bound for the above function  $2^{\ell}\times \mathfrak{d}\times \phi^{2^{\ell}\times \mathfrak{d}}(\ell)$, even after ``correcting'' the latter so as to precisely  account for the so far suppressed polling/simulation/copycat overhead, can be expressed 
as an explicit primitive recursive function  $\tau(\ell)$.   This is exactly the sought explicit primitive recursive bound for the time complexity of $\cal M$.

\section{The extensional completeness of $\arseven$}\label{psectcompl}

We treat each  $n$-ary primitive recursive construction  $\tau= \tau(x_1,\ldots,x_n)$ as a pseudoterm and,  when we write ``$z= \tau(x_1,\ldots,x_n)$'' (or just ``$z= \tau$'') within a formula, it is to be understood as an abbreviation of a standard formula of $\pa$ saying that $z$ equals the value of $\tau(x_1,\ldots,x_n)$. Such a formula is ``standard'' in the sense that $\pa$ knows the definition of $\tau$. That is, for instance, if $\tau(x)$ is defined (in its primitive recursive construction) by $\cla x\Bigl(\tau(x)= \theta\bigl(\phi(x)\bigr)\Bigr)$, then   $\pa\vdash \cla x\Bigl(\tau(x)= \theta\bigl(\phi(x)\bigr)\Bigr)$. 

\begin{fact}\label{a2a}
For any explicit primitive recursive function $\tau$ (not containing $z$),  $\arseven\vdash  \ade z(z=\tau)$.
\end{fact}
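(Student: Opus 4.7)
The proof goes by meta-induction on the length $k$ of the primitive recursive construction $E_1,\ldots,E_k$ of $\tau$, establishing the stronger statement that, for every function $g_i$ (of some arity $n_i$) defined in that construction, $\arseven\vdash \ade z\bigl(z= g_i(x_1,\ldots,x_{n_i})\bigr)$. Since $\tau= g_k$ is unary, the target follows as the $k$th instance. Cases (I)--(III) are immediate: for the successor case $g_i(x)= x\successor$ we appeal directly to Axiom 8; for $g_i(\vec{x})= 0$ and $g_i(\vec{x})= x_j$ we simply choose $0$ and $x_j$ respectively for $z$, with the resulting equation being a $\pa$-provable elementary truth.

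For case (IV), composition $g_i(\vec{x})= g_j\bigl(h_1(\vec{x}),\ldots,h_m(\vec{x})\bigr)$, the meta-induction hypothesis supplies $\arseven$-provability of $\ade z\bigl(z= h_l(\vec{x})\bigr)$ for each $l$ and of $\ade z\bigl(z= g_j(\vec{y})\bigr)$. Arguing in $\arseven$, for given $\vec{x}$ we invoke the $h_l$-resources to obtain values $c_1,\ldots,c_m$ of $h_1(\vec{x}),\ldots,h_m(\vec{x})$, then instantiate the $g_j$-resource with $\vec{y}:=\vec{c}$ to obtain a value $d= g_j(\vec{c})$; by LC, together with the $\pa$-provable equation $d= g_i(\vec{x})$, we choose $d$ for $z$ and win.

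The main obstacle, and the only place where the unrestricted form of $\arseven$-Induction is actually used, is case (V): primitive recursion $g_i(0,\vec{x})= g_j(\vec{x})$ and $g_i(y\successor,\vec{x})= h\bigl(y,g_i(y,\vec{x}),\vec{x}\bigr)$. Here the natural induction formula
\[F(y)\ :=\ \ade z\bigl(z= g_i(y,\vec{x})\bigr)\]
is neither polynomially nor exponentially bounded in $y$ (indeed, for Ackermann-style $g_i$ it grows faster than any elementary function), so neither $\arfive$- nor $\arsix$-Induction would apply; this is precisely what forces the passage to $\arseven$. We proceed by $\arseven$-Induction on $y$. For the basis $\ade z\bigl(z= g_i(0,\vec{x})\bigr)$, we use the IH-provided resource $\ade z\bigl(z= g_j(\vec{x})\bigr)$ to obtain a value $a= g_j(\vec{x})$ and choose $a$ for $z$, winning via the $\pa$-provable  $a= g_i(0,\vec{x})$. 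For the inductive step $\ade z\bigl(z= g_i(y,\vec{x})\bigr)\mli \ade z\bigl(z= g_i(y\successor,\vec{x})\bigr)$, we wait for Environment to supply a value $a$ for $z$ in the antecedent (so $a= g_i(y,\vec{x})$), then use the IH-provided resource $\ade z\bigl(z= h(y,w,\vec{x})\bigr)$ with $w:=a$ to obtain $b= h(y,a,\vec{x})$, and choose $b$ for $z$ in the consequent; the $\pa$-provable  $b= g_i(y\successor,\vec{x})$ then closes the game. By LC, $F(x)$ --- hence $\ade z(z=g_i(x,\vec{x}))$ and, for the unary $g_k=\tau$, the desired $\ade z(z=\tau)$ --- follows.
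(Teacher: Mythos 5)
Your proof is correct and takes essentially the same approach as the paper's: both proceed by metainduction on the primitive recursive construction of $\tau$, handling the five definitional cases (I)--(V), with $\arseven$-Induction applied in case (V) to discharge the primitive recursion step. Your added remark explaining why the induction formula in case (V) escapes both the polynomial and exponential sizebound requirements is accurate but not part of the paper's argument.
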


\begin{proof} We generalize the above statement by allowing $\tau$ to be a primitive recursive construction of any (not necessarily unary) function, and prove such a generalized statement  by metainduction on the complexity of (the construction of) $\tau$. This requires considering the five cases I-V from Section \ref{psel}, depending on which of them applies last in the construction of $\tau$.

{\em Case I}: $\tau$ is a function defined by $\cla x\bigl(\tau(x)= x\successor\bigr)$. This sentence is thus provable in $\pa$. By Axiom 8, $\arseven$ also proves $\ada x\ade y(y= x\successor)$. The desired $\ade z\bigl(z= \tau(x)\bigr)$ is an easy logical consequence of these two.  

{\em Case II}: $\tau$ is a function defined by $\cla x_1\ldots\cla x_n\bigl(\tau(x_1,\ldots,x_n)= 0\bigr)$. This sentence is thus provable in $\pa$. The target     $\ade z\bigl(z= \tau(x_1,\ldots,x_n)\bigr)$ is an immediate logical consequence of it.

{\em Case III}: $\tau$ is a function defined by $\cla x_1\ldots\cla x_n\bigl(\tau(x_1,\ldots,x_n)= x_i\bigr)$. Similar to the preceding case. 

{\em Case IV}:  $\tau$ is a function defined by $\cla x_1\ldots\cla x_n \Bigl(\tau(x_1,\ldots,x_n)= \phi\bigl(\psi_1(x_1,\ldots,x_n),\ldots,\psi_m(x_1,\ldots,x_n)\bigr)\Bigr)$. This sentence is thus provable in $\pa$. By the induction hypothesis, $\arseven$ also proves $\ade z\bigl(z= \phi(x_1,\ldots,x_m)\bigr)$ and --- for each $i\in\{1,\ldots,m\}$ --- $\ade z\bigl(z= \psi_i(x_1,\ldots,x_n)\bigr)$. These provabilities can be seen to imply the provability of the target  $\ade z\bigl(z= \tau(x_1,\ldots,x_n)\bigr)$ by LC.

{\em Case V}:  $\tau$ is a function defined by
 \begin{equation}\label{a2b}\begin{array}{l}
\cla x_2\ldots\cla x_n \bigl(\tau(0,x_2,\ldots,x_n)= \theta(x_2,\ldots,x_n)\bigr)\ \mlc
\\
\cla x_1 \cla x_2\ldots\cla x_n \Bigl(\tau({x_1}\successor,x_2,\ldots,x_n)= \phi\bigl(x_1,\tau(x_1,x_2,\ldots,x_n),x_2,\ldots,x_n\bigr)\Bigr),
\end{array}\end{equation}
so that the above sentence is provable in $\pa$. By the induction hypothesis, $\arseven$ also proves both of the following:
\begin{eqnarray}
& \ade z\bigl(z= \theta(x_2,\ldots,x_n)\bigr); & \label{a2c}\\
& \ade z\bigl(z= \phi(x_0,\ldots,x_n)\bigr). & \label{a2d}
\end{eqnarray}

By LC, (\ref{a2b}), (\ref{a2c}) and (\ref{a2d}) can be seen to imply both of the following:

\begin{eqnarray}
& \ade z\bigl(z= \tau(0,x_2,\ldots,x_n)\bigr); & \label{a2e}\\
& \ade z\bigl(z= \tau(x_1,x_2\ldots,x_n)\bigr)\mli \ade z\bigl(z= \tau(x_1\successor,x_2\ldots,x_n)\bigr). & \label{a2f}
\end{eqnarray}

Now, the target $\ade z\bigl(z= \tau(x_1,x_2\ldots,x_n)\bigr)$ follows from (\ref{a2e}) and (\ref{a2f}) by $\arseven$-Induction on $x_1$. \end{proof}

The rest of our completeness proof for $\arseven$ is literally the same as the completeness proof for $\arsix$ found in Section \ref{esectcompl}, with the only difference that now $\chi$ is an explicit primitive recursive (rather than elementary recursive) function; also, where Section \ref{esectcompl} relied on Fact \ref{a1d}, now we  rely on Fact \ref{a2a} instead.

\section{On the intensional strength of $\arfive$, $\arsix$ and $\arseven$}\label{sculprit}
The following theorem is proven in literally the same way as Theorem 16.1 of \cite{cla4}:

\begin{theorem}\label{feb15}
Let $X$ and $\mathbb{L}$ be as in Section \ref{sectcompl} (resp. Section \ref{esectcompl}, resp. Section \ref{psectcompl}).  Then $\arfive$ (resp. $\arsix$, resp. $\arseven$) proves $\gneg \mathbb{L}\mli X$.
\end{theorem}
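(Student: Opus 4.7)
The plan is to combine the completeness proof of the preceding section with a purely logical observation. From Section \ref{sectcompl} (resp.\ Sections \ref{esectcompl}, \ref{psectcompl}) we already have $\arfive \vdash \overline{X}$ (resp.\ $\arsix \vdash \overline{X}$, resp.\ $\arseven \vdash \overline{X}$), where $\overline{X}$ is obtained from $X$ by replacing every politeral $L$ with $L \mld \mathbb{L}$. It therefore suffices to establish, as a CL12-validity derivable by LC, the implication $\gneg \mathbb{L} \mlc \overline{X} \mli X$; combining this with $\overline{X}$ by LC yields the desired $\gneg \mathbb{L} \mli X$.

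The validity of $\gneg \mathbb{L} \mlc \overline{X} \mli X$ is a purely structural fact that does not depend on the content of $X$. Since $\mathbb{L}$ is an elementary (classical) sentence, its truth value is a constant of the play: under the assumption $\gneg \mathbb{L}$, each disjunct $L \mld \mathbb{L}$ of $\overline{X}$ collapses semantically to $L$, because the $\mathbb{L}$ component contributes nothing to $\pp$ and can be waived by $\oo$ (or rather, its $\pp$-side appearance in $\overline{X}$ is false, and its $\oo$-side in $\gneg \overline{X}\mld X$ is true via $\gneg \mathbb{L}$). A winning $\pp$-strategy for $\gneg\mathbb{L}\mlc \overline{X} \mli X$ is then straightforward copycat between $\overline{X}$ and $X$, resolving any $L \mld \mathbb{L}$ subgame  in the antecedent by invoking the classical truth $\gneg \mathbb{L}$. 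By the soundness/completeness of LC with respect to CL12 (Theorem 10.1 of \cite{cla4}), this validity is derivable by the LC rule, hence available in $\arfive$, $\arsix$, and $\arseven$ alike.

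The main point requiring care, though not a genuine obstacle, is checking that the CL12-validity argument goes through uniformly for arbitrary $X$ --- that is, that the copycat strategy properly handles every nesting of choice operators in $X$ while using only the single classical premise $\gneg \mathbb{L}$. Since $\mathbb{L}$ is elementary and hence ``persistently available'' throughout any play, and since the transformation $X \mapsto \overline{X}$ only inserts $\mld \mathbb{L}$ at politerals (which appear only in the elementarization-level leaves of the game tree), the argument is essentially the same as the one given for Theorem 16.1 of \cite{cla4} and transfers here without modification. The only adjustment across the three systems is that $\mathbb{L}$ refers to the resource bound appropriate to each (polynomial space, elementary recursive time, or primitive recursive time), but this difference affects only the meaning of $\mathbb{L}$, not the logical manipulation.
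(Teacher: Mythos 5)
Your proposal is correct and follows the same route as the paper, which simply defers to Theorem 16.1 of \cite{cla4}. The decomposition you use --- start from $\arfive\vdash\overline{X}$ (resp.\ $\arsix$, $\arseven$), then pass from $\overline{X}$ to $\gneg\mathbb{L}\mli X$ by LC, justified by the uniform CL12-validity of $\overline{X}\mli(\gneg\mathbb{L}\mli X)$ via a copycat strategy that exploits the elementarity of $\mathbb{L}$ --- is precisely the argument that Theorem 16.1 of \cite{cla4} runs, and you correctly observe that the only thing changing across the three systems is the content of $\mathbb{L}$, not the logic.
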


So, whatever was said in Section 16 of \cite{cla4} about the import of this theorem, extends to our present systems $\arfive$, $\arsix$ and $\arseven$ as well. This includes 
Theorem 16.2 of \cite{cla4}. To re-state that theorem, we extend the earlier concept of constructive  provability to our present complexity classes. Namely,  we say that $\pa$ {\bf constructively proves} the polynomial space (resp. elementary recursive, resp. primitive recursive) computability of a sentence $X$ iff, for some particular HPM $\cal X$ and some particular explicit polynomial (resp. elementary recursive, resp. primitive recursive) function $\chi$, $\pa$ proves that $\cal X$ is   a $\chi$-space (resp. $\chi$-time or $\chi$-space, resp. $\chi$-time or $\chi$-space)  solution of $X$.  

Then the following theorem is proven in literally the same way as Theorem 16.2 of \cite{cla4}:

\begin{theorem}\label{jan30}
Let $X$ be any sentence  such that $\pa$ constructively proves  the polynomial space (resp. elementary recursive, resp. primitive recursive) computability of $X$. Then $\arfive$ (resp. $\arsix$, resp. $\arseven$) proves $X$. 
\end{theorem}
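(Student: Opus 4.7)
The plan is to combine the conditional provability supplied by Theorem \ref{feb15} with the fact that, under the hypothesis of the theorem, the antecedent $\gneg\mathbb{L}$ of that conditional is itself provable already in $\pa$. Concretely, let $X$ be a sentence such that, say, $\pa$ constructively proves the polynomial space computability of $X$ (the two other cases are entirely parallel). By definition, this means that there are a particular HPM $\cal X$ and a particular explicit polynomial $\chi$ such that $\pa$ proves the sentence ``$\cal X$ is a $\chi$-space solution of $X$''. Take exactly this $\cal X$ and $\chi$ to be the $\cal X$ and $\chi$ that were fixed at the beginning of Section \ref{sectcompl}, and let $\mathbb{L}$ be the sentence of Section \ref{sectcompl} built from them.

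With these choices, the content of $\mathbb{L}$ is the negation of the conjunction of statements that together assert that $\cal X$ is a $\chi$-space solution of $X$ (the three clauses defining $\mathbb{L}$ enumerate the three ways in which $\cal X$ could fail to be such a solution: producing an illegal position, overshooting the space bound $\chi(\ell)$, or generating a legal run whose eventual position is elementarily false). Therefore the constructive PA-proof in our hypothesis is, up to routine arithmetization, a PA-proof of $\gneg\mathbb{L}$. Since $\arfive$ extends $\pa$, we obtain
\[\arfive \vdash \gneg\mathbb{L}.\]
On the other hand, Theorem \ref{feb15} gives us
\[\arfive \vdash \gneg\mathbb{L}\mli X.\]
By LC (essentially modus ponens) these two yield $\arfive \vdash X$, as desired.

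The cases of $\arsix$ and $\arseven$ are handled identically, once $\mathbb{L}$ is taken to be the version introduced in Sections \ref{esectcompl} and \ref{psectcompl} respectively, and the function $\chi$ is taken to be an explicit elementary recursive, resp. primitive recursive, function; the corresponding versions of Theorem \ref{feb15} then supply the needed implication, and PA's constructive verification again furnishes $\gneg\mathbb{L}$.

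The only real subtlety, and hence the main point to check, is that the precise statement arithmetized by $\mathbb{L}$ (with its three clauses about illegal positions, excess spacecost/timecost, and false-elementarization legal runs) matches the statement that PA is assumed to have verified about $\cal X$. This is a bookkeeping matter rather than a genuine obstacle: any sensible arithmetization of ``$\cal X$ wins $X$ in space (resp. time) $\chi$'' is PA-equivalent to $\gneg\mathbb{L}$, so the passage from the hypothesis ``$\pa$ proves that $\cal X$ solves $X$ in $\chi$'' to ``$\pa\vdash \gneg\mathbb{L}$'' is immediate up to PA-provable equivalence of natural formalizations. This is exactly how the corresponding passage was handled in the proof of Theorem 16.2 of \cite{cla4}, and nothing in the move from $\arfour$ to $\arfive$, $\arsix$, $\arseven$ disturbs the argument.
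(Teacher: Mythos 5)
Your proposal is correct and follows essentially the same route as the paper. The paper simply declares that Theorem \ref{jan30} is proven ``literally the same way as Theorem 16.2 of \cite{cla4}'', and that earlier argument is exactly the one you reconstruct: the constructive $\pa$-proof supplies $\pa\vdash\gneg\mathbb{L}$ (for the $\mathbb{L}$ built from the given $\cal X$ and $\chi$), which lifts to $\arfive\vdash\gneg\mathbb{L}$ since $\arfive$ subsumes $\pa$, and modus ponens via LC against Theorem \ref{feb15}'s $\arfive\vdash\gneg\mathbb{L}\mli X$ yields $\arfive\vdash X$, with the $\arsix$ and $\arseven$ cases identical.
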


\end{document}